\newtheorem{theorem}{Theorem}[section]
\newtheorem{lemma}[theorem]{Lemma}
\newtheorem{proposition}[theorem]{Proposition}
\newtheorem{corollary}[theorem]{Corollary}
\newtheorem{claim}[theorem]{Claim}
\newtheorem{obs}[theorem]{Observation}
\newtheorem{definition}[theorem]{Definition}
\newcommand{\N}{{\mathbb N}}
\newcommand{\ignore}[1]{}
\newcommand{\dsum}{\displaystyle\sum}
\newcommand{\expect}{\mathbb{E}}
\newcommand{\eps}{\epsilon}
\newcommand{\mcA}{{\mathcal A}}
\newcommand{\mcLA}{{\mathcal L \mathcal A}}
\newcommand{\mcB}{{\mathcal B}}
\newcommand{\mcP}{{\mathcal P}}
\newcommand{\mcF}{{\mathcal F}}
\newcommand{\mcM}{{\mathcal M}}
\newcommand{\maj}{{\succeq}}
\begin{document}

\date{}

\title{Local Computation Mechanism Design}
\author{Avinatan Hassidim   \thanks{Department of Computer Science, Bar Ilan University. E-mail: {\tt avinatan@cs.biu.ac.il}. This research was supported by supported by ISF grant number 1241/12, and by BSF grant number 2012344. } \and {Yishay Mansour
\thanks{Blavatnik School of Computer Science, 
Tel Aviv University.
 E-mail: {\tt  mansour@tau.ac.il}. This
 research was supported in part by the Google Inter-university Center
 for Electronic Markets and Auctions, the Israeli Centers of Research
 Excellence program, the Israel Science Foundation, the United
 States-Israel Binational Science Foundation, and the Israeli
 Ministry of Science .}}
\and {Shai Vardi
\thanks{Blavatnik School of Computer Science, 
Tel Aviv University.
 E-mail: {\tt  shaivar1@post.tau.ac.il}. This research was supported in part by the Google Europe Fellowship in Game Theory.}
 }}




\maketitle
\begin{abstract}
We introduce the notion of \emph{local computation mechanism
design} - designing game theoretic mechanisms that run in
polylogarithmic time and space.
Local computation mechanisms reply to each query in polylogarithmic time and
space, and the replies to different queries are consistent with the same
global feasible solution.
When the mechanism employs payments, the computation of the payments is also done in
polylogarithmic time and space.
Furthermore, the mechanism needs to maintain incentive compatibility
with respect to the allocation and payments.



We present local computation mechanisms for a variety of classical
game-theoretical problems: (1) stable matching, (2) job scheduling,
(3) combinatorial auctions for unit-demand and $k$-minded bidders,
and (4) the housing allocation problem.

For stable matching, some of our techniques may have 
implications to the global (non-LCA) setting. Specifically, we show that  when the men's preference
lists are bounded, we can achieve an arbitrarily good approximation
to the stable matching within a fixed number of iterations of the
Gale-Shapley algorithm.
\end{abstract}

\section{Introduction}
\label{section:introduction}

Assume that we would like to design an auction for millions of
buyers and items. Alternatively, we have a cloud of hundreds of
thousands of computers on which we would like to schedule several
millions of jobs. In the not-so-distant past, these ideas would have
been unthinkable, but today, technological advances, especially the
Internet, have led us to the point where they are not only possible,
but necessary. One can easily conceive a cloud computation with
thousands of selfish computers, each one wanting to minimize its
work load. Alternatively, an ad-auction for millions of businesses
competing for advertising on millions of websites does not appear to
be a far away dream. In cases like these, the data sets on which we
need to work are so large, that polynomial-time tractability may not
be enough. Sometimes, even computing a solution in linear time may
be infeasible.  Often, however, only parts of the solution to a
problem are required at each point in time. In such cases, we can
use \emph{local computation algorithms} (LCAs).

Local computation algorithms, which were introduced by Rubinfeld et al. \cite{RTVX11}, consider the scenario in
which we need to be able to respond to queries (regarding a feasible
solution) quickly, but we never need the entire solution at once.
For example, in most auctions, this is a reasonable assumption. When queried, we
need to be able to tell each buyer which items she received and how
much to pay; for a given item we need to tell the seller to whom and when
to ship the item. There is no need to calculate the entire allocation and payment at
any specific time or to commit the entire solution to memory. Having
an LCA to such an auction would mean that we can reply to queries in
polylogarithmic time and only require polylogarithmic space.
Furthermore, if all of the items and buyers are queried, combining
the results will give us a complete solution that meets our
requirements.

The field of \emph{algorithmic mechanism design}
is an area at the intersection of economic game theory and algorithm
design, whose objective is to design mechanisms in decentralized
strategic environments. These mechanisms need to take into account
both the algorithmic efficiency considerations and the selfish
behavior of the participating agents.

In this paper we propose {\em local computation mechanism design},
which shares the motivations of both local computation algorithms
and algorithmic mechanism design. Our abstract model is the
following: We have a large data-set and a set of allowable queries.
Our goal is to implement each query locally, with polylogarithmic
time and space, while maintaining the incentives of participants. It
is worthwhile to give a few illustrative examples:
\begin{enumerate}[noitemsep, nolistsep]
\item Consider the problem of assigning doctor interns to hospitals internships, the classical motivation for stable matching.
We would like to be able to compute, for each doctor, her assigned hospital,
without performing the entire global computation.
\item  Consider a large distributed data center that has to assign jobs
to machines and elicits from each machine its speed. When queried on a job, we would like
to reply to which machine it is assigned, and when queried regarding a machine, we would like to reply with the set
of jobs that need to run on it. Again, we would like the computation to be
local, without constructing a global solution,
and still be able to ensure the machines have an incentive to report their speeds truthfully.
\item Consider a large auction (for instance, an ad auction platform). When queried regarding a bidder, we  would like
to compute the items she receives and her payment; when queried on
an item, we would like to compute which bidder won it, and it's price. Again, we would like guarantee that the bidders have
an incentive to report their preferences truthfully.
\end{enumerate}
\medskip

The following are our main contributions.
%
First, we formalize the notion of \emph{local computation mechanism
design}. A mechanism is \emph{local} if, for every query, it
calculates an allocation (and a payment) in polylogarithmic time and
space. Furthermore, the allocation must be consistent with some
global solution, and the payment must ensure truthfulness of the
agents. Second, we present local computation mechanisms for several
interesting problems, where our main result is an LCA for stable matching. Third, we use our techniques to show that 
in the general case when the men's lists
have bounded length (even in cases that do not admit an LCA), we can
find arbitrarily good matchings
(up to both additive and multiplicative constants)
by truncating the Gale-Shapley algorithm to a constant number of
rounds. 

We provide LCAs for the following problems:
\medskip
\paragraph{Stable matching}
In the \emph{stable matching} (or \emph{stable marriage}) problem,
introduced by Gale and Shapley \cite{GS62}, we would like to find a
\emph{stable} perfect matching between a group of $n$ men and a
group of $n$ women\footnote{A blocking pair is a man $m$ and a woman $w$ such that $m$ prefers $w$ to the woman he is matched to, and $w$ prefers $m$ to the 
man she is matched to. A matching is stable if there is no blocking pair.}.
We focus on the model introduced by Immorlica and Mahdian \cite{IM05}, in which the the women can have arbitrary preferences over the men, and the men have preference lists of length $k$ over the women, sampled uniformly at random.

Our main result is a local computation algorithm which matches all but an arbitrarily small fraction of the participants (this is often called an \emph{almost} stable matching; see, e.g., \cite{TL84,LZ03}). 
Furthermore, limited to the matched participants, the matching is stable.

\medskip
\paragraph{Scheduling on related machines}
In the \emph{makespan minimization} problem,  we want to schedule $n$ jobs on $m$
machines so as to minimize the maximal running time (makespan) of
the machines. This problem has many variations; we consider the scenario in which $m$  identical jobs need to be
allocated among $n$ related machines. The machines are strategic agents, whose private
information is their speed. We show:
\begin{enumerate}[noitemsep,nolistsep]
\item A local mechanism that is truthful in expectation for scheduling on related machines, which provides an $O(\log\log{n)}$-approximation to the optimal makespan.
\item A local mechanism that is universally truthful for the restricted case
(i.e., when each job can run on one of at most a constant number of predetermined machines), which provides an $O(\log\log{n})$-approximation to the optimal makespan.
\end{enumerate}

We also show some subtle and surprising results on the truthfulness of our algorithms.

\medskip
\paragraph{Matching combinatorial auctions} \emph{Combinatorial auctions} (CAs) are auctions in which buyers can
bid on bundles of items.
We consider the following scenario: $m$ items are to be auctioned
off to $n$ unit-demand  buyers, where each buyer is interested in a set of at
most $k$ items, sampled uniformly at random. We show universally  truthful local mechanisms for
the following variations, both of which provide a
$1/2$-approximation
to the optimal solution:
\begin{enumerate}[noitemsep,nolistsep]
\item When all buyers have an identical valuation for the items in their sets,
and the buyers' private information is the sets of items they are
interested in.
\item
When the sets are public knowledge, and the buyers' private
information is their valuation.
\end{enumerate}

We also show that there cannot exist an (exact) LCA for
\emph{maximum matching}, and therefore, there cannot exist any local
mechanism which computes the optimal solution.

\medskip
\paragraph{Combinatorial auctions with $k$-single minded bidders}
If each buyer is interested in a set of at most $k$ items, sampled uniformly at random, and has
private valuation for this set, we show a universally truthful local
mechanism which gives a $1/k$-approximation to the optimal
social welfare.

\medskip
\paragraph{Random Serial Dictatorship (RSD)}
We show that in
the housing allocation setting,
in which each agent is interested in a constant number of houses, sampled uniformly at random, the RSD algorithm is implementable as an LCA.


\subsection{Related Work }

\noindent{\em Local Computation Algorithms:} 
Rubinfeld et al. \cite{RTVX11}, showed how to transform distributed algorithms to
LCAs, and gave LCAs for several problems, including maximal
independent set and hypergraph $2$-coloring. 
Alon et al. \cite{ARVX11}, expanded the work of \cite{RTVX11} and gave better
space bounds for maximal independent set and hypergraph
$2$-coloring, using \emph{query trees}. Query trees were introduced
in the local setting by Nguyen and Onak \cite{NO08}: a random
permutation of the vertices is generated, and a sequential algorithm
is simulated on this order. The query tree represents the dependence
of each query on the results of previous queries. \cite{NO08} showed that  if the graph has a bounded degree, the query tree has a constant expected size.  \cite{ARVX11} showed that the query tree has polylogarithmic size with high probalility, and 
that the space required by the algorithm can be reduced by using a random seed to generate the ordering. 
 Mansour et al. \cite{MRVX12}, showed that the size of the query tree
can be bounded, with high probability, by $O(\log{n})$, and showed
how it is possible to transform many on-line algorithms to LCAs.
Using this technique, they showed LCAs for maximal matching and
several machine scheduling problems. Mansour and Vardi \cite{MV13},
showed an LCA which finds a $(1-\eps)$-approximation to the maximum
matching.


\smallskip
\noindent{\em Mechanism Design}:
We divide our attention between two types of mechanisms: with and
without payments. When the mechanism designer can incorporate
payments (to or from participating parties), these payments are
usually used to guarantee \emph{incentive compatibility} or
``truthfulness'' (\cite{MWG95}, \cite{NR99}): the agents are
rationally motivated to truthfully reveal their private information.
The mechanisms with payments discussed in this paper are all
randomized, and there are several quantifications of truthfulness
for such mechanisms in the literature, e.g., \cite{FKKV13,HKV13}. We
focus on the two most widely accepted (e.g., \cite{NR99,AT01,LOS02,DD09,DNS12}): \emph{truthfulness in expectation}, in which
the expectation of each agent's utility is maximized by being
truthful (where the expectation is taken over the coin flips of the
mechanism), and \emph{universal truthfulness}, where each agent's
utility is maximized by being truthful, regardless of the
realization of the randomness of the mechanism. When the mechanism
does not support payments, it is sometimes impossible to guarantee
truthfulness without crippling restrictions  to the mechanism
\cite{A50}. In such cases, it is common to look at (Bayesian)
\emph{ex-ante truthfulness}\footnote{Usually referred to in Economic
literature as \emph{ex-ante efficiency}.}, in which the expectation
of each agent's utility is maximized by being truthful (where the
expectation is taken over the prior distribution of the other
agents' private information).

Mechanisms are generally required to run in time (and space)
polynomial in the size of the input. In cases when the optimal
solution can be calculated in polynomial time and space, one can
apply the well-known VCG payments \cite{V61,C71,G73}.
Unfortunately, these payments can only be applied when the optimal
solution can be computed \cite{NR99,LOS02}, and in many cases, it
has been shown that computing an optimal solution is NP-hard
\cite{NR99,D11}. In these cases, we can only hope to design
algorithms that approximate the optimal solution in polynomial time.


\smallskip
\noindent{\em Additional Related Work:} Because of the large variety
of game-theoretic settings considered  in this paper, instead of
listing the entire glossary of related game-theoretic works here, we
provide a short  subsection dedicated to related work  pertaining to
each topic at the start of the relevant sections.

\subsection{Paper Organization}

In Section~\ref{section:preliminaries} we give some general definitions and
notation which we will use in the rest of the paper, and  present
our model for local computation game theoretic mechanisms. 
In Section~\ref{section:stable_matchings} we present our LCA for stable matching.  In Section~\ref{section:global_GS} we show some properties of the (global) Gale-Shapley algorithm that can be derived using our proof techniques.  In Section \ref{section:ex_ante}, we show that our LCA for stable matching is ex-ante truthful. In Section~\ref{section:scheduling} we present LCAs for machine scheduling; in Section~\ref{section:matching}, we give LCAs for combinatorial auctions with unit demand bidders, and prove that there cannot be an LCA for maximum matching; in Section~\ref{sec:SMB1} we extend the results of Section~\ref{section:matching} to combinatorial auctions with single-minded bidders. In Section~\ref{sec:rsd}, we give an LCA for random serial dictatorship, and finally, in Section~\ref{sec:conclusions}, we provide several interesting open questions for future research.

\section{Our Model and Preliminaries}
\label{section:preliminaries}


\subsection{Local Computation Algorithms}

The model we use is a generalization of the model of local computation algorithms (LCAs) introduced in
\cite{RTVX11}.\footnote{ Our model differs from the model of
\cite{RTVX11} in that their model requires that the LCA
\emph{always} obeys the time and space bounds, and returns an error
with some probability. It is easy to see that any algorithm which
conforms to our model can be modified to conform to the model of
\cite{RTVX11} by forcing it to return an error if the time or space
bound is violated (the other direction does not necessarily hold). Note however, that using this translation, a
truthful mechanism in our model would not necessarily translate to a
truthful mechanism in the model of \cite{RTVX11}.} A {\em $(t(n),
s(n), \delta(n))$-local computation algorithm} $\mcLA$ for a
computational problem is a (randomized) algorithm that receives an
input of size $n$, and a query $x$. Algorithm $\mcLA$ replies to
query $x$ in time $t(n)$ and uses at most $s(n)$ memory, with
probability at least $1-\delta(n)$. Furthermore, the replies to all
of the possible queries are consistent and combine to a feasible
solution to the problem. That is, the algorithm \emph{always}
replies correctly, but there is a $\delta(n)$ probability that the
time and/or space bounds will be violated. 
\subsection{Mechanism Design}

We use the standard notation of game theoretic mechanisms. There is
a set $N$ of $n$ rational  agents and a set $M$ of $m$ objects. In
some settings, e.g., the stable marriage setting, there are no
objects, only rational agents.  Each agent $i\in N$ has a valuation function $v_i$ that maps subsets $S\subseteq M$ of the items to non-negative numbers. The utilities of the agents are quasi-linear, namely, when agent $i$ receives subset $S$ of items and pays $p$, her utility is $u_i(S,p)= v_i(S)-p$. Agents are rational in the sense that they select actions to maximize their utility.  We would like to allocate items
to agents (or possibly agents to other agents), in order to meet
global goal, e.g., maximize the sum of the valuations of allocated
objects (see, e.g., \cite{NRTV07}).


A {\em mechanism with payments} $\mcM = (\mcA, \mcP)$ is composed of
an allocation function $\mcA$, which allocates items to agents, and
a payment scheme $\mcP$, which assigns each agent a payment. A
mechanism without payments consists only of an allocation function.
 Agents report their bids  to the mechanism.
Given the bids $b=(b_1,\ldots,b_n)$, the mechanism allocates  the item subset $\mcA_i(b)\subseteq M$ to agent $i$, and, if the mechanism is with payments,  charges her $\mcP_i(b)$; the utility of agent $i$ is $u_i(b)=v_i(\mcA_i(b)) - \mcP_i(b)$. 



A randomized mechanism is {\em universally truthful} if for every
agent $i$, for every random choice of the mechanism, reporting her
true private valuation maximizes her utility.
A randomized mechanism is {\em truthful in expectation}, if for
every agent $i$, reporting her true private valuation maximizes her
expected utility. That is, for all agents $i$, any bids $b_{-i}$ and
$b_i$, $ \expect[u_i(v_i, b_{-i})] \geq \expect[u_i(b_i, b_{-i})].$

We say that an allocation function $\mcA$ \emph{admits} a truthful payment scheme if there exists a payment scheme $\mcP$ such that the mechanism $\mcM = (\mcA, \mcP)$ is truthful.

A mechanism $\mcM = (\mcA, \mcP)$ fulfills \emph{voluntary participation} if, when the agent bids truthfully,
the utility of every agent is always
non-negative, regardless of the other agents' bids, i.e., for all agents $i$ and bids $b_{-i}$, $u_i(v_i, b_{-i})\geq 0\;$.

\subsection{Local Computation Mechanisms}
\label{subsection:model}

\begin{definition}
[Mechanisms without payments] We say that a mechanism  $\mcM$ is
\emph{$(t(n), s(n), \delta(n))$-local}  if its allocation function
is computed by a $(t(n), s(n), \delta(n))$-local computation
algorithm.
\end{definition}

\begin{definition}
[Mechanisms with payments] We say that a mechanism  $\mcM = (\mcA,
\mcP)$ is  \emph{$(t(n), s(n), \delta(n))$-local} if both the
allocation function $\mcA$ and the payment scheme $\mcP$ are
computed by $(t(n), s(n), \delta(n))$-local computation algorithms.
\end{definition}

In other words, given  a query $x$, $\mcA$ computes an allocation
and $\mcP$ computes a payment, and both run in time $t(n)$ and space
$s(n)$ with probability at least $1-\delta(n)$. Furthermore, the
replies  of $\mcA$ to all of the queries combine to a feasible
allocation.

A \emph{truthful local mechanism} $\mcM = (\mcA, \mcP)$ is a local mechanism which is also truthful.
%
Namely, each agent's dominant bid is her true valuation, regardless of the fact that the mechanism is local.\\

\section{Stable Matching}
\label{section:stable_matchings}

The \emph{stable matching} problem  is represented by a tuple
$A=(M,W, P)$, where $M$ is the set of men, $W$ is set of
women, and $P$ is the set of preference relations of the men and the
women: each man $m \in M$ has a preference relation over the women:
if $m$ prefers $w$ to $w'$ we denote this by $w
\overset{m}{\succ}w'$. Similarly there is a preference relationship
$\overset{w}{\succ}$ for each woman $w$.

\paragraph{Related work}
\emph{Stable matching} has been at the center of game-theoretic
research since the seminal paper of Gale ans Shapley \cite{GS62} (see,
e.g., \cite{R03} for an introduction and a summary of
many important results). Roth and Rothblum \cite{RR99} examined the
scenario in which the preference lists are of bounded length; in
most real-life scenarios, this is indeed the case. For example, a
medical student will not submit a preference list  for internship over all of the
hospitals in the United States. Furthermore, the mechanism designer
in most of these cases \emph{de facto} decides on the list length
because the mechanism will usually require the men to submit a list
of some predetermined length. It is  known that a linear number of
iterations of the Gale-Shapley algorithm is necessary to attain
stability \cite{GI89}, and several works address the situation when
we are interested in a sublinear number of queries, for example Feder et al.
\cite{FMP00} propose a parallel sublinear time algorithm for stable
matching. Unfortunately, it is not possible to convert their
algorithm to a local computation algorithm. Several experimental
works on parallel algorithms for the stable matching problem provide
evidence that after a constant number of rounds, the matching is
almost stable (e.g., \cite{TL84,Q85,LZ03}). 
Floreen et al. \cite{FKPS10} show that in the special case when both the mens' and
the womens' preference lists are bound by a constant, there exists a
distributed version of the Gale-Shapley algorithm, which can be run
for a constant number of rounds and finds an almost stable matching.

We examine the variant in which each man $m \in M$ is interested in
at most $k$ women, (and prefers to be unmatched than to be matched
to anyone not on their list; cf. \cite{RR99}). We limit our
attention to the setting in which the men's preference are assumed
to be uniformly distributed; cf. \cite{IM05,KP09}. 

\begin{definition}\label{def:stable}
A {\em matching} $H:M \rightarrow W \cup \{\perp\}$ is a function
which is injective over $W$ (but it is possible that for several men
$m_i$, $H(m_i)=\perp$ - these men are said to be
\emph{unmatched}).\\
A matching $H$ is said to be \emph{stable} if for every man $m$ such
that $H(m) = w$ (possibly $w=\perp$), and every $w'$ such that $w'
\overset{m}{\succ} w$, then $H^{-1}(w') \overset{w'}{\succ} m$. A
couple $(m,w')$ is said to be \emph{unstable} if $\exists w \in W
\cup \{\perp\}: H(m)=w$, $w' \overset{m}{\succ} w$, and for $m'=
H^{-1}(w')$, $m \overset{w'}{\succ} m'$ (possibly $m'=\perp$).\\
The stable matching problem $(M,W,P)$ where each man has a
preference list of length $k$ and each woman is chosen uniformly at
random is called  \emph{$k$-uniform}.
%
\end{definition}

The Gale-Shapley algorithm 
finds a stable matching in the $k$-uniform setting (e.g., \cite{GS85}). To
ensure the locality of our algorithm, we allow our mechanism to
"disqualify" men, in which case they remain unmatched, but are
unable to contest the matching. We try, however, to keep the number
of disqualified men to a minimum. Our main result is the following.
\begin{theorem}
\label{thm:GS} Let $A=(M,W, P)$ be a stable matching problem,
$|M|=|W|= n$, in the $k$-uniform setting. Then there is an
$(O(\log{n}), O(\log{n}), 1/n)$-local computation mechanism for $A$
in which at most an $O(1/k)$ fraction of the men remain unmatched.
\end{theorem}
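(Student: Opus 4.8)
The plan is to simulate the Gale-Shapley algorithm (men-proposing) locally, using the query-tree technique of Nguyen–Onak / Alon et al. / Mansour et al. The key structural observation is that in the $k$-uniform setting, each man has a preference list of only $k$ women chosen uniformly at random, so when man $m$ proposes, the decision of which woman (if any) ends up holding his proposal depends only on the current tentative assignments of the $k$ women on his list, which in turn depend on the other men who have those women on their lists, and so on. I would first fix a uniformly random ranking (priority) $\pi$ of the men — this plays the role of the random permutation in the standard query-tree construction — and define the "global" solution to be the run of a sequential Gale-Shapley variant in which men are activated in the order $\pi$ and each activated man proposes down his list until he is either held or exhausted. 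To answer a query "who is $m$ matched to?", I recursively determine the state of the relevant women by recursing on men of higher priority who share a woman on their lists; to answer "who holds woman $w$?", I recurse on the men who have $w$ on their list. The recursion is organized as a query tree rooted at the query, and the first step is to bound its size.

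**The core technical step** is the query-tree size bound. Because each man's list is a uniformly random $k$-subset of the $n$ women, the bipartite "conflict graph" (man $m$ adjacent to the men sharing a woman with him) has expected degree $O(k^2)$ — a constant when $k$ is constant — so by the arguments of Mansour et al.\ \cite{MRVX12} (improving \cite{ARVX11,NO08}) the query tree, when men are explored in order of the random priority $\pi$, has size $O(\log n)$ with probability at least $1 - 1/n$ (more precisely $1 - 1/\mathrm{poly}(n)$, which we can push to $1/n$). Using a pseudorandom ordering generated from an $O(\log n)$-bit seed à la \cite{ARVX11}, the space is also $O(\log n)$. This gives the $(O(\log n), O(\log n), 1/n)$ resource bounds. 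However, Gale-Shapley as stated can run for $\Theta(n)$ rounds, so one cannot afford to simulate full convergence; this is exactly where the "disqualification" device enters.

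**The $O(1/k)$ unmatched bound** is the second main step and, I expect, the main obstacle. The idea: truncate the simulated Gale-Shapley to a constant number $r=r(k)$ of proposal rounds (equivalently, bound the depth of proposal chains we are willing to follow), and any man whose proposal chain has not resolved within that budget is "disqualified" and left unmatched. To control the fraction disqualified, I would analyze the number of proposals made: in the $k$-uniform model, the expected total number of proposals is $O(n\log k)$ or so (a classical coupon-collector-type calculation; cf.\ the analyses in \cite{IM05,KP09}), and the men who require many proposals — those whose first $\Theta(1)$ choices are all already taken by higher-priority men — form only an $O(1/k)$ fraction, since a man's list is random and most of his $k$ options are, with constant probability each, still available after the bulk of the matching has formed. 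Making "disqualification preserves stability among the matched" precise requires care: disqualified men are removed from contention (treated as if their list were empty), so no matched pair can be blocked by a disqualified man, and among the non-disqualified men the truncated run coincides with a genuine Gale-Shapley execution on the reduced instance, hence is stable on the matched set by Definition~\ref{def:stable}. The delicate point is ensuring consistency: the set of disqualified men must be determined by a local, query-order-independent rule (e.g., disqualify $m$ iff the query tree rooted at $m$ exceeds the depth/size budget), so that answers to different queries all refer to the same global truncated matching.

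**Putting it together**, I would (i) define the priority ordering and the sequential truncated-GS global solution with the explicit disqualification rule; (ii) describe the local query procedures for man-queries and woman-queries and show they return answers consistent with that global solution; (iii) invoke the query-tree size bound of \cite{MRVX12,ARVX11} in the constant-degree conflict graph to get the $(O(\log n), O(\log n), 1/n)$ bounds; (iv) show stability on the matched set follows because the truncated run restricted to non-disqualified men is an honest Gale-Shapley run; and (v) bound the disqualified fraction by $O(1/k)$ via the random-list/coupon-collector analysis of proposal counts. Step (v), with the attendant need to make disqualification a clean deterministic function of the random priorities and random lists, is where the real work lies; the incentive-compatibility (ex-ante truthfulness) claim is deferred to Section~\ref{section:ex_ante} and is not needed for this theorem.
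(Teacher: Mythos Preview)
Your locality argument has a genuine gap. The query-tree machinery of \cite{NO08,ARVX11,MRVX12} applies to \emph{online} algorithms in which the outcome of agent $i$ is fixed at the moment $i$ is processed and depends only on higher-priority neighbours. Gale--Shapley is not of this form: a man who is tentatively held can be displaced by a man activated \emph{later}, and the state of a woman $w$ just before $m$'s proposal depends on the full trajectories of all higher-priority men connected to $w$ through chains of shared women, not just on $m$'s direct conflict-graph neighbours. Concretely, take $m_1$ (priority $1$) with list $(w_2,w_1)$, $m_2$ (priority $2$) with list $(w_2,\ldots)$, and $m_4$ (priority $4$) with list $(w_1)$. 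Then $m_4$'s query tree in your sense is $\{m_4,m_1\}$, yet whether $m_1$ ends up at $w_1$ before time $4$ depends on whether $m_2$ displaces $m_1$ from $w_2$ --- and $m_2$ has \emph{lower} priority than $m_1$, so the higher-priority recursion never reaches him. If instead you mean that displaced men simply stay unmatched (so the process is genuinely online), then the random priority does bound the query tree, but you no longer control the number of unmatched men by anything like a coupon-collector argument, and in any case this is not the algorithm you would want.

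The paper's proof avoids this entirely by working with the \emph{parallel} Gale--Shapley process truncated after $\ell=2k^2$ rounds (\textsc{AbridgedGS}): men rejected in round $\ell$ are disqualified. Locality comes from the elementary observation that a man at distance greater than $2\ell$ from $m$ cannot influence $m$ within $\ell$ rounds, so it suffices to simulate the first $\ell$ rounds on the $2\ell$-neighbourhood of the queried man. The $(O(\log n),O(\log n),1/n)$ bounds then follow from a direct inductive Chernoff argument showing $|N_i(v)|=O(\log n)$ with high probability for each fixed $i$ --- no random priority ordering and no query-tree bound is invoked. For the $O(1/k)$ fraction, the paper splits the unmatched men into two groups and bounds each separately: at most $2n/k$ men are left unmatched because their $k$-list is exhausted (a balls-into-bins argument with Azuma concentration), and at most $nk/\ell = n/(2k)$ men are disqualified in round $\ell$ (since total rejections are at most $kn$ and the number of rejections per round is monotone decreasing). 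Your coupon-collector sketch does not deliver either piece.
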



We begin by describing a non-local algorithm, \textsc{AbridgedGS},
and then show how to simulate it locally by a local algorithm,
\textsc{LocalAGS}.

\subsection{\textsc{AbridgedGS}}

Let  \textsc{AbridgedGS} be the Gale-Shapley men's courtship
algorithm, where, in addition to the preference lists being of
constant length, the algorithm is stopped after $\ell$ rounds. That
is, in each round, each unassigned man goes to the highest ranked
woman who has not yet rejected him. Each woman then keeps the man
she prefers out of the men who approached her, and rejects the rest.
This continues until the $\ell^{th}$ round, and the men who were
rejected on the $\ell^{th}$ round are disqualified. (Note that the
men who were rejected $k$ times are left unmatched as well, but they are not said to be disqualified.) We
simulate \textsc{AbridgedGS} using an LCA.

\subsection{\textsc{LocalAGS} - an LCA implementation of \textsc{AbridgedGS}}

Consider a graph which represents the problem, where the men and the women are represented by vertices, and an edge exists between
two vertices if and only if they are on each other's list. Define the \emph{distance} between two agents to be the length
of the shortest path between them in the graph.
Define the $d$-neighborhood of a person $v$ to be everyone at a
distance at most $d$ from $v$, denoted $N_d(v)$

Assume we are queried on a specific man, $m_1$. We simulate
\textsc{AbridgedGS} locally as follows: Let the number of rounds be
$\ell=2k^2$ (see Lemma \ref{lemma:GSshai}). We look at the
$2\ell$-neighborhood of $m_1$. For each man $m_i \in
N_{2\ell}(m_1)$, we simulate round $1$. Then, for each man $m_i \in
N_{2\ell-2}(m_1)$, we simulate round $2$. And so on, until for $m_i
\in N_2(m_1)$, that is, $m_1$ and his closest male neighbors, we
simulate round $\ell$. We return the woman to whom $m_1$ is paired,
``unassigned'' if he was rejected by $k$ women, and ``disqualified''
if he was rejected by a woman in round $\ell$. We denote this
algorithm by \textsc{LocalAGS}.

\begin{claim}
For any two men, $m_i$ and $m_j$, whose distance from each other is
greater than $2\ell$, $m_i$'s actions cannot affect $m_j$ if Algorithm \textsc{AbridgedGS}  terminates after $\ell$ rounds.
\end{claim}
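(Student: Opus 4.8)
The plan is to prove the claim by induction on the round number, tracking how far information can ``propagate'' through the graph during one round of \textsc{AbridgedGS}. The key observation is that in a single round, a man's action depends only on the women who have rejected him so far, and a woman's rejection decision in a round depends only on the set of men who approached her that round. So one round of the algorithm moves information a bounded distance in the graph.

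More precisely, I would first set up the right invariant. For a man $m$, the only thing that determines his behavior in round $r$ is the sequence of women who have rejected him in rounds $1,\dots,r-1$; for a woman $w$, the only thing that determines which man she holds (and hence whom she rejects) at the end of round $r$ is the multiset of men who approached her in rounds $1,\dots,r$, together with their identities/preferences. I would then prove the following quantitative statement by induction on $r$: the decision of any agent $v$ at round $r$ (which woman a man approaches, or which man a woman holds/rejects) is a function only of the preference lists of the agents in $N_{2r}(v)$ (or $N_{2r-1}(v)$ — one just needs to be careful with the exact constant, since a man-to-woman step is distance $1$ and a woman's rejection feeding back to a man is another distance-$1$ step, so each full round costs distance $2$). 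The base case $r=1$ is immediate: each man approaches the top woman on his own list, and each woman holds the best man among her immediate neighbors who approached her. The inductive step: a man's round-$r$ approach depends on his round-$(r-1)$ rejections, each of which, by the inductive hypothesis applied to the relevant women (who are at distance $1$ from him), depends only on $N_{2(r-1)}(\cdot)\subseteq N_{2r-1}(m)$; and a woman's round-$r$ decision depends on the round-$r$ approaches of her neighbors, which depend on $N_{2r-1}(\cdot)\subseteq N_{2r}(w)$.

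Given this invariant, the claim follows directly. If $m_i$ and $m_j$ are at distance greater than $2\ell$, then $m_i \notin N_{2\ell}(m_j)$, so by the invariant (applied with $r \le \ell$) none of $m_j$'s decisions in any of the $\ell$ rounds — in particular his final assignment, ``unassigned'' status, or ``disqualified'' status — can depend on the preference list of $m_i$. Changing $m_i$'s list (or even removing $m_i$) therefore leaves $m_j$'s entire execution history under \textsc{AbridgedGS} unchanged. I would phrase the conclusion in terms of this ``locality of the transcript'': the portion of the $\ell$-round execution visible to $m_j$ is entirely determined by $N_{2\ell}(m_j)$.

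The main obstacle, and the only place real care is needed, is bookkeeping the exact radius and making sure the induction is stated about the right object. A naive version that only tracks ``which agents have been touched'' is too weak, because an agent being touched is not the same as an agent's decision depending on a far-away list; one has to phrase the inductive hypothesis about the functional dependence of decisions on preference lists within a ball, and verify that a man's rejection history and a woman's holding history are exactly the state variables that suffice to determine the next round's behavior. There is also a minor subtlety that a woman holding a man across rounds means her round-$r$ state depends on approaches from \emph{all} rounds $\le r$, not just round $r$; but since $N_{2(r-1)}(w)\subseteq N_{2r}(w)$, the radius bound is monotone in $r$ and this causes no trouble. Once the invariant is correctly stated, the proof is a routine induction.
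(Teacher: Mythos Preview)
Your proposal is correct and follows essentially the same approach as the paper: an induction on the number of rounds, using that influence propagates at most distance~$2$ per round (one man-to-woman step plus one woman-to-man step). The paper's proof is terser and argues the inductive step by contradiction (if $m_i$ at distance $>2\ell$ could affect $m_j$, then $m_i$ would have to affect some distance-$2$ neighbor of $m_j$ within $\ell-1$ rounds, contradicting the hypothesis), whereas you state a cleaner positive invariant about functional dependence on $N_{2r}(\cdot)$; but the content is the same.
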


\begin{proof}
The proof is by induction.
For $\ell=1$, let $w_1$ be $m_j$'s first choice. Only men for whom
$w_1$ is their first choice can affect $m_j$, and these are a subset of the men at
distance $2$ from $m_j$. For the inductive step, assume that the
claim holds for $\ell-1$. Assume that there is a man $m_i$ whose
actions can affect $m_j$ within $\ell$ rounds, who is at a distance
of at least $2\ell+2$ from $m_j$. From the inductive claim, none of
$m_i$'s actions can affect any of $m_j$'s neighbors within $\ell-1$
rounds. As their actions in round $\ell-1$ (or any previous round)
will not be affected by $m_i$, and they are the only ones who can
affect $m_j$ in round $\ell$, it follows that $m_i$ cannot affect
$m_j$ within $\ell$ rounds.
\end{proof}

\begin{corollary}
 The query of a man $m$'s status in round $\ell$
only needs to consider men at distance at most $2\ell$ from $m$.
\end{corollary}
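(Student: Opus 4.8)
The plan is to derive the corollary directly from the preceding Claim by contraposition, together with the observation that a man's round-$\ell$ status is one of the ``actions'' (or is determined by the actions) the Claim controls. Fix the queried man $m$ and consider running \textsc{AbridgedGS} for $\ell$ rounds on the whole instance. The Claim says that no man at distance greater than $2\ell$ from $m$ can affect $m$; equivalently, every man whose proposals and rejections can possibly influence which woman holds $m$'s proposal at the end of round $\ell$ (or whether $m$ has been rejected $k$ times, or was rejected in round $\ell$ and hence disqualified) lies in $N_{2\ell}(m)$. So $m$'s final status is a function only of the sub-instance spanned by $N_{2\ell}(m)$: the men in that neighborhood, the women incident to them, and the corresponding preference relations.

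To make this precise I would re-run the same downward induction on the round index that proves the Claim, but tracking $m$'s state rather than ``can affect''. The woman $m$ proposes to in round $t$ depends only on $m$'s own rejection history through round $t-1$; whether that woman rejects him depends on which other men propose to her in round $t$, and those are exactly the men at distance $2$ from $m$ (sharing that woman) whose round-$t$ proposal targets her. Their round-$t$ behavior is in turn determined by men at distance $\le 4$, and so on; unwinding $\ell$ levels, the only men consulted are those within distance $2\ell$ of $m$, and every woman that appears in the simulation is proposed to by some man in $N_{2\ell-1}(m)$, hence lies in $N_{2\ell}(m)$. Therefore executing \textsc{AbridgedGS} on the sub-instance induced by $N_{2\ell}(m)$ yields the same status for $m$ as executing it on the entire instance, which is exactly the statement of the corollary (and is what \textsc{LocalAGS} exploits, using the tighter shrinking-neighborhood schedule to save time).

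I do not expect a genuine obstacle: the corollary is essentially a repackaging of the Claim. The only point that deserves an explicit line is confirming that the women required to carry out the local simulation — those on the lists of men in $N_{2\ell-1}(m)$ — are themselves contained in $N_{2\ell}(m)$, which is immediate from the definition of the $d$-neighborhood, so that ``considering men at distance at most $2\ell$'' indeed pins down all the data the simulation touches.
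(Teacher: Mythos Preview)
Your proposal is correct and matches the paper's approach: the paper gives no separate proof for the corollary, treating it as an immediate contrapositive consequence of the preceding Claim, which is exactly what you do. Your additional remark that the relevant women also lie in $N_{2\ell}(m)$ is a harmless (and accurate) elaboration beyond what the paper bothers to spell out.
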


\begin{lemma}\label{lemma:run}
The running time and space of algorithm \textsc{LocalAGS} is
$O(\log{n})$ with probability at least $1-1/n^2$.
\end{lemma}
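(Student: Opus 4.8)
The plan is to bound the running time and space of \textsc{LocalAGS} by the number of agents that the local simulation actually touches, and then to argue that this number is $O(\log n)$ with high probability. When queried on $m_1$, the algorithm examines $N_{2\ell}(m_1)$, so the whole cost is dominated by $|N_{2\ell}(m_1)|$ together with the work done per agent (simulating at most $\ell = 2k^2$ rounds, each round being a constant-time local decision since the preference lists have length $k$, a constant). Since $\ell$ and $k$ are constants, it suffices to show $|N_{2\ell}(m_1)| = O(\log n)$ with probability at least $1 - 1/n^2$.

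The key step is to control the size of the $d$-neighborhood in the underlying graph for the constant $d = 2\ell = 4k^2$. The graph is the ``acquaintance graph'': vertices are the $n$ men and $n$ women, and each man contributes $k$ edges to women chosen uniformly at random (so each woman has expected degree $k$, and her degree is concentrated, being a sum of independent indicators / $\mathrm{Binomial}(n, k/n)$). First I would observe that with probability at least $1 - 1/n^3$ (say), every vertex has degree $O(\log n / \log\log n)$, by a Chernoff/union bound over the $2n$ vertices. Conditioned on this event, $|N_d(v)| \le 1 + \Delta + \Delta^2 + \cdots + \Delta^d = O(\Delta^d)$ where $\Delta = O(\log n / \log\log n)$; but $d$ is a constant, so this only gives $\polylog(n)$, not $O(\log n)$. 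To get the sharper $O(\log n)$ bound one should not use the worst-case degree but rather argue about the neighborhood size directly: the number of vertices reachable within $d$ steps from $m_1$ is stochastically dominated by the population of $d$ generations of a branching process with offspring distribution close to $\mathrm{Poisson}(k)$ (this is the standard local-tree-likeness of sparse random graphs). Since $k$ is a constant, the expected size of $d$ generations is $O(k^d) = O(1)$, and a standard tail bound for such bounded-depth branching processes (e.g., the total progeny has an exponential tail, cf. the query-tree analysis of \cite{MRVX12, ARVX11}) gives $\Pr[|N_d(m_1)| > c\log n] \le 1/n^3$ for a suitable constant $c$. A union bound over the at most $2n$ possible queries then yields that \emph{every} query touches at most $O(\log n)$ agents with probability at least $1 - 1/n^2$, which is what we need. (Alternatively, and perhaps more in the spirit of the paper, one can cite the query-tree size bound of \cite{MRVX12}: the size of the relevant query tree is $O(\log n)$ w.h.p., and $N_{2\ell}(m_1)$ is contained in a constant number of such trees since $\ell$ is constant.)

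Finally, I would combine the pieces: on the high-probability event that $|N_{2\ell}(m_1)| = O(\log n)$, the algorithm reads $O(\log n)$ agents, stores $O(\log n)$ agent-states (each of constant size, since lists have length $k$), and performs $\ell = O(1)$ rounds of updates over these agents, each update costing $O(1)$; hence both time and space are $O(\log n)$. The complementary event has probability at most $1/n^2$, matching the claimed $\delta(n)$. The main obstacle is the middle step: getting the bound down to $O(\log n)$ rather than a cruder $\polylog(n)$ requires the branching-process/query-tree concentration argument rather than a naive ``degree raised to the $d$th power'' estimate; everything else is bookkeeping given that $k$ and $\ell$ are constants.
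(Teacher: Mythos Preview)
Your overall plan is correct and matches the paper exactly: the running time and space are dominated by $|N_{2\ell}(m_1)|$ (with $\ell=2k^2$ constant), so the whole lemma reduces to showing $|N_d(v)|=O(\log n)$ with probability $1-1/n^2$ for a fixed constant $d$. The paper states this as a separate claim and proves it by a self-contained induction on $d$: at each step they split each woman's degree into a bounded part $e^2 k$ and a ``truncated'' excess $\hat z_j=\max\{0,z_j-e^2k\}$ with $\Pr[\hat z_j=t]\le e^{-t}$, and then bound the sum of the truncated excesses over the $y\le c_{d-1}\log n$ boundary vertices by a counting argument, obtaining $c_d\approx(16k)^d$. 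This is precisely the explicit, hands-on version of the ``exponential tail for the depth-$d$ population of a mean-$k$ branching process'' that you invoke; so your primary argument and the paper's are the same idea, with the paper supplying the details you label ``standard.''

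One caution on your alternative route: citing the query-tree bound of \cite{MRVX12,ARVX11} directly does not quite fit. The query tree there is defined with respect to a random \emph{ordering} of the vertices and has random (potentially unbounded) depth; its $O(\log n)$ size bound is a statement about that recursion, not about fixed-radius balls $N_d(v)$. The underlying analytic technique (inductive Chernoff-type control of a branching exploration) is indeed the same, but you cannot simply say ``$N_{2\ell}(m_1)$ is contained in a constant number of such trees'' --- the two objects are not nested in that way. If you go this route you would have to re-derive the bound for fixed-depth neighborhoods, which is exactly what the paper does.
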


We prove the following claim, which immediately implies Lemma
\ref{lemma:run}.

\begin{claim}\label{claim:run}
For sufficiently large $n$, for any integer $i>0$, there exists a
constant $c_i$ such $\Pr[|N_{i}(v)| \leq c_i \log{n}] \geq 1-1/n^2$ .
\end{claim}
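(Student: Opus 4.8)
The plan is to bound the size of the $i$-neighborhood $N_i(v)$ by controlling the branching of a breadth-first exploration from $v$ in the random graph underlying the $k$-uniform instance. Recall the structure: each of the $n$ men independently picks a list of $k$ women uniformly at random (from the $n$ women), and an edge joins a man to a woman iff she is on his list. Thus the men-vertices have degree exactly $k$, while the women-vertices have a random degree equal to the number of men who sampled them — a $\mathrm{Binomial}(n, k/n)$ random variable, which is concentrated around $k$ with exponential tails. The first step is to make this precise: by a Chernoff bound, the probability that a \emph{fixed} woman has degree exceeding, say, $t := c'\log n$ is at most $n^{-3}$ for a suitable constant $c'$; taking a union bound over all $n$ women, with probability at least $1-n^{-2}$ \emph{every} woman-vertex has degree at most $t = O(\log n)$.

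Conditioned on this good event, the plan is an easy deterministic induction on $i$. A man-vertex has at most $k$ women-neighbors, and each woman-vertex has at most $t$ man-neighbors, so in two ``layers'' of BFS the neighborhood grows by a factor of at most $kt = O(\log n)$. Concretely, $|N_{i}(v)| \le |N_{i-1}(v)| \cdot (kt)$ for each step (crudely, counting each vertex's neighbors), so $|N_i(v)| \le (kt)^{i+1} \le (kt)^{i+1}$. Since $k$ is a constant and $t = c'\log n$, this gives $|N_i(v)| \le (c'k)^{i+1} (\log n)^{i+1}$. This is \emph{not} quite $c_i \log n$ — it is $\mathrm{polylog}(n)$ with the exponent depending on $i$ — so to get the claimed linear-in-$\log n$ bound one must be more careful, and this is the main obstacle (see below).

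To actually obtain $|N_i(v)| \le c_i\log n$, I would instead argue about the \emph{total} number of vertices reachable within distance $i$ directly, exploiting that $i$ is a fixed constant (here $i \le 4\ell = 8k^2$, also a constant). The cleanest route is: (i) show that with probability $\ge 1 - n^{-2}$ the number of men-vertices in $N_i(v)$ is $O(\log n)$, by a first-moment / union-bound argument over all potential BFS trees of depth $i$ rooted at $v$ — the expected number of such men at each level is $O(1)$ since each man independently ``hits'' a given woman with probability $k/n$, and a self-bounding or Chernoff-type argument upgrades this to an $O(\log n)$ high-probability bound because the levels are shallow; and (ii) observe each woman in $N_i(v)$ is a neighbor of some man in $N_{i-1}(v)$, and by the concentration from step one each such man contributes $O(1)$ women in expectation (again boosted to $O(\log n)$ total). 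Summing the $i+1$ levels, each of size $O(\log n)$ with a constant $c_i$ absorbing the dependence on $i$, yields $|N_i(v)| \le c_i \log n$ with probability at least $1 - 1/n^2$ (adjusting constants and using that there are only constantly many levels to union-bound over).

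The step I expect to be the real crux is controlling the \emph{multiplicative} growth of the neighborhood without letting the $\log n$ factor compound across the $i$ levels: a naive ``each level multiplies size by $O(\log n)$'' bound gives $(\log n)^{\Theta(i)}$, which is too weak. The fix is that the expected \emph{out-degree} of each discovered vertex into the \emph{next} level is $O(1)$ (a man has $k = O(1)$ women; a woman has in expectation $k = O(1)$ men, since revealing one man on her list does not appreciably change the distribution of the rest when $n$ is large), so the branching process is subcritical-to-constant and the neighborhood size stays $O(1)$ in expectation per level; the $O(\log n)$ slack is needed only to turn ``$O(1)$ in expectation'' into ``$O(\log n)$ always'' with the required $1 - 1/n^2$ failure probability, via a Chernoff/Azuma bound on the level sizes exactly as in \cite{ARVX11,MRVX12}. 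Once Claim \ref{claim:run} holds for $i = 4\ell$, Lemma \ref{lemma:run} follows immediately since \textsc{LocalAGS} only ever inspects vertices in $N_{2\ell}(m_1)$ and runs $\ell = O(1)$ rounds of local simulation over them, for a total of $O(\log n)$ time and space.
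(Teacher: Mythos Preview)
Your diagnosis is right: the naive ``every woman has degree $O(\log n)$, so each layer multiplies by $O(\log n)$'' argument only yields $(\log n)^{\Theta(i)}$, and the fix is exactly that the \emph{expected} branching factor per layer is $O(1)$ (men have $k$ neighbors deterministically; a woman's degree is $\mathrm{Bin}(n,k/n)$ with mean $k$), so the $\log n$ enters only once, through the concentration step. This is precisely the mechanism the paper uses, so at the level of strategy you are aligned with it.

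Where you and the paper diverge is in the concrete device for the concentration step. You propose to apply Chernoff/Azuma directly to the size of each BFS layer, conditioning on $|N_{i-1}(v)|\le c_{i-1}\log n$; this works, but you should be explicit that the bound you need is of the additive form $|N_i(v)|\le O(k)\cdot |N_{i-1}(v)| + O(\log n)$ rather than purely multiplicative, since when $|N_{i-1}(v)|=O(1)$ a Chernoff bound on a mean-$O(1)$ binomial does not by itself give failure probability $1/n^3$. The paper instead introduces \emph{truncated degrees}: it writes each woman's degree as $e^2k$ (a deterministic constant) plus a nonnegative remainder $\hat z$ with $\Pr[\hat z = z]\le e^{-z}$, and then bounds the sum $\sum \hat z_j$ over the $y\le c_{i-1}\log n$ women at level $i-1$ by a counting argument ($\binom{\hat x+y}{y}e^{-\hat x}$), obtaining $\Pr[\sum \hat z_j \ge 7y]\le e^{-y}\le 1/n^3$. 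This yields $c_i\approx (e^2k+7)c_{i-1}$ without ever invoking joint independence of the women's degrees, which is a clean feature your Chernoff route would need to justify via a deferred-decisions exploration argument.

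One suggestion to drop: the ``first-moment / union-bound over all potential BFS trees of depth $i$'' idea in your third paragraph is a red herring---the number of such trees is super-polynomial and a union bound over them will not close. Stick with the layer-by-layer induction you describe in the final paragraph; that is both what works and what the paper does.
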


\begin{proof}
Let $N^i_v$ be the random variable representing the number of
vertices in the $i$-neighborhood of vertex $v$. Note that as the
degree of a women $v$ is distributed binomially $N^1_v \sim
B(n,k/n)$ and $\expect[N^1_v] = k$. We prove by induction that
$\Pr[N_v^i \leq c_i(\log{n})] \geq 1-\frac{i}{n^3}$.

For the base, $i=1$, if $v$ is a man, $N_v^1=k$. If $v$ is a woman,
we employ the Chernoff bound with $\lambda>2e-1$:\footnote{This
bound is reached by substituting $\lambda \geq 2e-1$ into the
standard Chernoff bound $\Pr[X>(1+\lambda)\mu]\leq \left(
\frac{e^{\lambda}}{(1+\lambda)^{1+\lambda}}\right)^{\mu}$ }
$\Pr[N_v^1> (1+\lambda)k]<2^{-k\lambda}$. Therefore, for $c_1=4$ and
for $n\geq 2^{k}$,

\begin{equation*}
 \Pr[N_v^1 > c_1\log{n}]\leq 2^{-c_1\log{n}+k}<\frac{2^k}{n^{c_1}}\leq\frac{1}{n^3},
\end{equation*}

Assuming that the claim holds for all integers smaller than $i$,  we
show that it holds for $i$. If the outermost vertices of the
neighborhood are men, then $N_v^i \leq kN_v^{i-1}$ and we can take
$c_i = kc_{i-1}$. Otherwise, we use the law of total probability.
\begin{align}
\Pr[N_v^i> c_i\log{n}] =& \Pr[N_v^j> c_i\log{n}|N_v^{i-1}\leq c_{i-1}\log{n}]\Pr[N_v^{i-1}\leq c_{i-1}\log{n}]\notag\\
&  +  \Pr[N_v^j> c_i\log{n}|N_v^{j-1}>c_{i-1}\log{n}]\Pr[N_v^{j-1}>c_{i-1}\log{n}]\notag\\
\leq& \Pr[N_v^j > c_i\log{n}|N_v^{j-1} \leq c_{i-1}\log{n}] + \Pr[N_v^{j-1}>c_{i-1}\log{n}]\notag\\
\leq& \Pr[N_v^j > c_i\log{n}|N_v^{j-1} \leq c_{i-1}\log{n}] + \frac{i-1}{n^3}.\notag
\end{align}
where the last inequality uses the inductive hypothesis.


The probability that the degree of any node $u$ is exactly $z$ is at
most
\begin{equation*}
\Pr[deg(u) = z] \leq  {n \choose z}\left(\frac{k}{n}\right)^{z} \leq
\left(\frac{ek}{ z}\right)^{z},
\end{equation*}
using the inequality ${n \choose k} \leq
\left(\frac{ne}{k}\right)^k$.
Hence, for $z\geq e^2 k$ we have that $\Pr[deg(u) = z] \leq e^{-z}$
and $\Pr[deg(u) = z] \leq e^{-\widehat{z}}$, where
$\widehat{z}=\max\{0,z-e^2k\}$.
%

We like to bound the probability that $N^{i}_v$ is larger than
$c_{i}\log n$ although $N^{i-1}_v$ is less than $c_{i-1}\log n$.
We define a new random variable $\widehat{N}^i_v$ as follows. Let
$y\leq c_{i-1}\log n$ be the number of nodes at distance $i-1$ from
$v$ and let $z=(z_1, z_2,\ldots, z_{y})$ be their degrees. We define
the truncated degrees as  $\widehat{z}=\{\widehat{z}_1,
\widehat{z}_2,\ldots ,\widehat{z}_{y}\}$ such that
$\widehat{z}_j=\max\{0,z_j-e^2k\}$. The value of $\widehat{N}^i_v$
is the sum of the truncated degrees at distance $i-1$ from $v$,
i.e., $\widehat{N}^i_v=\sum_{i=1}^{y}\widehat{z}_i$. Clearly $ N^i_v
\leq\widehat{N}^i_v+e^2k y\leq \widehat{N}^i_v+c_{i-1}e^2k\log n $.
Therefore it is sufficient to bound $\widehat{N}^i_v$.

Let $\widehat{x} = \sum_{i=1}^y \widehat{z}_i$. The probability that the truncated degrees of the vertices at
distance $i-1$ are exactly $\widehat{z}=(\widehat{z}_1,
\widehat{z}_2,\ldots ,\widehat{z}_{y})$ is at most $\prod_{i=1}^{y}
e^{- \widehat{z}_i} =e^{- \widehat{x}}$. There are
$\binom{\widehat{x} + y}{y}$ vectors $\widehat{z}$ that can realize
$\widehat{x}$. We  bound $\Pr[\widehat{N}_v^i =
\widehat{x}|N_v^{i-1} \leq y]$, for $\widehat{x}\geq 7y$
as follows:
\begin{align*}
\Pr[\widehat{N}_v^i = \widehat{x}|N_v^{i-1} \leq y] \leq
\binom{\widehat{x} + y}{y} e^{- \widehat{x}}
 \leq \left(\frac{e\cdot(\widehat{x}+\widehat{x}/7)}{\widehat{x}/7} \right)^{\widehat{x}/7} e^{- \widehat{x}}=  e^{-(1-(1+\ln(8))/7 )\widehat{x}}
\leq e^{-\widehat{x}/2},
\end{align*}
It follows that
\begin{align}
\Pr[\widehat{N}_v^i \geq 7y |N_v^{i-1} \leq y] &\leq
\dsum_{\widehat{x}=7 y}^{\infty} e^{- \widehat{x}/2}=
\frac{e^{-7y/2}}{1-e^{-1/2}}
\notag \leq e^{- y} \leq 1/n^3,\label{eq:beta2}
\end{align}
which follows since $c_{i-1}\geq 3$.
%
%
Therefore for $c_i= (e^2k+7)c_{i-1}\leq(16 k)^i$ we have,
\begin{equation*}
\Pr[N_v^i> c_i\log{n}] \leq \frac{1}{n^3}+ \frac{i-1}{n^3}=\frac{i}{n^3}.
\end{equation*}
\end{proof}

Claim \ref{claim:run} implies Algorithm \textsc{LocalAGS} makes
$O(\log{n})$ queries with probability at least $1/n^2$, and so Lemma
\ref{lemma:run} follows.

\begin{lemma}
\label{lemma:GSshai}
In Algorithm \textsc{LocalAGS}, setting $\ell = 2k^2$  ensures at most $4n/k$ men remain unmatched with probability at least $1-\frac{1}{n^2}$.
\end{lemma}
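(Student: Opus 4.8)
The plan is to bound the number of disqualified men — those rejected in round $\ell = 2k^2$ — by relating survival in \textsc{AbridgedGS} to the number of rejections a man has accumulated. The key observation is that whenever a man is rejected in some round, either he has exhausted his list of $k$ women (and is simply left unmatched, not disqualified), or he moves on to a strictly lower-ranked woman; so a man who is still ``active and not yet through his list'' after round $r$ has been rejected at least... well, not quite $r$ times, since a man can sit at the same woman for several rounds. The right invariant is on the women's side: once a woman receives a proposal, the man she holds only improves over time, so the ``acceptance threshold'' at each woman is monotone. I would make this precise by tracking, for each man $m$ who is disqualified, the sequence of women he proposed to; since he was rejected $\ell-1 = 2k^2-1$ times before round $\ell$ but proposed to at most $k$ distinct women, by pigeonhole there is some woman $w$ who rejected him at least $\approx 2k$ times, meaning $w$'s held partner improved (strictly, in preference rank) at least $2k$ times over those rounds, which is impossible if $w$ sees only a bounded number of distinct proposers.

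So the cleaner route is a counting/expectation argument. Let $X$ be the number of men who are still competing (have been rejected but not yet run out of their list) at the start of round $r$. Each such man generates one proposal in round $r$; by the uniform-randomness of the lists, these proposals land on essentially uniformly random women. A man is rejected in round $r$ only if the woman he proposes to is holding (or receives) a strictly better man. I would argue that the expected number of men ``newly rejected and still having list entries left'' decreases geometrically — or at least that after $2k^2$ rounds the number of men who have been rejected $2k^2$ times without exhausting their list is $O(n/k)$ in expectation — and then invoke concentration (the neighborhoods are of size $O(\log n)$ by Claim~\ref{claim:run}, or a direct Chernoff/Azuma bound on the i.i.d. list choices) to get the high-probability statement $1 - 1/n^2$. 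The constant $4$ in $4n/k$ should fall out of choosing $\ell = 2k^2$ with a little slack.

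Concretely, the steps I would carry out, in order: (1) Show that if a man is disqualified (rejected in round $\ell$), then he proposed to all $k$ women on his list and moreover was rejected $\ell - k$ extra times beyond the ``one rejection per new woman'' count, so some woman on his list rejected him at least $(\ell-k)/k \geq 2k - 1$ times. (2) Each such rejection corresponds to a strict improvement in that woman's held man's rank, or to her receiving a new better proposer; since a woman's preference list is a permutation and she can improve her held partner at most (number of distinct men who ever propose to her) $- 1$ times, a woman who rejects the same man $\geq 2k$ times must have $\geq 2k$ distinct proposers. (3) Bound, in expectation and then with high probability, the number of men $m$ for which this ``bad event'' — some woman on $m$'s list has $\geq 2k$ proposers among the $O(\log n)$-size neighborhood — can occur: since each woman appears on a $\mathrm{Binomial}(n, k/n)$-sized set of men's lists with mean $k$, having $2k$ proposers is already a tail event, and summing over the $k$ women per man and $n$ men gives the $O(n/k)$ bound. (4) Convert expectation to high probability using that the men's list choices are independent and applying a Chernoff/union bound, absorbing everything into the constant to get $\leq 4n/k$ disqualified with probability $\geq 1 - 1/n^2$.

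The main obstacle will be step~(2)–(3): making the dependence rigorous. The rejections a man accumulates at a given woman are not independent events — they are coupled through the global evolution of \textsc{AbridgedGS} — so I cannot treat ``woman $w$ rejects $m$ at least $2k$ times'' as a clean product of probabilities. The honest way around this is to bound the structural quantity ``number of distinct men who ever propose to $w$'' (which, in $\ell = 2k^2$ rounds, is controlled by the size of $w$'s $O(\ell)$-neighborhood, already shown to be $O(\log n)$ w.h.p., and in expectation is $O(k)$ by the binomial degree), and observe that a woman with $d$ distinct proposers can be responsible for disqualifying at most... a bounded function of $d$ men, so that $\sum_w (\text{contribution of } w)$ telescopes to $O(n/k)$. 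Getting the constant down to $4$ precisely, rather than just $O(1)$, is then a matter of being slightly careful with the pigeonhole in step~(1) and the tail bound in step~(3); I would not expect that to require any new idea beyond the Chernoff estimates already used in the proof of Claim~\ref{claim:run}.
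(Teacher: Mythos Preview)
Your proposal has a genuine gap rooted in a misunderstanding of the Gale--Shapley dynamics. In \textsc{AbridgedGS}, a man who is rejected by a woman moves to the \emph{next} woman on his list and never returns; no woman ever rejects the same man twice. So your step~(1) --- ``some woman on his list rejected him at least $(\ell-k)/k \geq 2k-1$ times'' --- is simply false, and the pigeonhole/``strict improvement'' chain in steps~(1)--(3) collapses. A man disqualified in round~$\ell$ may have been rejected as few as one time in total (held by his top choice for $\ell-1$ rounds, then displaced), so his personal rejection count carries almost no information.

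The paper's argument is both different and much simpler. It splits the unmatched men into two groups. For the \emph{disqualified} men (rejected in round~$\ell$), the bound is purely deterministic: since each man is rejected at most $k$ times overall, the total number of rejections across all rounds is at most $kn$; the number $R_i$ of rejections in round~$i$ is monotone non-increasing, so $\ell R_\ell \leq \sum_{i=1}^{\ell} R_i \leq kn$, giving $R_\ell \leq kn/\ell = n/(2k)$ for $\ell = 2k^2$. No randomness, no tail bounds. For the men who \emph{exhaust their list} of $k$ women before round~$\ell$ --- a population your proposal does not address at all --- the paper uses the $k$-uniform assumption: via a coupon-collector-style coupling and Azuma's inequality, at most $2n/k$ women (hence men) remain unmatched with probability $\geq 1 - 1/(2n^2)$. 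Summing $n/(2k) + 2n/k \leq 4n/k$ gives the lemma, and the high-probability clause comes entirely from the second part.
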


\emph{Note}: This implies that  the mechanism can limit the number
of unmatched pairs with a high degree of certainty by specifying the
length of the list.

Lemmas \ref{lemma:run} and \ref{lemma:GSshai} imply Theorem \ref{thm:GS}.
We prove Lemma   \ref{lemma:GSshai} in the following subsection; the proof follows from  Claims \ref{claim:short} and \ref{claim:limit}.

\subsection{Bounding the number of men removed}

To prove Lemma \ref{lemma:GSshai}, we bound the number of men remaining unassigned due to the fact that
the lists are short, and the number of men disqualified due to the
number of rounds being bounded.

\subsubsection{Removal due to short lists}

We bound the number of  unpaired women as a result of the fact that
the lists are short, using the principle of deferred decisions
\cite{K76}\footnote{Instead of ``deciding'' on the preference lists
in advance, each man chooses the $(i+1)^{th}$ woman on his list only
if he is rejected from the $i^{th}$ - this mechanism is known to be
equivalent to the mechanism which we use (see, e.g.,  \cite{K76}).}.
Note that the number of unpaired women equals the number of unpaired
men.

\begin{lemma}
\label{lemma:short} In the $k$-uniform setting, the Gale-Shapley
algorithm results in at most $\frac{2n}{k}$ men being unpaired, with
probability at least $1-1/2n^2$.
\end{lemma}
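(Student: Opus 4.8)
The plan is to run Gale--Shapley using the principle of deferred decisions, so that the $(i{+}1)$st woman on a man's list is sampled uniformly (from the women not already on his list) only when his $i$th proposal is rejected; each proposal then targets a uniformly random woman, and the targets of distinct proposals are essentially independent. Two elementary facts about this execution drive everything. First, a woman is matched in the final matching if and only if she ever receives a proposal, because once a woman is matched she stays matched (she only ever ``trades up''); hence the number of unpaired women equals the number of women who receive no proposal, and, since the output is a matching, this also equals the number of unpaired men. Second, if a man $m$ is unpaired then he was rejected by all $k$ women on his list, and at each such rejection the woman in question was already holding a man she prefers, hence was matched at that moment and remains matched; therefore the entire $k$-element list of every unpaired man lies inside the set of matched women.

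Let $X$ be the number of unpaired women and let $S\subseteq W$, $|S|=X$, be that set. By the second fact there are $X$ men whose full list of $k$ women is disjoint from $S$. So if $X>2n/k$ then there is \emph{some} set $S$ of women with $|S|=u:=\lceil 2n/k\rceil$ such that at least $u$ men have list disjoint from $S$, and I will bound the probability of this event by a union bound over candidate sets $S$. For a \emph{fixed} $S$ with $|S|=u$, the number $X_S$ of men whose random $k$-subset misses $S$ is binomial with $n$ trials and success probability $\binom{n-u}{k}/\binom{n}{k}\le (1-u/n)^k\le e^{-uk/n}\le e^{-2}$; a Chernoff bound makes $\Pr[X_S\ge u]$ exponentially small in $u=\Theta(n/k)$, and this beats the entropy term $\binom{n}{u}$, and a further union over the at most $n$ relevant values of $u$, comfortably enough to bring the overall failure probability below $1/(2n^2)$.

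The step I expect to be the main obstacle is exactly this last estimate: the set $S$ of unpaired women is not fixed in advance --- it is produced by the algorithm and is correlated with the lists --- so one cannot argue about a single ``typical'' $S$, and it is the union bound over all $\binom{n}{u}$ candidates that decouples $S$ from the lists. The heart of the lemma is that the exponential decay of $\Pr[X_S\ge u]$ (which is why one needs $uk/n$ bounded below by an absolute constant, i.e.\ $u=\Theta(n/k)$) dominates $\binom{n}{u}$ with room to spare; getting the constants to land on the clean bound $2n/k$ (rather than merely $O(n/k)$) is the calibration in the statement. Two smaller points also need care: the principle of deferred decisions must be invoked so that proposal targets really are independent uniform women, and the $k$ entries of a single man's list are sampled without replacement --- so $X_S$ is a sum of indicators that are dependent within each man but independent across men --- which only lowers the probability that a list avoids $S$, so the estimate above still goes through.
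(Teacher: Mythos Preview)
Your structural observations are correct: an unmatched man has been rejected by all $k$ women on his list, so his entire list lies among the matched women; hence the $X$ unmatched men all avoid the set $S$ of $X$ unmatched women, and decoupling $S$ from the lists via a union bound is a natural plan. The quantitative step, however, breaks. With $u=\lceil 2n/k\rceil$ and $|S|=u$, the miss probability for a single man is $p\approx(1-2/k)^k$, which increases toward $e^{-2}\approx 0.135$ as $k$ grows; thus $\expect[X_S]\approx (1-2/k)^k\, n$ is a \emph{constant fraction} of $n$, while $u/n=2/k\to 0$. The Chernoff tail $\Pr[X_S\ge u]$ is then either not small at all (for $k\gtrsim 17$ one even has $u<\expect[X_S]$, so $\Pr[X_S\ge u]=\Omega(1)$), or its large-deviation rate $D(u/n\,\|\,p)$ is too small to beat the entropy $H(u/n)$ coming from $\binom{n}{u}$. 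Concretely, already at $k=5$ one has $p\approx 0.078$ and $D(0.4\,\|\,0.078)\approx 0.40<0.67\approx H(0.4)$, so the union bound is $e^{\Theta(n)}$, not $o(1)$. Taking $|S|=X$ and summing over $X$ does not help, since the bottleneck is the smallest term $X\approx 2n/k$. What your method can deliver is a bound of $O((n\log k)/k)$ unmatched men (choose $u\approx c(n\ln k)/k$ with $c>2$ so that $p\le k^{-c}$ and the entropy is beaten); that is enough for the $O(1/k)$ conclusion of Theorem~\ref{thm:GS} when $k$ is a constant, but it cannot reach the explicit $2n/k$ of the lemma.

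The paper's argument is entirely different. It couples Gale--Shapley with a round-based process in which, each round, every still-single man proposes to an independent uniform woman (repeats allowed), and observes that this only raises the number of unmatched women. Writing $X^t$ for the number of unmatched women after $t$ rounds, it proves $\Pr[X^t>2n/t]\le t/n^3$ by induction on $t$: conditioning on $X^t\le 2n/t$ gives $\expect[X^{t+1}]<2n/(t+2)$, and Azuma's inequality on the Doob martingale obtained by revealing the round-$(t{+}1)$ proposals man by man (each revelation changes $X^{t+1}$ by at most~$1$) yields concentration within $2n/((t{+}1)(t{+}2))$ of the mean with failure probability $e^{-\Omega(n)}$. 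Setting $t=k$ gives the lemma with the constant~$2$.
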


\begin{proof}
Consider the following stochastic process:  in the first round, each
of the $n$ men chooses a woman independently  and uniformly at
random. For each consecutive round, for each woman that has been
chosen by at least one man, one of the men remains married to her
(arbitrarily chosen), and the others remain single and choose again.
This process repeats for $k$ rounds. This is modeled by the
functions $f^t:M\rightarrow W$, where in round $t$, $f^t$ maps each
single man to a woman uniformly at random and each married man to
the same woman.
Let $\mcF$ be the set of all possible allocation functions $f:M \rightarrow W$.

For ease of analysis, we assume that each man can choose the same woman again, as the number of free women in this case is an upper bound to the number of free women in the system where he can not.
Note that the number of unmatched men after $t$ rounds is identical to the number of unmatched men  after $t$ rounds of the Gale-Shapley process. This stochastic process, however, terminates after $k$ rounds, whereas the Gale-Shapley process can continue. As the number of matched men can only increase when more rounds are added, the number of unmatched men created by the process is an upper bound to the number of unmatched men created by the Gale-Shapley process.

Let $X_j^t$ be the indicator variable which is $1$ if woman $j$ is unassigned at the end of round $t$. Let $X^t = \sum_{j=1}^n X_j^t$.
The following claim implies Lemma  \ref{lemma:short}.
\begin{claim}
\label{claim:short}
For any constant $t$, $\Pr[X^t > \frac{2n}{t}] \leq \frac{t}{n^3}$.
\end{claim}
\begin{proof}
The proof is by induction. The base of the induction, $t=1$, is immediate.
For the inductive step, assume that after round $t$, $X^{t}=n/z$ (for some $z>0$).
In round $t+1$,
$\expect[X^{t+1}|X^{t}=\frac{n}{z}] = \frac{n}{z}(1-1/n)^{n/z}$.
 By the inductive hypothesis
\begin{equation}
 \Pr \left[X^{t} > \frac{2n}{t}\right] \leq \frac{t}{n^3}. \label{eq:greater}
\end{equation}

  For the rest of the proof, assume $X^t \leq \frac{2n}{t}$, and fix $X^t$ to be some such value. We get

\begin{align}
\expect[X^{t+1}|X^t \leq \frac{2n}{t}] \leq \frac{2n}{t}(1-1/n)^{2n/t}
<\frac{n}{t/2 \cdot e^{2/t}}
<\frac{2n}{t+2},\label{1}
\end{align}
using $e^x>1+x$.

Order the women arbitrarily, ($W = \{1,2,\ldots n\})$, and let $W_i = \{1, 2, \ldots i\}$. For $h \in F$, define the martingale
\begin{equation*}
Y^{t+1}_i(h) = \expect[X^{t+1}(f^{t+1})|f^{t+1}(j)=h(j) \text{ for all } j \in W_i],
\end{equation*}
where $X^{t+1}(f^{t+1})$ is the realization of $X^{t+1}$ given that the allocation vector is $f^{t+1}$.
Note that $Y^{t+1}_0(h)$ is the expected value of $X^{t+1}$ over all possible functions $f^{t+1}$; that is, the expected number of unmatched women after $t+1$ rounds. $Y^{t+1}_n(h)$ is the number of unmatched women after $t+1$ rounds when the allocation function is $h$. $X^{t+1}$ satisfies the Lipschitz condition, because if $h$ and $h'$ only differ on the allocation of one man, $|X^{t+1}(h) - X^{t+1}(h')| \leq 1$. Therefore
\begin{equation*}
|Y^{t+1}_{i+1}(h) - Y^{t+1}_i(h)| \leq 1,
\end{equation*}
(see, e.g. \cite{AS08}), and so we can apply Azuma's inequality, from which we get
\begin{equation*}
\Pr[|X^{t+1}-\expect[X^{t+1}]|> \lambda\sqrt{n}] < 2e^{-\lambda^2/2}.
\end{equation*}
Setting $\lambda = \frac{2\sqrt{n}}{(t+1)(t+2)}$, we have that

\begin{equation*}
\Pr \left[|X^{t+1}-\expect[X^{t+1}]|> \frac{2n}{(t+1)(t+2)}\right] <
2e^{-n/( 81t^4)},
\end{equation*}
for constant $t\geq 2$.

Therefore, since we assume that $\expect[X^{t+1}] <\frac{2n}{t+2}$,
\begin{equation}
\Pr\left[X^{t+1}> \frac{2n}{t+1}\right] < 2e^{-n/(81t^4)}
<\frac{1}{n^3}\label{2}.
\end{equation}

Note that
\begin{align*}
\Pr\left[X^{t+1}> \frac{2n}{t+1}\right] &= \Pr\left[X^{t+1}> \frac{2n}{t+1} | X^t \leq \frac{2n}{t}\right] \Pr\left[X^t \leq \frac{2n}{t}\right]\\
& +\Pr\left[X^{t+1}> \frac{2n}{t+1} | X^t > \frac{2n}{t}\right]\Pr\left[X^t > \frac{2n}{t}\right]
\end{align*}
From the induction hypothesis and Equations \eqref{eq:greater}, \eqref{1} and \eqref{2}, using the union bound,
\begin{equation*}
\Pr\left[X^{t+1}> \frac{2n}{t+1}\right]\leq\frac{t}{n^3}+\frac{1}{n^3}\leq \frac{t+1}{n^3}.
\end{equation*}
\end{proof}
The stochastic process for which Claim \ref{claim:short}  holds ends
at least as early as the Gale-Shapley algorithm with short lists;
therefore Claim \ref{claim:short} implies Lemma \ref{lemma:short}
\end{proof}

\subsubsection{Removal due to the number of rounds being limited}

Because we stop the \textsc{LocalAGS} algorithm after a constant
($\ell$) number of rounds, it is possible that some men who ``should
have been'' matched are disqualified because they were rejected by
their $i^{th}$ choice in round $\ell$ ($i<k$). We show that this
number cannot be very large.

 Let $R_i$ denote the number of men rejected in round $i \geq 1$.
\begin{obs}
\label{obs:Xr}
$R_i$ is monotone decreasing in $i$.
\end{obs}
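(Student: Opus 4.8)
The plan is to track, across the rounds of \textsc{AbridgedGS}, how many men are engaged and how many have run out of their list, and to show that $R_i$ equals a difference of monotone quantities. For a round $i\ge 1$, let $M_i$ be the number of men who are engaged at the end of round $i$ — equivalently, the number of women who have received at least one proposal during the first $i$ rounds — and let $E_i$ be the number of men who, at the end of round $i$, are free and have already been rejected by all $k$ women on their list. Both sequences are non-decreasing: a woman who has ever received a proposal is engaged in every later round (she may change partner, but never becomes free again), and a man who has exhausted his list never proposes again and hence can never become engaged, so he stays in the free-and-exhausted set. Set $M_0=E_0=0$.

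Next I would express $R_i$ in terms of these. The men who propose in round $i$ are precisely those who, at the end of round $i-1$, are neither engaged nor free-and-exhausted; so their number is $F_i := n - M_{i-1} - E_{i-1}$. During round $i$, say $A_i$ of these proposers are accepted and $B_i$ previously engaged men are displaced; then $M_i = M_{i-1} + A_i - B_i$. A man is counted in $R_i$ exactly when he proposed in round $i$ and was rejected, or when he was engaged at the start of round $i$ and got displaced; hence
\[
R_i \;=\; (F_i - A_i) + B_i \;=\; F_i - (M_i - M_{i-1}) \;=\; n - M_i - E_{i-1}.
\]

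Finally, I would conclude directly from this identity: for every $i\ge 1$,
\[
R_i - R_{i+1} \;=\; (M_{i+1}-M_i) + (E_i - E_{i-1}) \;\ge\; 0,
\]
since both differences on the right are non-negative by the monotonicity established in the first step. Thus $R_i \ge R_{i+1}$, which is the claim.

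The only point that needs genuine care, rather than bookkeeping, is the displacement term $B_i$: a man counted in $R_i$ need not be a round-$i$ proposer, because a previously engaged man can be bumped by a better proposal. The relation $M_i = M_{i-1} + A_i - B_i$ is exactly what absorbs this, and it rests on the fact that each displacement in round $i$ is ``paid for'' by a distinct woman who accepted a strictly preferable new proposal in that same round. I would also verify the boundary case $i=1$: there $M_0=E_0=0$, the identity reads $R_1 = n - M_1$, and indeed all $n$ men propose in round $1$ and exactly $M_1$ of them end it engaged.
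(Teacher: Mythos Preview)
Your proof is correct. The paper states this as an observation with no proof whatsoever, so there is nothing to compare against; your identity $R_i = n - M_i - E_{i-1}$ together with the monotonicity of $M_i$ and $E_i$ is a clean and complete justification.

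The one-line intuition the authors presumably have in mind is essentially a repackaging of your computation: the men who propose in round $i+1$ are precisely those rejected in round $i$ who still have someone left on their list, so the number of proposers satisfies $F_{i+1}\le R_i$; and in any round the number of rejections is at most the number of proposals (each woman who receives proposals keeps one suitor), so $R_{i+1}\le F_{i+1}$. Chaining these gives $R_{i+1}\le R_i$. This is the same argument as yours---your identity unpacks as $R_{i+1}=F_{i+1}-(M_{i+1}-M_i)$ and $F_{i+1}=R_i-(E_i-E_{i-1})$---just without introducing $M_i$ and $E_i$ by name. Your version has the advantage of making the displacement bookkeeping explicit, which is the only genuinely non-obvious point.
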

\begin{claim}
\label{claim:limit}
The number of men rejected in round $i$ is at most $n\frac{k}{i}$.
\end{claim}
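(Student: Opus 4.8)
\textbf{Proof proposal for Claim~\ref{claim:limit}.}
The plan is to count the total number of ``approaches'' (proposals) made by men across all rounds, and to turn a lower bound on that quantity into an upper bound on $R_i$. First I would observe that a man who is rejected in round $i$ must have been rejected in each of rounds $1, \dots, i-1$ as well (a man only keeps moving down his list after being rejected), and hence — since each rejection in a given round corresponds to that man making a fresh proposal in the next round — every man counted in $R_i$ has made at least $i$ distinct proposals by the time round $i$ concludes. Therefore the total number of proposals made over the course of the algorithm is at least $i \cdot R_i$. On the other hand, by the principle of deferred decisions each man's list consists of $k$ women chosen (uniformly) from $W$, and once a man is rejected $k$ times he is done; so no man ever makes more than $k$ proposals, giving a global upper bound of $nk$ on the total number of proposals. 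Combining the two bounds yields $i\cdot R_i \le nk$, i.e. $R_i \le nk/i$, which is exactly the claim.

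To make the first inequality clean I would phrase it via the sequence $R_1 \ge R_2 \ge \dots$ guaranteed by Observation~\ref{obs:Xr}: the set of men rejected in round $i$ is contained in the set of men rejected in round $i-1$ (indeed, to be rejected in round $i$ one must still be unassigned going into round $i$, which requires having been rejected in round $i-1$), so each of the $R_i$ men rejected in round $i$ contributes one proposal to each of rounds $1$ through $i$. Summing the round-by-round proposal counts, $\sum_{j\ge 1}(\text{\# proposals in round } j) \ge \sum_{j=1}^{i} R_j \ge i R_i$ by monotonicity; and the same sum is at most $nk$ since each of the $n$ men proposes at most $k$ times total.

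The step I expect to require the most care is justifying that ``rejected in round $i$'' implies ``made a proposal in each of rounds $1,\dots,i$'' in the precise setting of \textsc{LocalAGS}/\textsc{AbridgedGS} — in particular handling the boundary cases of men who get rejected exactly $k$ times (and so become permanently unassigned rather than disqualified) and men who become tentatively matched and then later bumped. The key point is that being rejected in round $i$ means the man was unassigned at the start of round $i$ and went to his next-preferred woman, which in turn means he was rejected (or bumped, which counts identically for this argument) in round $i-1$; iterating this down to round $1$ establishes the chain. Once that structural fact is in hand, the double-counting argument is immediate and requires no probabilistic input beyond the (deferred-decision) fact that each list has length $k$. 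I would then remark that combined with Observation~\ref{obs:Xr} this also gives $R_\ell \le nk/\ell$, which feeds into the proof of Lemma~\ref{lemma:GSshai} via the choice $\ell = 2k^2$ (yielding at most $n/(2k)$ disqualified men from this source).
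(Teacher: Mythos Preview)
Your core inequality chain $\sum_{j=1}^{i} R_j \leq nk$ together with $\sum_{j=1}^{i} R_j \geq i R_i$ (from Observation~\ref{obs:Xr}) is exactly the paper's argument, and it is correct as stated.

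However, the scaffolding you build around it is wrong, and you should discard it. The claim ``the set of men rejected in round $i$ is contained in the set of men rejected in round $i-1$'' is false, as is the consequence ``every man counted in $R_i$ has made at least $i$ distinct proposals''. The culprit is precisely the bumping case you flag and then dismiss: a man can be tentatively accepted in round $1$, sit matched for many rounds, and then be displaced (and hence rejected) in round $i$ having made only one proposal. So ``rejected in round $i$'' does \emph{not} imply ``unassigned at the start of round $i$'', and there is no set containment. Observation~\ref{obs:Xr} is true for a different reason (the number of matched women is nondecreasing, and a man is rejected in round $i$ iff he is unmatched at the end of round $i$ but was not already ``done'' before round~$i$), not because of nested rejection sets.

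The fix is simply to delete the containment/story and argue, as the paper does, directly from rejection counts: each man is rejected at most $k$ times over the whole run, so $\sum_{j\geq 1} R_j \leq nk$; by monotonicity $\sum_{j=1}^{i} R_j \geq i R_i$; hence $R_i \leq nk/i$. No proposal-counting detour is needed.
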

\begin{proof}
As each man  can be rejected at most $k$ times, the total number of
rejections possible is $kn$. The number of men who can be rejected
in round $i$ is at most
\begin{align}
{R_{i} \leq kn - \dsum_{j=1}^{i-1} R_j}  
\;\;\;\Rightarrow \;\;\;   {R_{i} \leq kn - (i-1) R_{i}} 
\;\;\;\Rightarrow\;\;\; {R_{i} \leq n\frac{k}{i}}\;\;\;
\label{eqXr}
\end{align}
Where Inequality \ref{eqXr} is due to Observation \ref{obs:Xr}.
\end{proof}

\section{Some general properties of the Gale-Shapley algorithm}
\label{section:global_GS}
We use the results and ideas of Section \ref{section:stable_matchings} to prove some interesting features of the (general) Gale-Shapley stable matching algorithm, when the mens' lists are of length at most $k$. (These results immediately extend to our local version of the algorithm, \textsc{LocalAGS}.) Note that the proof of Claim \ref{claim:limit} makes no assumption on how the men's selection is made, and therefore, Claim \ref{claim:limit} implies that as long as each man's list is bounded by $k$, if we run the Gale-Shapley for $\ell$ rounds, at most $\frac{nk}{\ell}$ men will be rejected in that round. This immediately gives us an additive approximation bound for the algorithm if we stop after $\ell$ rounds.
\begin{corollary}[to Claim \ref{claim:limit}]
Assume that the output of the Gale-Shapley algorithm on a stable matching problem, where the preference lists of the men are of length at most $k$, is a matching of size $M^*$. Then, stopping the Gale Shapley algorithm after $\ell$ rounds will result in a matching of size at least $M^* - \frac{nk}{\ell}$.
\end{corollary}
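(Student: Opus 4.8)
The plan is to show that the only men the truncated run ``loses'' relative to the full run are those rejected in the very last round, and then to control that number with Claim~\ref{claim:limit}. First I would observe that stopping the Gale--Shapley process after $\ell$ rounds does not alter the first $\ell$ rounds at all: the configuration of proposals, tentative matches and (permanent) rejections at the end of round $\ell$ is the same whether or not one continues. Consequently, every man who at the end of round $\ell$ is single either was rejected during round $\ell$ (these are the disqualified men, together with any man who exhausted his $k$-th choice in round $\ell$) or had already exhausted all $k$ women on his list in an earlier round. Since rejections in the men-propose process are irrevocable, a man who has exhausted his list by round $\ell$ remains unmatched for the rest of the full run as well, hence is one of the $n - M^*$ men left unmatched by the complete algorithm.

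From this it follows that, writing $M_\ell$ for the size of the matching produced after $\ell$ rounds,
\[
n - M_\ell \;\le\; R_\ell + (n - M^*),
\]
where $R_\ell$ is the number of men rejected in round $\ell$: the single men at round $\ell$ are covered by the at most $R_\ell$ men rejected that round together with the men who have already run out of options, and the latter set injects into the set of men unmatched by the full algorithm. Rearranging gives $M_\ell \ge M^* - R_\ell$, and now Claim~\ref{claim:limit} (which, as noted, makes no assumption on how the men choose and holds for every execution) yields $R_\ell \le nk/\ell$, so $M_\ell \ge M^* - nk/\ell$, as desired. I would also remark that $M_\ell \le M^*$, since in the men-propose dynamics a woman who has received a proposal stays matched forever, so the matching size is nondecreasing from round to round; thus this is a genuine additive-approximation statement.

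The only step that needs any care is the containment of the ``already exhausted by round $\ell$'' men in the set of men left unmatched by the full run; this is exactly where one uses that rejections are permanent and that truncation is a purely passive stopping rule that does not perturb the earlier dynamics. Everything else is bookkeeping over the three possible statuses of a man at the end of round $\ell$ (matched, disqualified, exhausted), and no probabilistic argument enters the corollary — Claim~\ref{claim:limit} is a deterministic counting bound, and the extension to \textsc{LocalAGS} is immediate since that algorithm simulates exactly the $\ell$-round truncation locally.
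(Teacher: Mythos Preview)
Your proof is correct and follows exactly the line the paper intends: the paper treats this corollary as immediate from Claim~\ref{claim:limit} and does not spell out a separate argument, and you have filled in precisely the bookkeeping that makes the implication rigorous (partitioning the single men at round~$\ell$ into those rejected in round~$\ell$ and those already exhausted, and injecting the latter into the $n-M^*$ unmatched men of the full run). There is nothing to add.
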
 
 We would like to also provide a multiplicative bound.
Henceforth, we assume that the mens' list length is bounded by $k$, but make no other assumptions.
For each round $i$, let $M_i$ be the size of the current matching; let $D_i$ be the number of men who have already approached all $k$ women on their list and have been rejected by all of them; let $C_i$ be the number of men who were rejected by women in round $i$, but have approached fewer than $k$ women so far; recall that $R_i$ is the number of men rejected in round $i$. Denote  the size of the matching returned by the un-truncated Gale-Shapley algorithm by $M^*$.

\begin{claim}
\label{claim:ck}
$C_{k+1} \leq kM^*$.
\end{claim}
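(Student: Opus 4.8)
The plan is to prove $C_{k+1}\le kM^{*}$ by a charging argument: I will associate to every man counted by $C_{k+1}$ a woman who is matched at the end of round $k+1$, in such a way that no such woman receives the charge of more than $k$ men. Since the number of women matched at the end of round $k+1$ equals $M_{k+1}$, and the matching only grows over the rounds of Gale--Shapley and converges to $M^{*}$ (so $M_{k+1}\le M^{*}$), this yields $C_{k+1}\le kM_{k+1}\le kM^{*}$.

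The first thing I would establish is that every man $m$ counted by $C_{k+1}$ is matched at the end of some round in $\{1,\dots,k\}$. At the end of round $k+1$ such an $m$ is unmatched but not dead, so he has been rejected by exactly the first $a$ women on his list for some $a\le k-1$ (a man is alive iff he has been rejected fewer than $k$ times, and the women who have rejected him always form a prefix of his list). In each of the $k+1$ rounds, $m$ either is matched at the start of the round and proposes to no one, or is unmatched at the start and proposes to exactly one new woman; hence the number of rounds in which he proposes is exactly $a\le k-1$, so he is matched at the start of at least $(k+1)-(k-1)=2$ of the rounds, and those rounds lie in $\{2,\dots,k+1\}$. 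Therefore $m$ is matched at the end of at least one round in $\{1,\dots,k\}$.

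Next I would set up the charge. For each $m$ counted by $C_{k+1}$, let $t_m$ be the last round at whose end $m$ is matched and let $w_m$ be the woman $m$ holds at the end of round $t_m$; this is well defined by the previous paragraph, and $t_m\le k$ since $m$ is unmatched at the end of round $k+1$. By maximality of $t_m$, $m$ is unmatched at the end of round $t_m+1$, so he was bumped from $w_m$ during round $t_m+1$; in particular $w_m$ received a proposal by round $t_m$, and in men-proposing Gale--Shapley a woman who has ever received a proposal stays matched in every later round, so $w_m$ is matched at the end of round $k+1$. Charge $m$ to $w_m$. To bound the load on a woman $w$, observe that every $m$ charged to $w$ is the unique man $w$ holds at the end of round $t_m$, with $t_m\in\{1,\dots,k\}$; since $w$ holds only one man at the end of each round, the map $m\mapsto t_m$ is injective on the set of men charged to $w$, so that set has size at most $k$. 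Summing over all women matched at the end of round $k+1$ gives $C_{k+1}\le kM_{k+1}\le kM^{*}$.

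I expect the only real obstacle to be the structural fact of the second paragraph --- that a man counted by $C_{k+1}$ genuinely was matched during the first $k$ rounds. This is exactly where the list length $k$ and the round count $k+1$ interact: a man who had never been matched would have proposed to a fresh woman in every round, so after $k$ rounds he would have exhausted his $k$-element list and be dead, hence not be counted by $C_{k+1}$. Everything else is bookkeeping about the monotone behavior of women in the men-proposing process (once proposed to, a woman is matched thereafter, and her held man only improves), and about the fact that each round contributes at most one man to the charge of a given woman.
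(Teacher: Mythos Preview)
Your proof is correct, but it takes a genuinely different route from the paper's.

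The paper argues by counting rejections globally. Since a man can be rejected at most $k$ times, the total number of rejections over all rounds is at most $kn$. For $i\le k$ no man is yet dead ($D_{i-1}=0$), so every unmatched man at the end of round $i$ was rejected in that round, giving $R_i = n - M_i \ge n - M^*$. Summing, $\sum_{i=1}^k R_i \ge kn - kM^*$, and hence
\[
C_{k+1}\le R_{k+1}\le kn-\sum_{i=1}^k R_i \le kM^*.
\]
This is a three-line double-counting argument.

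Your approach is a charging argument: you show that any man in $C_{k+1}$ must have been held by some woman at the end of some round in $\{1,\dots,k\}$ (this is where the interplay between list length $k$ and $k+1$ rounds is used), pick the last such round, and charge him to that woman. Since a woman holds at most one man per round and stays matched once matched, each matched woman absorbs at most $k$ charges, and only the $M_{k+1}\le M^*$ matched women can be charged. Both arguments rely on the monotonicity $M_{k+1}\le M^*$, but yours also uses the per-round monotonicity of women's status, whereas the paper only uses the global rejection budget $kn$.

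What each buys: the paper's proof is shorter and immediately extends to bound $R_{k+1}$ (not just $C_{k+1}$). Your argument is more structural---it identifies \emph{which} women are responsible for the surviving men, and it actually gives the slightly sharper $C_{k+1}\le kM_{k+1}$ before relaxing to $kM^*$. Either is perfectly acceptable here.
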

\begin{proof}
 Note that $R_i = C_i +D_i-D_{i-1}$. For $i<k, D_i=0$. As $M_i$ is monotonically increasing, $\forall i\leq k, R_i\geq n-M^*$.
$$\dsum_{i=1}^{k} R_i \geq kn-kM^*.$$
Hence,
$$C_{k+1} \leq  kn -\dsum_{i=1}^k R_i \leq kM^*. $$
\end{proof}

\begin{corollary}
For every $\eps>0$, there exists a constant $\ell>0$ such that $C_{\ell} \leq \eps M^*$.
\end{corollary}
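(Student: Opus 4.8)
The plan is to leverage the two facts established just before the corollary: the sequence $C_i$ counts men rejected in round $i$ who have not yet exhausted their list of $k$ women, and Claim~\ref{claim:ck} gives the single-round bound $C_{k+1} \leq kM^*$. The goal is to show that iterating this argument over blocks of $k$ rounds drives $C_\ell$ down geometrically, so that after enough blocks it falls below $\eps M^*$.

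First I would observe that the argument of Claim~\ref{claim:ck} is not special to starting from round $1$: if we restrict attention to the men who are still ``active'' (matched to no one and not yet disqualified) at the start of round $j$, then the same counting --- each such man can still be rejected at most $k$ more times before exhausting his list, and each round at least $C_j^{\mathrm{active}} - (\text{new matches})$ of them are rejected again --- yields that $C_{j+k} \leq k \cdot (\text{number of such men who eventually get matched}) \leq k \cdot (\text{increase in matching size over those rounds})$. More simply: let $B_j$ denote the number of men who are unmatched but not yet disqualified at the start of round $j$, so $B_1 = n$. By exactly the reasoning of Claim~\ref{claim:ck} applied to the window of rounds $[j, j+k)$, one gets $C_{j+k} \leq k(M^* - M_j) + (\text{something controlled})$; since $M_j$ is monotonically increasing toward $M^*$ and the number of matched men is bounded by $M^*$, the ``pool'' of men who can still be rejected in round $j+k$ shrinks. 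I would make this precise by showing $C_{j+k} \le C_j - (\text{men matched in } [j,j+k)) \cdot (\text{constant})$, or, cleanest of all, that $C_{(r+1)k+1} \le C_{rk+1}$ combined with a strict-decrease step.

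Actually the slickest route is this: apply Claim~\ref{claim:ck} verbatim but with the matching instance ``restarted'' at round $rk+1$, treating the currently-unmatched, non-disqualified men as a fresh population of size $C_{rk+1}$ (at most). Each of these men has at least one unused slot but at most $k$ unused slots, so the same telescoping inequality $R_i = C_i + D_i - D_{i-1}$ together with monotonicity of $M_i$ gives $C_{(r+1)k+1} \leq k \cdot (M^* - M_{rk+1})$. Summing the per-block matching gains telescopes to at most $M^*$, forcing $\sum_r (M^* - M_{rk+1})$-type quantities to be summable; more directly, since $M^* - M_{rk+1} \to 0$ and in fact $\sum_r R_{rk+1}$-style sums are bounded, there is a block index $r_0 = O(1/\eps)$ (depending only on $\eps$, since $k$ is a constant) after which $M^* - M_{r_0 k + 1} \le \eps M^*/k$, whence $C_{r_0 k + 2} \le \eps M^*$. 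Setting $\ell = r_0 k + 2$, a constant, completes the proof.

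The main obstacle I expect is making the ``restart'' bookkeeping airtight: one must verify that the quantities $D_i, C_i, R_i$ behave correctly when re-indexed from an intermediate round, in particular that a man counted in $C_{rk+1}$ genuinely still has between $1$ and $k$ available women, so that the inequality $R_i \ge C_i - (\text{matching increase})$ from the proof of Claim~\ref{claim:ck} survives intact. Once that is checked, the geometric/telescoping decay is routine, and the constancy of $\ell$ follows immediately from $k = O(1)$ and $r_0$ depending only on $\eps$.
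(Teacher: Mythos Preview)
Your proposal has a genuine gap. The crucial step is the assertion that ``there is a block index $r_0 = O(1/\eps)$ after which $M^* - M_{r_0 k + 1} \le \eps M^*/k$.'' You justify this by telescoping: $\sum_r (M_{(r+1)k+1} - M_{rk+1}) \le M^*$. But telescoping only tells you that some \emph{increment} $M_{(r+1)k+1} - M_{rk+1}$ is small; it says nothing about the \emph{gap} $M^* - M_{rk+1}$. A priori $M_i$ could stall far below $M^*$ for many rounds (indeed, Gale--Shapley with bounded lists can take $\Omega(n)$ rounds to terminate), so the gap being small after a constant number of blocks is precisely the content of the corollary --- you are assuming what you need to prove. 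Moreover, even your intermediate restart claim $C_{(r+1)k+1} \le k(M^* - M_{rk+1})$ is not obviously true: the proof of Claim~\ref{claim:ck} uses $D_{i-1}=0$ for $i\le k$ to get $R_i = n - M_i \ge n - M^*$, and once $D_{i-1}>0$ this inequality weakens to $R_i = n - M_i - D_{i-1}$, so the bookkeeping does not transfer cleanly.

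The paper sidesteps both issues with a potential-function argument rather than a block restart. It sets $L_i$ to be the total number of rejections still possible from round $i$ onward, so $L_i \le k(M_i + C_i) \le k(M^* + C_i)$ and $L_{i+1} \le L_i - C_i$ (since $R_i \ge C_i$). The key observation is that \emph{as long as} $C_i \ge \eps M^*$, one has $M^* \le C_i/\eps$ and hence $L_i \le k(1+1/\eps)C_i$, which turns $L_{i+1} \le L_i - C_i$ into a genuine contraction $L_{i+1} \le L_i\bigl(1 - \frac{1}{k(1+1/\eps)}\bigr)$. So $L_i$ decays geometrically until the first round where $C_i < \eps M^*$, and since $C_i \le L_i$ this happens within $O\bigl(\frac{k}{\eps}\log\frac{k}{\eps}\bigr)$ rounds. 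This argument never needs to know that $M_i$ is close to $M^*$; it only uses the trivial bound $M_i \le M^*$.
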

\begin{proof}
Denote the maximum number of rejections possible from round $i$ onwards by $L_i$. 
Clearly, 
$$L_i \leq k(M_i+C_i) \leq k(M^* + C_i).$$
For all $i$ such that $C_{i} \geq \eps M^*$, we have
 $$L_i \leq \left(1+\frac{1}{\eps}\right)k C_i.$$
 
Therefore, from Claim \ref{claim:ck},
\begin{equation*}
L_{k+1} \leq \left(1+\frac{1}{\eps}\right)k^2 M^*.
\end{equation*}
Putting everything together, we have,
\begin{align}
L_{i+1}& \leq  L_i - C_i \notag \\ 
\Rightarrow L_{i+1} & \leq L_i\left(1-\frac{1}{k(1+\frac{1}{\eps})}\right) \notag \\
\Rightarrow L_{k+i+1} & \leq L_k\left(1-\frac{1}{k(1+\frac{1}{\eps})} \right)^i \notag\\
  & \leq  \left(1+\frac{1}{\eps}\right)k^2M^*\left(1-\frac{1}{k(1+\frac{1}{\eps})} \right)^i \notag\\ 
  & \leq 2k^2M^*e^{-\frac{i}{k(1+\frac{1}{\eps})}}, \notag
\end{align}

Taking $i=k(1+\frac{1}{\eps}) \log{\frac{2k^2}{\eps}}$ gives $C_{k+i+1}\leq L_{k+i+1} \leq \eps M^*$.
\end{proof}
This gives us,

\begin{theorem}
Consider a stable matching problem. Let each man's list be bounded by $k$. Denote the size of the stable matching returned by the Gale-Shapley algorithm by $M^*$. Then, if the process is stopped after $O(\frac{k}{\eps} \log{\frac{k}{\eps}})$ rounds, the matching returned  is at most a $(1+\eps)$-approximation to $M^*$, and has at most $\eps M^*$ unstable couples. 
\end{theorem}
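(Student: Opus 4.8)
The plan is to assemble the final theorem directly from the lemmas and corollaries that precede it, so the proof is essentially a bookkeeping argument with a choice of parameters. The statement has two assertions about the matching produced by stopping Gale--Shapley after $\ell = O(\frac{k}{\eps}\log\frac{k}{\eps})$ rounds: first, that its size is at least $\frac{1}{1+\eps}M^*$ (a $(1+\eps)$-approximation from below), and second, that it has at most $\eps M^*$ unstable couples. I would handle the two parts separately, and in both cases the key quantities are $C_\ell$, the number of men rejected in round $\ell$ who have not yet exhausted their list, and $D_\ell$, the number of men who have exhausted all $k$ choices. The first observation I would invoke is that the set of men left unmatched when we truncate at round $\ell$ is exactly the $D_\ell$ men who legitimately ran out of options together with (at most) the $C_\ell$ men who were ``in progress'' and got disqualified; men of the first kind are also unmatched in the full run, so the only loss relative to $M^*$ is at most $C_\ell$ men.

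\smallskip

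For the size bound, I would start from the Corollary preceding the theorem, which gives a constant $\ell_0 = k(1+\frac{1}{\eps})\log\frac{2k^2}{\eps} + k + 1 = O(\frac{k}{\eps}\log\frac{k}{\eps})$ with $C_{\ell_0} \le \eps M^*$. Running the truncated algorithm for $\ell = \ell_0$ rounds then produces a matching of size at least $M^* - C_{\ell_0} \ge M^* - \eps M^* = (1-\eps)M^*$, and since $1-\eps \ge \frac{1}{1+\eps}$ for $\eps \in (0,1)$ (or, if one prefers an exact $(1+\eps)$-ratio, simply rescale $\eps$ by a factor of $2$ at the outset, which only changes the hidden constants), this is a $(1+\eps)$-approximation. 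I should be slightly careful that $C_i$ is monotone in the sense needed, or rather that stopping at a round $\ell \ge \ell_0$ is at least as good as stopping at $\ell_0$ — this follows because additional rounds only add matched pairs, exactly the monotonicity remark used throughout Section~\ref{section:stable_matchings}. So the matched set at round $\ell$ contains the matched set at round $\ell_0$ minus nothing, and its complement is bounded by $C_{\ell_0}$.

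\smallskip

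For the unstable-couples bound, I would argue that in the truncated matching, an unstable couple $(m,w')$ can only arise from a man $m$ who was disqualified in round $\ell$ before reaching a woman he would have eventually been matched to (or who would have displaced someone); every man who was still ``active'' — i.e.\ not yet rejected $k$ times — at the moment of truncation is a potential source of instability, and there are at most $R_\ell$ such men being rejected in the final round, with $C_\ell$ of them having strictly fewer than $k$ rejections. A man who has been rejected by all $k$ women on his list creates no instability, since he prefers being unmatched to any of them and the definition of stability (Definition~\ref{def:stable}) is then vacuously satisfied for him. Hence the number of unstable couples is at most $C_\ell \le \eps M^*$ by the same Corollary, and the remark in \textsc{AbridgedGS} that ``limited to the matched participants, the matching is stable'' confirms that no other instabilities can appear. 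The main obstacle, and the step I would be most careful about, is making the second claim precise: I need to argue cleanly that truncation at round $\ell$ introduces \emph{only} instabilities involving men counted in $C_\ell$, i.e.\ that no already-matched man and no $k$-times-rejected man participates in a blocking pair of the truncated matching. This amounts to checking that the partial run of Gale--Shapley after $\ell$ rounds, restricted to the non-disqualified men, is itself ``stable so far'' in the usual Gale--Shapley invariant sense (women only improve, men only descend their lists), so any blocking pair must involve a man who was cut off mid-descent — and those are exactly the $C_\ell$ men. Once that structural fact is in hand, plugging in the Corollary's bound finishes both parts with the same $\ell$.
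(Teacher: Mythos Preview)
Your proposal is correct and matches the paper's approach exactly: the paper presents the theorem as an immediate consequence of the preceding Corollary $C_\ell \le \eps M^*$ (its entire ``proof'' is the phrase ``This gives us,''), and you have correctly unpacked both halves --- the size bound via $M_\ell \ge M^* - C_\ell$ and the instability bound via the Gale--Shapley invariant that only men cut off mid-list can participate in blocking pairs.

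One wording slip worth fixing: a man rejected by all $k$ women on his list does \emph{not} ``prefer being unmatched to any of them''; he prefers each of them to being unmatched. The reason he creates no blocking pair with them is exactly the invariant you invoke later for matched men --- each such woman rejected him for someone she prefers, and women only improve over rounds. Also note that $C_\ell$ literally bounds the number of \emph{men} who can appear in a blocking pair, and each such man may have up to $k-1$ unvisited women on his list, so the number of blocking \emph{pairs} is bounded by $(k-1)C_\ell$ rather than $C_\ell$; the paper elides this distinction as well, and since the extra factor of $k$ is absorbed by rescaling $\eps$ in the $O(\frac{k}{\eps}\log\frac{k}{\eps})$ round count, the theorem statement is unaffected.
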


\begin{corollary}
If both men and women have lists of length at most $k$, then for any $\eps$ there is an $(O(1), O(1), 0)$-LCA for stable matching which returns a matching that is at most a $(1 + \eps)$-approximation to the matching returned by the Gale-Shapley algorithm, and with at most an $\eps$-fraction of the edges being unstable.
\end{corollary}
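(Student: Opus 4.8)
The plan is to combine the multiplicative bound from the preceding theorem with the query-locality structure established for \textsc{LocalAGS}, under the additional hypothesis that both sides have lists bounded by $k$. When both the men's and the women's lists have length at most $k$, the underlying graph has maximum degree $k$, a fixed constant. This is the crucial difference from the setting of Theorem~\ref{thm:GS}: there, only the men's lists were bounded, and the women's degrees were binomially distributed, which is why we needed the Chernoff/Azuma machinery of Claim~\ref{claim:run} and only got an $O(\log n)$ bound holding with high probability. Here, a bounded-degree graph means every $d$-neighborhood has size at most $k^d$, a constant depending only on $k$ and $d$.

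First I would invoke the Theorem immediately above this corollary: since each man's list is bounded by $k$, running Gale-Shapley for $\ell = O(\tfrac{k}{\eps}\log\tfrac{k}{\eps})$ rounds produces a matching that is a $(1+\eps)$-approximation to $M^*$ with at most an $\eps$-fraction of couples unstable. Note that this theorem (and Claim~\ref{claim:limit} on which it rests) makes no probabilistic assumption whatsoever about how the men choose — it holds for arbitrary preference lists of length at most $k$ — so we may apply it verbatim in the both-sides-bounded setting. Fix such an $\ell$; it is a constant. Next I would appeal to the Claim proved just after the description of \textsc{LocalAGS} (the induction on $\ell$): the status of any man $m$ after $\ell$ rounds of \textsc{AbridgedGS} depends only on the men within distance $2\ell$ of $m$. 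So to answer a query about $m$, the LCA simulates $\ell$ rounds of Gale-Shapley restricted to $N_{2\ell}(m)$, exactly as in \textsc{LocalAGS}, returning $m$'s partner (or ``unassigned''/``disqualified'').

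The remaining point is the resource bound, and this is where the bounded-degree hypothesis does all the work. Since every vertex has degree at most $k$, we have $|N_{2\ell}(m)| \le k^{2\ell} = O(1)$ \emph{deterministically} — no high-probability caveat is needed, so the failure probability is $0$ rather than $1/n$. Simulating a constant number of rounds of Gale-Shapley on a vertex set of constant size takes constant time and constant space. Hence the mechanism is an $(O(1), O(1), 0)$-LCA. Its output is precisely the output of \textsc{AbridgedGS} run for $\ell$ rounds, which by the invoked Theorem is a $(1+\eps)$-approximation to the Gale-Shapley matching $M^*$ with at most an $\eps$-fraction of edges unstable. (One should replace $\eps$ by $\eps/2$, say, at the start, to absorb the two separate $\eps$-slacks into a single $\eps$, or simply note both bounds scale the same way.)

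I do not expect any serious obstacle here — the corollary is essentially a bookkeeping exercise combining two facts already proved. The only thing to be slightly careful about is the distinction between the \emph{size} approximation and the \emph{unstable-couples} count: both come directly from the invoked Theorem, so the corollary's two claims are simultaneously satisfied by the same truncation parameter $\ell$. The single genuinely new observation, and the reason the parameters improve from $(O(\log n), O(\log n), 1/n)$ to $(O(1),O(1),0)$, is that bounding the women's degrees by $k$ makes the neighborhood size deterministic and constant, eliminating the probabilistic analysis of Section~\ref{section:stable_matchings} entirely.
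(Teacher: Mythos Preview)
Your proposal is correct and is exactly the intended derivation; the paper states the corollary without proof, and you have spelled out precisely the two ingredients that make it immediate --- the preceding theorem's constant-round truncation bound (which, as you note, needs no distributional assumption on the lists) together with the deterministic $|N_{2\ell}(m)|\le k^{2\ell}$ bound that replaces Claim~\ref{claim:run} once both sides have degree at most $k$. The parenthetical about halving $\eps$ is unnecessary here, since the preceding theorem already delivers both the size approximation and the unstable-couple bound with the same $\eps$ and the same $\ell$.
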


\section{Ex-ante truthfulness of LocalAGS}
\label{section:ex_ante}
It is known that the Gale-Shapley men's courtship algorithm is strategy-proof for the men but not for the women (see, e.g., \cite {MSZ13}). Unfortunately, Algorithm \textsc{LocalAGS} is neither strategy-proof for the men nor for the women: for the women, this follows immediately from the fact that the men's courtship algorithm is not strategy-proof; for the men - a man who was rejected in round $\ell$ might prefer to not declare the woman from which he was rejected in that round. It is possible, though, to show that \textsc{LocalAGS} is  ex-ante truthful.
The men  have a preference relation over the women (and women over the men). However, it is not clear how to calculate the expected utility in this case, so for the purposes of the proof,  assume  the men have a utility function $u:W\rightarrow \mathbb{R}$. The utility of the men for women not among their first $k$ preferences is $0$. Note that  \textsc{LocalAGS} has no access to the utilities themselves, but only to the preferences, therefore the only way for a man to manipulate the algorithm is by misrepresenting his preference vector.

Assume that man $m$'s real utility function is $t_m$, but he declares $b_m$. Denote the set of all the possible preference vectors of the other players by $\mcB_{-m}$. Let $b_{-m} \in \mcB_{-m}$ be a single realization of the preference vectors of the other players. Let $b=(b_m, b_{-m})$. The outcome of instance $b$  is deterministically determined; denote the utility of $m$ for instance $b$ by $u_m(b)$. ($u_m(b) = t_m(w)$ if $m$ is paired with woman $w$ and $0$ otherwise.)  Denote the expected utility of $m$  when he bids $b_m$  by $\overline{u_m}(b_m) = \dsum_{b_{-m} \in \mcB_{-m}} u_m(b)\cdot Pr[b_{-m}]$.

\begin{definition}
A transformation of a utility function by swapping the utilities of two women is called a \emph{swap}. 
\end{definition}
\begin{obs}
It is possible to convert any preference vector $v$ to any other preference vector $v'$  by a finite series of swaps.
\end{obs}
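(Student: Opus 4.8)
The plan is to recognize the statement as a restatement of the elementary fact that transpositions generate the symmetric group, and to give a direct, constructive version of this via a sorting argument.

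The first step is to pin down how a single swap acts on a preference vector. A preference vector is the order induced by a utility function $u:W\to\mathbb{R}$, i.e.\ essentially a linear order on $W$. If we swap the two values $u(w_1)$ and $u(w_2)$, then for every other woman $w_3$ the multiset of utility values is unchanged and $u(w_3)$ is untouched, so the number of women ranked above $w_3$ does not change and $w_3$ keeps her place in the ranking; meanwhile $w_1$ and $w_2$ exchange places. Thus a swap realizes exactly a transposition of two entries of the preference vector, and conversely any transposition of two entries is realized by the corresponding swap.

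The second, main step is an induction on the number $d(v,v')$ of positions on which $v$ and $v'$ disagree. If $d(v,v')=0$ we are done with no swaps. Otherwise, let $j$ be the topmost position at which $v$ and $v'$ differ, let $w$ be the woman occupying position $j$ in $v'$, and let $j'>j$ be $w$'s position in $v$. The swap exchanging the women at positions $j$ and $j'$ of $v$ produces a vector $v''$ that agrees with $v'$ on all of positions $1,\dots,j$ and still agrees with $v'$ wherever $v$ already did below position $j$, so $d(v'',v')<d(v,v')$. By the inductive hypothesis $v''$ can be converted to $v'$ by a finite series of swaps, hence so can $v$; since $d(\cdot,\cdot)\le |W|$, the conversion uses at most $|W|$ swaps. (Equivalently, one may simply note that the group generated by all swaps acts as the full symmetric group on $W$, which is transitive, so every preference vector lies in a single orbit.)

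I do not expect a genuine obstacle here: the statement is a bookkeeping fact used only to reduce the analysis of an arbitrary misreport to that of a single swap. The one point requiring mild care is the correspondence in the first step when a man's utility function takes the value $0$ on many women (everyone outside his top $k$); there one should read ``preference vector'' as the object the mechanism actually receives and treat swaps among equally-valued women as no-ops, so that such swaps are simply never used in the conversion above.
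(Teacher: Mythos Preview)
Your argument is correct: a swap acts as a transposition on the ranking, and transpositions generate the symmetric group, which you verify by the standard selection-sort induction on the number of disagreeing positions. The paper, however, does not actually prove this observation; it is stated without proof as an evident fact (with the one-line remark afterward that any permutation of a fixed list of utility values is reachable by swaps), so your write-up already exceeds what the paper provides.
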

This implies that given that a utilities function has a fixed list of outcomes, it is possible to reach any permutation of the utilities from any other by a finite series of swaps.
\begin{claim}
\label{claim:mentruth}
Algorithm \textsc{LocalAGS} is ex-ante truthful for the men.
\end{claim}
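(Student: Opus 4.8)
The plan is to show ex-ante truthfulness by a local exchange argument: if bidding the true preference vector $t_m$ were not optimal for man $m$, then some other vector $b_m$ would give strictly higher expected utility $\overline{u_m}(b_m) > \overline{u_m}(t_m)$, and we reach a contradiction by moving from $b_m$ toward $t_m$ one swap at a time without decreasing the expected utility. By the observation preceding the claim, any preference vector can be transformed into any other by a finite sequence of swaps, so it suffices to prove the following single-swap statement: if $b_m$ and $b_m'$ differ by one swap that brings $b_m$ ``closer'' to the true order $t_m$ — i.e. $b_m'$ agrees with $t_m$ on the relative order of the two swapped women while $b_m$ does not — then $\overline{u_m}(b_m') \geq \overline{u_m}(b_m)$. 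Iterating this from $b_m$ to $t_m$ yields $\overline{u_m}(t_m) \geq \overline{u_m}(b_m)$ for every $b_m$, which is exactly ex-ante truthfulness.

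The heart of the matter is therefore a per-realization statement about \textsc{LocalAGS} (equivalently \textsc{AbridgedGS}, since they produce the same matching): fix $b_{-m}$, let $w$ and $w'$ be two women with $w \overset{t_m}{\succ} w'$, let $b_m$ be a declared list where $w'$ appears immediately before $w$, and let $b_m'$ be obtained by swapping them so that $w$ precedes $w'$. First I would argue that for each fixed $b_{-m}$, swapping to put the truly-preferred woman earlier can only help $m$: this is essentially the classical fact that in the men-proposing deferred-acceptance algorithm, a man weakly prefers the outcome of any ``more honest'' list. Concretely, one compares the runs of \textsc{AbridgedGS} on $(b_m, b_{-m})$ and $(b_m', b_{-m})$ and shows that $m$'s final partner under $b_m'$ is weakly preferred, under $t_m$, to his partner under $b_m$ — the only way the swap changes $m$'s trajectory is that he now proposes to $w$ before $w'$, and standard monotonicity/lattice arguments for deferred acceptance (together with the fact that truncation after $\ell$ rounds is the same in both runs, since only $m$'s own list changed and the round structure is identical) give the conclusion. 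Here one has to be slightly careful about the disqualification-in-round-$\ell$ and the ``rejected by $k$ women'' cases, assigning them utility $0$, but since $t_m$ assigns $0$ to any woman outside the first $k$ and to being unmatched, the inequality is preserved in those boundary cases as well.

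Having the per-realization inequality $u_m(b_m', b_{-m}) \geq u_m(b_m, b_{-m})$ for every $b_{-m} \in \mcB_{-m}$, I would take expectations over the prior on $b_{-m}$: since $\overline{u_m}(b_m) = \sum_{b_{-m}} u_m(b_m, b_{-m}) \Pr[b_{-m}]$ and likewise for $b_m'$, termwise domination gives $\overline{u_m}(b_m') \geq \overline{u_m}(b_m)$. Then, given an arbitrary declared list $b_m$, write a swap sequence $b_m = v_0, v_1, \ldots, v_r = t_m$ in which each step corrects one inversion relative to $t_m$ (such a sequence exists and consists only of ``improving'' swaps — this is just bubble sort toward the target order); applying the single-swap inequality along the chain yields $\overline{u_m}(t_m) = \overline{u_m}(v_r) \geq \overline{u_m}(v_{r-1}) \geq \cdots \geq \overline{u_m}(v_0) = \overline{u_m}(b_m)$, completing the proof.

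The main obstacle I anticipate is the per-realization monotonicity step: \textsc{AbridgedGS} is not literally the unbounded Gale–Shapley algorithm — it is truncated at $\ell$ rounds and men can be ``disqualified,'' so I cannot simply cite the textbook strategy-proofness of deferred acceptance as a black box. I expect to need a direct coupling of the two executions (on $b_m$ versus $b_m'$), tracking women's tentative partners round by round and showing that swapping an adjacent pair in $m$'s list only ever reorders which of $\{w, w'\}$ he proposes to first, so that at every round the set of proposals each woman receives is essentially unchanged except possibly at $m$; the truncation boundary then has to be handled by the case analysis above (disqualification and exhausting the list both yield utility $0 \le t_m(w)$ for the woman he ends up with under $b_m'$). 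Once this coupling lemma is in place, the rest is the routine expectation-and-swap-chain argument sketched above.
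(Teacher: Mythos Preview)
Your proposal has a genuine gap: the per-realization inequality you rely on is false for \textsc{LocalAGS}. The paper explicitly notes, in the paragraph preceding the claim, that \textsc{LocalAGS} is \emph{not} strategy-proof for the men---a man rejected in round $\ell$ can strictly benefit from misreporting. But your single-swap inequality $u_m(b_m', b_{-m}) \ge u_m(b_m, b_{-m})$ for every fixed $b_{-m}$, iterated along a bubble-sort chain to $t_m$, is exactly per-realization strategy-proofness. A minimal failure of the swap step: take $\ell=1$ (the phenomenon persists for any $\ell$ with a longer rejection chain), two men $m,m'$ each with declared list $(w_1,w_2)$, and $w_1$ preferring $m'$ to $m$. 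Under truth, $m$ proposes to $w_1$, is rejected in round $\ell$ and disqualified (utility $0$); under the swapped bid $(w_2,w_1)$ he proposes to $w_2$ in round $1$ and keeps her (utility $t_m(w_2)>0$). So the ``correcting'' swap from $(w_2,w_1)$ to $t_m=(w_1,w_2)$ strictly \emph{decreases} $m$'s utility on this realization. The coupling you anticipate cannot be repaired: proposing first to the truly-preferred woman and being rejected costs a round, and that delay is precisely what can push $m$ past the truncation boundary $\ell$. You correctly flagged the truncation as the obstacle, but it is fatal to the pointwise argument, not merely a boundary case.

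The paper's proof does not attempt any per-realization comparison. It works entirely at the ex-ante level, using the exchangeability of women under the $k$-uniform prior: given the two women $w_x,w_y$ swapped between $u^i$ and $u^{i+1}$, one relabels $w_x\leftrightarrow w_y$ in all other agents' preferences (and interchanges the two women's own preference lists). This relabeling leaves the prior distribution $\mcB_{-m}$ invariant, and the mechanism is symmetric under relabeling, so the roles of $u^i$ and $u^{i+1}$ are interchanged. The paper then argues that a supposed strict increase $\overline{u_m}(u^{i+1})>\overline{u_m}(u^i)$ along the swap chain from $t_m$ to $b_m$ is contradicted by the symmetry. The essential difference from your plan is that the argument lives entirely in expectation over the prior---which is necessary here, since the pointwise inequality you want simply does not hold for the truncated algorithm.
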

\begin{proof}
Assume that man $m$ gains by declaring his utility function to be $b_m$, over his true function, $t_m$, where $b_m$ is a permutation of $t_m$. That is, $\overline{u_m}(b_m)>\overline{u_m}(t_m)$. (Recall that $m$ can make no better manipulations.)


Take any legal chain of swaps from $t_m$ to $b_m$: $<t_m=u^1, u^2, \ldots u^i, \ldots, u^n=b_m>$. There must be two consecutive functions, $u^i$ and $u^{i+1}$ on the chain, for which 

\begin{equation}
\label{eqman1}
\overline{u_m}(u^{i+1})>\overline{u_m}(u^{i}),
\end{equation}
 from the transitivity of the relation ``$>$''. 
 Functions $u^i$ and $u^{i+1}$ differ in the utilities of two women, say $w_x$ and $w_y$: $u^i(w_x) = u^{i+1}(w_y)$, $u^i(w_y) = u^{i+1}(w_x)$.
 
 Denote by $\mcB'_{-m}$ the  set of all preferences which is reached by taking $\mcB_{-m}$ and interchanging women $w_x$ and $w_y$  wherever they appear. Likewise the preference vectors of women $w_x$ and $w_y$ are interchanged. 
  Note that $\mcB_{-m} = \mcB'_{-m}$, because they are both uniform distributions over all possible preference vectors. 
 By symmetry, it must hold that 
\begin{equation*}
\overline{u_m}(u^{i})>\overline{u_m}(u^{i+1}),
\end{equation*}
in contradiction to Equation \eqref{eqman1}.
\end{proof}

Similarly, \begin{claim}
\label{claim:womentruth}
Algorithm \textsc{LocalAGS} is ex-ante truthful for the women.
\end{claim}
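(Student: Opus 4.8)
The plan is to mirror the proof of Claim~\ref{claim:mentruth} essentially verbatim, exploiting the symmetry of the uniform distribution on the men's preference lists. The key observation is that although the women's preferences are arbitrary and fixed (not drawn from any distribution), the expectation defining ex-ante truthfulness for a woman $w$ is still taken over the random draw of the \emph{men's} preference lists, and that distribution is symmetric under relabeling of the women. So I would fix a woman $w$, let $t_w$ be her true utility function over the men and $b_w$ an alleged one; as in Claim~\ref{claim:mentruth}, since the only manipulation available to $w$ is to report a permutation of her preference vector, it suffices to rule out the case where $b_w$ is a permutation of $t_w$ with $\overline{u_w}(b_w) > \overline{u_w}(t_w)$.

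Next, I would invoke the same swap machinery: by the Observation following the definition of a swap, there is a chain of swaps $t_w = u^1, u^2, \dots, u^n = b_w$, and by transitivity of ``$>$'' there must be a consecutive pair $u^i, u^{i+1}$ differing only in the utilities assigned to two men $m_x, m_y$, with $\overline{u_w}(u^{i+1}) > \overline{u_w}(u^i)$. Then I would define $\mcB'_{-w}$ to be the set of all other-players' preference configurations obtained from $\mcB_{-w}$ by interchanging $m_x$ and $m_y$ everywhere they appear (both in the men's lists and in the other women's preference orders), and argue $\mcB_{-w} = \mcB'_{-w}$ as distributions: the men's lists are uniform over all length-$k$ vectors, so swapping two men's labels leaves this distribution unchanged, and the other women's preferences are summed over symmetrically in the same way the men's were in Claim~\ref{claim:mentruth}. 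By this symmetry, $\overline{u_w}(u^i) > \overline{u_w}(u^{i+1})$, a contradiction.

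The one subtlety I would be careful about — and which I expect to be the main obstacle — is checking that the symmetry argument is actually valid when the women's preferences are \emph{fixed and arbitrary} rather than random. In Claim~\ref{claim:mentruth} the symmetry came cleanly from both the men's lists and the women's preferences being uniform. Here, for woman $w$'s manipulation, her own preference is what's being varied, and we need the joint distribution of $(b_{-w})$ — the men's random lists together with the other women's fixed preferences — to be invariant under swapping $m_x \leftrightarrow m_y$. The men's part is invariant by uniformity; for the fixed part, swapping two men's identities in the other women's preference lists gives a \emph{different} fixed profile, but since $\overline{u_w}$ is defined by summing over $b_{-w} \in \mcB_{-w}$ and the swap is a bijection of $\mcB_{-w}$ onto itself (it just permutes which configurations we sum), the sum is unchanged. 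I would state this bijection explicitly and note that $u_w$ on the swapped instance equals $u_w$ on the original with $m_x, m_y$ interchanged in $w$'s own utility function, which is exactly the relationship between $u^i$ and $u^{i+1}$. Once that bookkeeping is pinned down, the contradiction follows exactly as before, so the proof is genuinely ``Similarly'' to Claim~\ref{claim:mentruth} and can be kept short.
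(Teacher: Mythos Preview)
Your proposal is correct and follows the same approach the paper intends: the paper explicitly omits the proof, stating only that it is ``similar to the proof of Claim~\ref{claim:mentruth}'', and your mirroring of that argument (swap two men $m_x, m_y$ instead of two women, and use invariance of the prior under this relabeling) is exactly that.

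One small clarification: your stated worry about the other women's preferences being ``fixed and arbitrary'' is a slight misreading of the paper's ex-ante framework. In Section~\ref{section:ex_ante} the expectation defining $\overline{u}$ is taken over $\mcB_{-m}$ (respectively $\mcB_{-w}$), which the paper treats as a \emph{uniform distribution over all possible preference vectors of the other players} --- this is precisely what makes the line ``$\mcB_{-m} = \mcB'_{-m}$, because they are both uniform distributions'' work in Claim~\ref{claim:mentruth}. So for the women's case, the other women's preferences are also integrated out uniformly, and the swap $m_x \leftrightarrow m_y$ is immediately measure-preserving on $\mcB_{-w}$ for the same reason as before; no separate bijection bookkeeping for a ``fixed part'' is needed. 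With that in mind, your proof can be shortened to a direct transcription of Claim~\ref{claim:mentruth} with the roles of men and women exchanged.
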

The proof is similar to the proof of Claim \ref{claim:mentruth}, and is omitted.

\section{Local machine scheduling} \label{section:scheduling}

In this section we consider the following job scheduling setting. There is a set
$\mathcal{M}$ of  $n$ machines (or ``bins'') and a set $\mathcal{J}$
of $m$ uniform jobs (or ``balls''). Each machine $i\in\mathcal{M}$ has an
associated capacity $c_i$ (also referred to as its
``speed''). We assume that the capacities are positive
integers. Given that $h_i$ jobs are allocated to machine $i$, its {\em load} is
$\ell_i=h_i/c_i$.  ($h_i$ is referred to as the
{\em height} of machine $i$.) The {\em utility} of machine $i$ is quasi-linear,
namely, when it has load $\ell_i$ and receives payment $p_i$ then its utility is
$u_i(\ell_i,p_i)=p_i-\ell_i$.


The {\em makespan} of an allocation is
$\max_i\{\ell_i\}=\max_i\{h_i/c_i\}$. In our setting, the players
are the machines and their private information is their true
capacities. Each machine $i$ submits a \emph{bid} $b_i$ (which
represents its capacity). Our goal is to elicit from the machines
the information about their capacities, in order to minimize the
makespan of the resulting allocation.
%


For any allocation algorithm $\mcA$, define $\mcA(b)$ to be the
allocation vector, which, given bid vector $b$,  assigns each job $j\in\mathcal{J}$ to a
unique machine $i\in\mathcal{M}$. Let $\mcA^j(b)$ be the machine to which job $j$ is allocated in $\mcA$. When the bids $b_{-i}$ are fixed, we sometimes omit them from the notation for clarity.

\paragraph{Related work}
Azar et al., \cite{ABK+99}  proposed the $\textsc{Greedy}[d]$
algorithm in the online setting, where $m$ balls need to be
allocated to $n$ bins, with the objective of minimizing the
makespan: each ball chooses, uniformly at random, $d$ bins, and
allocates itself to the least loaded bin among its $d$ choices at
the time of its arrival.  They also showed that the
maximal load is $\Theta(m/n) + (1+o(1)) \ln\ln{n}/\ln{d}$. A large
volume of work has been devoted to variations on this problem, such
as adding weights to the  balls  \cite{TW07};
and variations on the algorithm, such as the non-uniform ball
placement strategies of V{\"o}cking \cite{Vocking03}. Of particular
relevance to this work is the case of non-uniform bins:  Berenbrink
et al. \cite{BBFN14}, showed that in this case the maximum load can
also be bounded by $\Theta(m/n) + O(\ln\ln{n})$.
Hochbaum and Shmoys \cite{HS88} showed a PTAS for scheduling on
related machines. Lenstra et al. \cite{LST87}, presented a
$2$-approximation algorithm for scheduling on unrelated machines and
showed that the optimal allocation is not approximable to within
$\frac{3}{2}-\epsilon$. The problem of finding a truthful mechanism
for scheduling (on unrelated machines) was introduced by  Nisan and
Ronen,\cite{NR99}, who showed an $m$-approximation to the problem,
and a lower bound of $2$. Archer and Tardos \cite{AT01} were the
first to tackle the related machine case; they showed a randomized
$3$-approximation polynomial algorithm and a polynomial pricing
scheme to derive a mechanism that is truthful in expectation. Since
then, much work has gone into finding mechanisms with improved
approximation ratios, until Christodoulou and Kov{\'a}cs \cite{CK10}
recently settled the problem by showing a deterministic PTAS, and a
corresponding mechanism that is deterministically truthful. Babaioff et al. \cite{BKS10}, showed how to transform any monotone allocation rule for single parameter agents to a truthful-in-expectation mechanism.

\begin{definition}(Monotonicity)
\label{defn1} A randomized allocation function $\mcA$ is
\emph{monotone in expectation} if for any machine $i$, and any bids
$b_{-i}$ of the other machines, the expected load of machine $i$,
$E[\ell_i(b_i, b_{-i})]$, is a  non-decreasing function of $b_i$.\\
A randomized allocation function $\mcA$ is \emph{universally
monotone } if for any machine $i$, and any bids $b_{-i}$ of the
other machines, the load of machine $i$, $\ell_i(b_i, b_{-i})$, is
a non-decreasing function of $b_i$ for any realization of the
randomization of the allocation function.
\end{definition}
%
%
Given an allocation function $\mcA$, we would like to provide a
payment scheme $\mcP$ to ensure that our mechanism $\mcM =
(\mcA,\mcP)$ is truthful. It is known that a necessary and sufficient condition is
that the allocation function $\mcA$ is monotone.

\begin{theorem} \cite{Mye81,AT01})
\label{thm:AT} The allocation  algorithm $\mcA$ admits a payment
scheme $\mcP$ such that the mechanism $\mcM = (\mcA,\mcP)$ is
truthful-in-expectation (universally truthful) if and only if $\mcA$
is monotone in expectation (universally monotone).
\end{theorem}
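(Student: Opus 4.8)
This is the classical characterization of truthful single‑parameter mechanisms due to Myerson \cite{Mye81} and Archer--Tardos \cite{AT01}, and I would prove it along those lines. Fix an agent (machine) $i$ and the other bids $b_{-i}$, and abbreviate by $w(b_i)$ the number of jobs $\mcA$ assigns to $i$ when $i$ reports $b_i$; the load of $i$ is $w(b_i)$ divided by the fixed factor given by $i$'s true capacity, so ``$\mcA$ is monotone'' and ``$w$ is nondecreasing in $b_i$'' are the same statement. Write $p(b_i)$ for $i$'s payment; an agent with true capacity $c$ then has utility $p(b_i)-w(b_i)/c$ when she reports $b_i$, and truthfulness means this is maximized at $b_i=c$. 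The plan is to reduce both randomized claims to the deterministic statement ``$w$ admits a payment $p$ with $(w,p)$ truthful iff $w$ is nondecreasing, in which case $p$ is determined by $w$ up to an additive term depending only on $b_{-i}$'', and then to prove that deterministic statement.

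\textbf{The two reductions.} For the \emph{universal} version I would view the randomized $\mcA$ as a distribution over deterministic allocations indexed by its coin string $r$: universal monotonicity says each realized $w^r(\cdot)$ is nondecreasing, so the deterministic statement gives, for each $r$, a truthful payment $p^r$ computed from realization $r$ alone, and ``sample $r$, then run the deterministic mechanism $r$'' is then universally truthful; conversely every coin‑slice of a universally truthful mechanism is a deterministic truthful mechanism, hence monotone. For the \emph{truthful‑in‑expectation} version, the utility $p-w/c$ is \emph{linear} in the pair (payment, work), so an agent's expected utility depends on her report only through $\bar p(b_i)-\bar w(b_i)/c$, where $\bar w(b_i)=\expect[w^r(b_i)]$ and $\bar p(b_i)=\expect[p^r(b_i)]$; thus truthfulness‑in‑expectation is exactly deterministic truthfulness of the pair $(\bar w,\bar p)$, and monotonicity‑in‑expectation is deterministic monotonicity of $\bar w$. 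The only extra step is, given the payment $\bar p$ that the deterministic statement associates to $\bar w$, to realize it by genuine per‑realization payments: take $p^r(b_i)$ to be Myerson's formula applied to the realized function $w^r(\cdot)$, and since that formula is linear in the work function, $\expect[p^r(b_i)]=\bar p(b_i)$ --- and this requires no monotonicity of the individual $w^r$.

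\textbf{Proving the deterministic statement.} For necessity, take two capacities $c<c'$ and add the constraint that an agent of true capacity $c$ weakly prefers reporting $c$ to reporting $c'$ to the symmetric constraint for an agent of true capacity $c'$; the payments cancel, leaving $(w(c')-w(c))(1/c-1/c')\ge 0$, i.e.\ $w$ is nondecreasing. The same two families of inequalities, read for arbitrary pairs of reports, squeeze $p(c')-p(c)$ between two multiples of $w(c')-w(c)$; letting the reports approach one another and using that a bounded monotone function has only countably many discontinuities and is differentiable almost everywhere, one integrates to recover Myerson's payment identity and so determines $p$ from $w$ up to a $b_{-i}$‑only term. (Since the capacities here are integers, the type space is discrete and the ``integral'' is a finite telescoping sum, which makes this step elementary.) For sufficiency, given a nondecreasing $w$ I define $p$ by Myerson's formula (in the form of \cite{AT01}, optionally with the standard $b_{-i}$‑only additive term, finite because there are finitely many jobs, that also yields voluntary participation); a direct computation then shows that for true capacity $c$ and any report $\beta$ the loss $\big(p(c)-w(c)/c\big)-\big(p(\beta)-w(\beta)/c\big)$ equals a nonnegatively‑weighted integral of $w(u)-w(\beta)$ over the interval between $\beta$ and $c$, which is $\ge 0$ precisely because $w$ is nondecreasing. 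Combined with the reductions, this proves the theorem.

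\textbf{Main obstacle.} The genuinely delicate step is the payment identity in the necessity direction: the exchange argument only yields monotonicity of $w$ and a \emph{squeeze} on payment differences, and turning that squeeze into the exact (integral) payment formula requires the measure‑theoretic facts about monotone functions together with boundedness of $w$ so that the Riemann--Stieltjes integral converges --- and it is the same bounded‑work fact that lets us manufacture per‑realization payments realizing a prescribed expected payment in the truthful‑in‑expectation reduction. Everything else --- the exchange argument and the sufficiency computation --- is routine.
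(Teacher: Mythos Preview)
The paper does not prove this theorem at all; it is stated as a known result from \cite{Mye81,AT01} and used as a black box. Your proposal is a correct and faithful sketch of the standard Myerson/Archer--Tardos argument those citations point to, so there is nothing in the paper to compare against beyond noting that your proof is exactly what the cited references contain.
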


We differentiate between two settings.  The
{\em standard} setting (cf. \cite{BBFN14,Wie07}) is a slight variation on the basic power-of-$d$ choices setting proposed in \cite{ABK+99}, for some constant $d\geq 2$. For each job $j$, the mechanism chooses a subset $M_j \subseteq \mathcal{M}$, $|M_j|=d$ of machines that the job can be allocated to. The probability that machine $i \in M_j$ is proportional to $b_i$. 

 In the {\em restricted} setting (cf. \cite{ANR95}), each job can  be allocated to a subset of at most $d$ machines, where the subsets $M_j$ are given as an input to the
allocation algorithm. The restricted setting models the case when the jobs have different requirements, and there is only a small subset of machines that can run each job.


Mansour et at. \cite{MRVX12}, showed that it is possible to transform any on-line algorithm on a graph of bounded degree (or whose degree is distributed binomially) to an LCA. The idea behind the reduction is simple: generate a random permutation on the vertices and simulate the on-line algorithm on this permutation. They show that, with high probability, this results in at most $O(\log{n})$ queries. We require the following theorem from \cite{MRVX12}.
\begin{theorem} \cite{MRVX12} \label{thm:local_bb}
Consider a generic on-line algorithm $\mathcal{LB}$ which requires
constant time per query, for $n$ balls and $m$ bins, where $n=cm$
for some constant $c>0$.
There exists an $(O(\log^4{n}), $ 
$O(\log^3{n}), 1/n)$-local
computation algorithm which, on query of a (ball) vertex $v \in V$,
allocates $v$ a (bin) vertex $u \in U$, such that the resulting
allocation is identical to that of $\mathcal{LB}$.\footnote{We need
an assumption that each ball restricted to a constant number of
machines, either explicitly or implicitly through the algorithm
$\mathcal{LB}$.}
\end{theorem}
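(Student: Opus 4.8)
The plan is to follow the query-tree simulation of \cite{MRVX12}. Since $\mathcal{LB}$ is an on-line algorithm, I first fix a random order in which it processes the balls by assigning to each ball $v$ an independent uniform rank $r(v)\in[0,1]$; to respect the space bound, the ranks are stored implicitly through a pseudorandom function with an $O(\log n)$-bit seed, so that $r(v)$ can be recomputed on demand in $\polylog n$ time and space (as in \cite{ARVX11}). On a query for ball $v$, I would report its allocation by recursively simulating only the portion of $\mathcal{LB}$'s run that can influence $v$: for each of the (constantly many) candidate machines $u$ of $v$, the load of $u$ just before $v$ is processed equals the number of balls $w$ with $u\in\mathrm{cand}(w)$, $r(w)<r(v)$, that $\mathcal{LB}$ assigned to $u$; each such $w$ is resolved by the same recursive call, and once all the relevant loads are known, one $O(1)$-time step of $\mathcal{LB}$ fixes $v$'s machine. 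Because the recursion reproduces $\mathcal{LB}$'s decisions verbatim, the returned allocation is always exactly $\mathcal{LB}$'s; all the work is in bounding the running time and space.

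The next step is to control the combinatorial structure. Consider the bipartite conflict graph with balls on one side, machines on the other, and an edge whenever a machine is a candidate for a ball; by hypothesis every ball has degree $d=O(1)$. Either this graph has bounded degree outright, or (in the standard setting, where candidate machines are chosen uniformly and $n=cm$) each machine's degree is $\mathrm{Binomial}(n,d/m)$, so a Chernoff bound of exactly the kind used in Claim \ref{claim:run} shows that with probability at least $1-1/n^2$ every machine has degree $O(\log n)$. In either case the ball-level dependency graph --- balls adjacent when they share a candidate machine --- has maximum degree $\Delta=O(\log n)$ with high probability.

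The heart of the proof is the bound on the recursion itself. The recursion tree rooted at $v$ is contained in the \emph{query tree} $T(v)$, whose root is $v$ and in which the children of a node $w$ are the dependency-neighbors $w'$ of $w$ with $r(w')<r(w)$. I would bound $|T(v)|$ by a union bound over possible large trees: the ranks are a.s.\ distinct, a fixed rooted tree on $t$ vertices is a valid query tree only if those $t$ ranks occur in the single order compatible with the tree, an event of probability $1/t!$; and the number of rooted subtrees of size $t$ at $v$ in a graph of maximum degree $\Delta$ is at most $(e\Delta)^{t}$. Taking $t=C\log n$ for a sufficiently large constant $C$ (larger than a fixed multiple of the constant hidden in $\Delta=O(\log n)$) makes $(e\Delta)^{t}/t!\le 1/n^{3}$, so a union bound over all $v$ gives $|T(v)|=O(\log n)$ for every ball simultaneously, with probability at least $1-1/n$. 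Summing the costs, the recursion touches $O(\log n)$ balls; resolving each ball routes through its $O(1)$ candidate machines and enumerates each machine's $O(\log n)$ earlier-ranked neighbors, and recomputing ranks and maintaining $\mathcal{LB}$'s per-ball state add further $\polylog n$ factors --- collecting these yields $O(\log^{4}n)$ time and $O(\log^{3}n)$ space, with the only failure events (an oversized machine degree or an oversized query tree) having total probability at most $1/n$.

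The step I expect to be the main obstacle is precisely this query-tree estimate. An adversarial processing order could force a linear chain of dependencies, so the entire simulation hinges on showing that a \emph{random} order tames the recursion down to $O(\log n)$; moreover the estimate must first pass through the Chernoff step on the machines, since the natural ball-level dependency graph is not bounded-degree but only $O(\log n)$-degree with high probability, and all of these bounds have to hold uniformly over the $n$ possible queries in order to keep the overall failure probability at $1/n$.
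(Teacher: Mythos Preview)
The paper does not prove this theorem at all; it is quoted from \cite{MRVX12} and used as a black box. So there is no ``paper's own proof'' to compare against, only the original argument in \cite{MRVX12}. Your high-level plan --- simulate the on-line algorithm by recursively resolving only the balls that can influence the query, with the order fixed by a short random seed --- is indeed the approach of \cite{MRVX12} (and of \cite{ARVX11,NO08} before it), and your identification of the query-tree bound as the crux is correct.

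The specific query-tree estimate you give, however, does not go through. The sentence ``a fixed rooted tree on $t$ vertices is a valid query tree only if those $t$ ranks occur in the single order compatible with the tree, an event of probability $1/t!$'' is false for any tree that is not a path: a rooted tree only imposes a \emph{partial} order (rank decreases along every edge), and the probability that uniform ranks respect it is $\prod_{w}1/|\text{subtree}(w)|$, not $1/t!$. For a star with $t-1$ leaves this probability is $1/t$, so the union bound $(e\Delta)^{t}\cdot 1/t$ is useless when $\Delta=\Theta(\log n)$. More fundamentally, once you pass to the ball-level dependency graph of maximum degree $\Delta=\Theta(\log n)$, you have discarded too much structure: even the expected query-tree size in an arbitrary graph of degree $\Theta(\log n)$ is only bounded by $\sum_{k\ge 0}\Delta^{k}/(k+1)! = e^{\Theta(\log n)}$, which is polynomial in $n$, so no argument that uses only ``$\Delta=O(\log n)$'' can yield the claimed $O(\log n)$ bound.

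The proof in \cite{MRVX12} (restated in this paper as Lemma~\ref{lemma:bipartite}) does not go through the ball--ball dependency graph. It works directly with the random bipartite structure --- each ball has exactly $d$ uniformly random bin-neighbours --- and exploits this randomness, not merely the resulting degree bound, to show $|T|=O(\log n)$ with high probability. Your sketch would be salvageable for the constant-degree case (and would then recover the weaker polylogarithmic bound of \cite{ARVX11}), but to get the $O(\log n)$ bound in the setting of Theorem~\ref{thm:local_bb} you need to invoke Lemma~\ref{lemma:bipartite} (or reproduce its proof), rather than the counting argument you propose.
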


\subsection{A truthful in expectation mechanism for the standard setting}
\label{sec:lb1}

For the presentation of the allocation algorithm, we regard the allocation field as consisting of slots of size $1$. Each machine $i$'s bid
$b_i$ represents the number of slots it ``owns''.
Given the bids $b$ of the machines, let $B=\sum_{i=1}^n b_i$.
Thus, we can treat this as $B$ slots of size $1$ each, where each machine $i$ {\em owns} $b_i$ slots.
In the allocation, we will allocate jobs to slots. When a job $j$ is allocated to a specific slot, this means that
 the machine that owns the slot receives $j$.
We provide the following simple on-line allocation algorithm $\mcA_{SLMS}$ (cf. \cite{BBFN14}):
\begin{enumerate}[noitemsep,nolistsep]
\item Choose for  job $j$ a subset $M_j$ of $d$ slots out of $B$,
where each slot has equal probability. (Note that $M_j$ may include different slots owned by the same machine.)
\item Given $M_j$, job $j$ is allocated to the least loaded slot
in $M_j$ (breaking ties uniformly at random). Slots are treated as being independent of their machines. That is, it is possible that if a job chooses two slots $a$ and $b$, which belong to machines $A$ and $B$, $a$ has fewer jobs than $b$, but $B$ is more loaded than $A$, in terms of the average of the loads of its slots.
\end{enumerate}
\emph{Note}: Although it may not be possible to compute $B$ and $M_j$ locally exactly, it has been shown in that an approximate calculation suffices (e.g., \cite{BCM03, Wie07}). The reader is invited to peruse the references herein for a more in-depth discussion.  
\begin{lemma}
\label{lemma:bbmon} The randomized allocation function $\mcA_{SLMS}$ is
monotone in expectation.
\end{lemma}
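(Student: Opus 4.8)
\noindent\emph{Proof idea.} The plan is to exploit the one feature that distinguishes $\mcA_{SLMS}$ from the usual power‑of‑$d$ rule — that the whole placement procedure refers to \emph{slots}, never to machines — and show that the vector of slot heights is invariant in distribution under relabelling the $B$ slots. Exchangeability then pins down the expected height of every slot, which gives a closed form for the expected number of jobs landing on machine $i$, and monotonicity in $b_i$ is then a one‑line computation. Fix $b_{-i}$ and write $B_{-i}=\sum_{j\neq i}b_j$, so that bid $b_i$ creates $B=b_i+B_{-i}$ slots; label the slots $1,\dots,B$, let $S_i$ be the set of $b_i$ slots owned by machine $i$, and let $H_s$ be the (random) number of jobs placed on slot $s$, so $h_i=\sum_{s\in S_i}H_s$. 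What I will prove is that $\expect[h_i(b_i,b_{-i})]$ is non‑decreasing in $b_i$; writing the load of machine $i$ as $\ell_i=h_i/c_i$, with the true capacity $c_i$ a parameter independent of the reports, this is exactly the statement that $\expect[\ell_i(b_i,b_{-i})]$ is non‑decreasing in $b_i$, i.e. that $\mcA_{SLMS}$ is monotone in expectation in the sense of Definition~\ref{defn1}.

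First I would set up the probability space so that the slot‑relabelling symmetry is manifest. Jobs are processed in a fixed order; for each job $j$ the randomness consists of (i) a uniformly random $d$‑subset $M_j\subseteq\{1,\dots,B\}$ and (ii) a uniformly random total order on $M_j$ used for tie‑breaking, so that ``least loaded slot of $M_j$, ties broken uniformly'' becomes ``the slot of $M_j$ of minimum current height that is earliest in this order'' (the minimum of $M_j$ under a uniform order is uniform over $M_j$, so this is a faithful implementation). For a permutation $\pi$ of $\{1,\dots,B\}$, applying $\pi$ to every slot label appearing in the $M_j$'s and in the tie‑break orders is a measure‑preserving bijection of this space, and it conjugates the sequential placement procedure, so the run with permuted inputs produces the permuted height vector; hence $(H_1,\dots,H_B)\stackrel{d}{=}(H_{\pi^{-1}(1)},\dots,H_{\pi^{-1}(B)})$. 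Thus $(H_1,\dots,H_B)$ is exchangeable, so $\expect[H_s]$ is the same for every slot $s$; since every job lands on exactly one slot we have $\sum_{s=1}^{B}H_s=m$ deterministically, and therefore $\expect[H_s]=m/B$ for all $s$.

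Summing over the slots of machine $i$ gives
\begin{equation*}
\expect[h_i(b_i,b_{-i})]=\sum_{s\in S_i}\expect[H_s]=b_i\cdot\frac{m}{B}=\frac{b_i\,m}{b_i+B_{-i}} .
\end{equation*}
The map $b_i\mapsto \dfrac{b_i}{b_i+B_{-i}}$ is non‑decreasing (derivative $\dfrac{B_{-i}}{(b_i+B_{-i})^2}\ge 0$; for integer bids, $(b_i{+}1)B\ge b_i(B{+}1)$ since $B\ge b_i$), so $\expect[h_i(b_i,b_{-i})]$, hence $\expect[\ell_i(b_i,b_{-i})]=\expect[h_i]/c_i$, is non‑decreasing in $b_i$, which is the claimed monotonicity in expectation. (The local implementation of Section~\ref{sec:lb1} produces, via Theorem~\ref{thm:local_bb}, an allocation identical in distribution to that of $\mcA_{SLMS}$, so the property transfers; the only slack is that $B$ and the sets $M_j$ are computed locally only approximately, handled exactly as in the cited sampling references.)

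I expect the genuine obstacle to be the rigour of the exchangeability step, not the arithmetic: one must package the tie‑breaking into the probability space so that relabelling slots is \emph{literally} a measure‑preserving bijection (a ``smallest index wins'' tie‑break would destroy the symmetry) and argue cleanly that this bijection conjugates the entire sequential placement. It is worth emphasising why taking the expectation is essential: $\mcA_{SLMS}$ is \emph{not} universally monotone — for a fixed realization of the random choices, handing machine $i$ an extra slot can re‑route jobs and strictly lower $h_i$ — so no per‑realization coupling can work, and it is precisely the averaging over the slot‑symmetric randomness that restores monotonicity.
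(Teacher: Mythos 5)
Your proposal is correct and follows essentially the same route as the paper: by slot symmetry each slot has expected height $m/B$, so $\expect[h_i]=b_i m/(B_{-i}+b_i)$, which is non-decreasing in $b_i$. The only difference is that you make the symmetry step rigorous (exchangeability via a relabelling bijection that also permutes the tie-breaking randomness), which the paper simply asserts "by symmetry."
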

%

\begin{proof}
Let $B=\sum_{i'} b_{i'}$ and $B_{-i}=\sum_{i'\neq i} b_{i'}$. Since
all the slots are identical, by symmetry the expected number of jobs allocated
to each slot is exactly $m/B$. Since machine $i$ owns $b_i$ slots,
the expected height of machine $i$ is
\[
\expect[h_i(b_i)]= \frac{b_i}{B_{-i}+b_i}m,
\]
which is monotone increasing in $b_i$ (for $b_i, B_{-i} \geq 0$).
\end{proof}

From Theorem \ref{thm:AT}, we immediately get:
\begin{lemma}
\label{lemma:bbtie} The randomized allocation function $\mcA_{SLMS}$
admits a payment scheme $\mcP_{SLMS}$ such that the  mechanism  $\mcM_{SLMS} = (\mcA_{SLMS},\mcP_{SLMS})$  is
truthful in expectation.
\end{lemma}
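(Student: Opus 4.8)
The plan is to obtain Lemma~\ref{lemma:bbtie} as an immediate consequence of Lemma~\ref{lemma:bbmon} together with the characterization recorded in Theorem~\ref{thm:AT}. Lemma~\ref{lemma:bbmon} establishes that $\mcA_{SLMS}$ is monotone in expectation in the sense of Definition~\ref{defn1}: for every machine $i$ and every fixed bid profile $b_{-i}$ of the other machines, $\expect[\ell_i(b_i,b_{-i})]$ is non-decreasing in $b_i$. Indeed, the symmetry argument in the proof of Lemma~\ref{lemma:bbmon} yields the explicit expression $\expect[h_i(b_i,b_{-i})]=\frac{b_i}{B_{-i}+b_i}\,m$, hence $\expect[\ell_i(b_i,b_{-i})]=\frac{b_i}{B_{-i}+b_i}\cdot\frac{m}{c_i}$, which is increasing in $b_i$ for $b_i,B_{-i}\ge 0$. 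Invoking the direction of Theorem~\ref{thm:AT} that produces a payment scheme from monotonicity in expectation, applied with $\mcA=\mcA_{SLMS}$, then furnishes a payment scheme $\mcP_{SLMS}$ under which $\mcM_{SLMS}=(\mcA_{SLMS},\mcP_{SLMS})$ is truthful in expectation, which is exactly the assertion.

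The only point I would spell out beyond this composition is that the payment scheme supplied by Theorem~\ref{thm:AT} is a concrete (Myerson-type) functional of the expected-load curve $b_i\mapsto\expect[\ell_i(b_i,b_{-i})]$, and that this curve is well-behaved enough for the functional to be defined: it is monotone (by Lemma~\ref{lemma:bbmon}) and uniformly bounded, since $h_i\le m$ forces $\expect[\ell_i]\le m/c_i$, so the relevant integral converges. As a side remark, in our case this curve has the closed form displayed above, so the payment is not merely well-defined but is actually computable from (an approximation of) $B_{-i}$; this is what makes $\mcP_{SLMS}$ amenable to local computation, although the statement of Lemma~\ref{lemma:bbtie} only asks for existence and truthfulness. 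The free additive constant in the payment can additionally be pinned down so that voluntary participation holds, but that is not required here.

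I do not expect any genuine obstacle in this step: all of the mathematical content already sits in Lemma~\ref{lemma:bbmon} (the symmetry computation of the expected height) and in the cited characterization theorem, and the present lemma is simply their conjunction.
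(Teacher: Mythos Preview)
Your proposal is correct and matches the paper's own argument exactly: the paper simply states that Lemma~\ref{lemma:bbtie} follows immediately from Theorem~\ref{thm:AT} given Lemma~\ref{lemma:bbmon}. Your additional remarks about the explicit form of the expected load and the well-definedness of the Myerson integral are fine elaborations but go beyond what the paper records at this point (the explicit payment formula and its local computability are deferred to Lemma~\ref{lemma:tie_payments}).
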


It is interesting to note that the above algorithm does not admit a
universally truthful mechanism. To show this, we prove a slightly stronger claim: that the \textsc{Greedy} algorithm of \cite{ABK+99}, (in which  each job chooses $d$ machines at random, and is allocated to the least loaded among them, post placement\footnote{That is, if bin $A$ has capacity $4$ and height $2$ and bin $B$ has capacity $8$ and height $5$, the job will go to machine $B$.}, breaking ties arbitrarily), does not admit a universally truthful mechanism. 
\begin{claim}
\label{claim:gen1}
Algorithm \textsc{Greedy} is not universally monotone.
\end{claim}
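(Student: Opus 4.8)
The plan is to exhibit a small explicit instance where a particular realization of the random choices makes some machine's load \emph{decrease} when it raises its bid, while keeping the other machines' bids fixed. Since \textsc{Greedy} places each job on the least loaded machine among $d$ randomly chosen machines \emph{after} accounting for the would-be placement (i.e., comparing $h_i/c_i$ post-placement), the key leverage is that increasing $b_i$ changes the denominator against which machine $i$'s tentative load is measured, and this can reroute an \emph{earlier} job away from machine $i$ in a way that cascades so that a \emph{later} job which would have landed on $i$ now does not.

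First I would fix $d=2$ (the argument only needs one value of $d$, and the smallest is cleanest) and work with a constant number of machines — say three or four — and a constant number of jobs. I would hard-wire the sequence of random pairs $M_1, M_2, \ldots$ presented to the jobs (this is legitimate because universal monotonicity must hold for \emph{every} realization of the randomness), as well as a fixed tie-breaking rule. Then I would track the execution of \textsc{Greedy} twice: once with machine $i$ bidding $b_i$, and once with it bidding $b_i' > b_i$. The goal is to arrange the capacities and the chosen pairs so that in the higher-bid run, some job $j$ that was placed on machine $i$ in the lower-bid run is instead diverted to its co-chosen machine (because with the larger capacity $i$ was \emph{more} loaded in the relevant comparison at that moment — this is exactly the ``post placement'' subtlety the footnote highlights, where a higher-capacity machine can still be the loser), and that this diversion is not compensated by any later job, so machine $i$ ends with strictly fewer jobs and strictly smaller load despite the higher bid.

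Concretely, I would look for a configuration along these lines: let machine $i$ start with capacity $c_i$, give it one ``priming'' job early via a forced pair so it has height $1$; then present a job whose pair is $\{i, i'\}$ where $i'$ already has height chosen so that $h_{i'}/c_{i'}$ sits strictly between $1/c_i$ and $2/c_i$ — so with the small capacity $c_i$ the job prefers $i$ (since $2/c_i$ would still beat $i'$... ) — wait, the inequality has to flip, so rather: choose things so that with capacity $c_i$ the post-placement load $2/c_i$ on $i$ is \emph{less than} $i'$'s load, sending the job to $i$, but with capacity $c_i' = c_i + 1$ the priming job was never there (because raising $i$'s capacity rerouted the priming job elsewhere first), so $i$ is empty and the relevant comparison changes. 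The cleanest realization routes the \emph{priming} job away under the higher bid. I would then verify by direct case-check that the two executions differ exactly as claimed and that machine $i$'s final load is strictly smaller in the higher-bid execution, contradicting universal monotonicity (Definition~\ref{defn1}).

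The main obstacle I anticipate is \textbf{bookkeeping the cascade consistently}: one must ensure that the single rerouting at an early job genuinely propagates to the final load of machine $i$ and is not silently undone by a subsequent job landing on $i$ in the higher-bid run but not the lower-bid run. Managing this requires either making the job sequence short enough that no later job's pair even includes $i$, or deliberately including such a job but arranging (again via the fixed pairs and the post-placement comparison) that it still avoids $i$ in both runs. A secondary, purely cosmetic obstacle is that capacities must be positive integers, so all the ``loads lie strictly between'' conditions must be arranged with integer heights over integer capacities; picking $c_i$ and $c_{i'}$ with a few distinct small prime-ish values (e.g.\ on the order of $2$ and $3$, or $1$ and $2$) makes the needed strict inequalities between ratios like $1/c_i$, $2/c_i$, and $h_{i'}/c_{i'}$ easy to satisfy. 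Once the instance is pinned down the verification is a finite check, so the real work is in the search for the witnessing configuration rather than in any estimate.
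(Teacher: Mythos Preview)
Your plan misidentifies what the ``randomness'' is in the standard setting. In \textsc{Greedy} as used here (Section~\ref{sec:lb1}), each job's $d$-subset $M_j$ is drawn with probability proportional to the \emph{bids}; the mechanism's random coins are the underlying uniform draws, and the resulting pairs $M_j$ are a function of those coins \emph{and} the bid vector. Hence you cannot ``hard-wire the sequence of random pairs $M_1,M_2,\ldots$'' and then vary $b_i$ independently: fixing the coins and raising $b_i$ will, in general, change which machines appear in $M_j$. Your proposed leverage --- that raising $b_i$ only changes the denominator in the post-placement comparison --- is therefore not the operative effect here.

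The paper's proof exploits precisely the effect you omit. It fixes a realization of the random seed and shows that when machine $C$ raises its bid from $8$ to $9$, job $11$'s sampled pair changes from $\{A,B\}$ to $\{B,C\}$ (because $C$ now has more probability mass). That single change in which machines are \emph{offered} to job $11$ reroutes the subsequent allocation so that $C$ ends with strictly fewer jobs. No cascade through altered load comparisons is needed. The mechanism you are reaching for --- fixed pairs, with non-monotonicity arising purely from the floor/ratio comparison shifting --- is exactly what is required in the \emph{restricted} setting, and the paper proves that separately as Claim~\ref{claim:app2} with a different (and more delicate) three-machine construction. Your sketch also never pins down an actual witnessing instance; the middle paragraph tangles itself trying to get the inequalities to flip, which is a symptom of working against the wrong lever.
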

\begin{proof}

Assume we have $4$ machines: $A$, $B$, $C$, and $D$, with bids $4$, $4$, $8$ and $1$ respectively.
The first $2$ jobs choose machines $A$ and $D$ (which we abbreviate to $AD$),
the next $2$ jobs choose $BD$,
and the next $6$ jobs choose $CD$.
After these $10$ jobs, the heights of the machines are $(2,2,6,0)$ (recall that the Greedy algorithm allocates according to the \emph{post-placement} load).
The $11$th job chooses $AB$, and the $12$th job chooses $AC$.
As ties are broken at random, assume machine $A$
receives job $11$.
Machine $C$ then receives job $12$, making the capacities $(3,2,7,0)$.

Now assume machine $C$ bids $9$, and the choices of the first $10$
jobs and the $12$th job remain the same, but because $C$ bid higher,
now the $11$th job chooses $C$ instead of $A$ (so job $11$ chooses
$BC$). Now machine $B$ receives the $11$th job and machine $A$
receives the $12$th job, making the capacities  $(3,3,6,0)$. Machine
$C$ received less jobs although it was bidding higher!
%
\end{proof}

\begin{corollary}
\label{corr:gen1}
Algorithm $\mcA_{SLMS}$ is not universally monotone.
\end{corollary}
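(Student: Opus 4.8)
The plan is to port the counterexample of Claim~\ref{claim:gen1} into the slot model: I would exhibit a single realization of $\mcA_{SLMS}$'s randomness (a prescribed sequence of $d$-element slot-subsets chosen by the jobs, together with the induced routing) and two bid vectors differing only in the bid of one machine $X$, such that the load $\ell_X$ \emph{strictly decreases} when $X$ raises its bid. Since $\mcA_{SLMS}$ is monotone in expectation by Lemma~\ref{lemma:bbmon}, this already shows what we want: monotone in expectation but not universally monotone, hence (by Theorem~\ref{thm:AT}) it admits a truthful-in-expectation but not a universally truthful payment scheme.

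The one point requiring care is that $\mcA_{SLMS}$ is not literally \textsc{Greedy}: \textsc{Greedy} routes each job to the machine minimizing the post-placement load $h/c$, whereas $\mcA_{SLMS}$ routes each job to the \emph{slot} of smallest occupancy (slots have size $1$, so a slot's occupancy is just the number of jobs on it), and a machine's load is the \emph{average} occupancy over the slots it owns. So a verbatim reuse of the four-machine execution will not do; I would rebuild it so that every routing step is justified purely by slot occupancies. Concretely: let $X$ own $8$ slots under the first bid vector and $9$ slots under the second (capacity rises $8\to 9$), introduce a few filler machines and one single-slot machine, and first feed in jobs whose $d=2$ slot-pairs bring $X$'s slots and the filler slots to occupancy $1$ while leaving one cheap slot empty; then introduce two ``pivot'' jobs whose slot-pairs meet $X$'s slots. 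Under the natural coupling of the randomness across the two bid vectors — the same random draws map to different slots because $X$ owns one extra slot and the total $B$ grows by one — I would arrange, exactly as when $C$'s weight grew in Claim~\ref{claim:gen1}, that the first pivot job lands on an $X$-slot in the second execution but on a filler slot in the first, which in turn reroutes the second pivot job off of $X$. Finally I would count the jobs landing on $X$'s slots in each execution, check that the count drops by at least one, and conclude that $\ell_X$ drops by a factor worse than the capacity ratio $8/9$.

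The main obstacle is the staging in the middle step: in \textsc{Greedy} the tiny machine $D$ automatically repelled jobs (landing there forces its load to $1$), but in $\mcA_{SLMS}$ an empty slot is always a valid target, so I cannot use that shortcut and must instead pin down the slot occupancies explicitly at the exact moment of each pivot decision, and then verify that inserting the extra $X$-slot under the coupling flips precisely one of those decisions (and not zero or two). Once the two executions are fixed, the rest — comparing $h/8$ with $h'/9$ — is routine arithmetic.
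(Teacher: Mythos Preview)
The paper gives no proof of this corollary beyond the word ``Corollary''; the intended argument is simply that the counterexample machinery of Claim~\ref{claim:gen1} carries over to $\mcA_{SLMS}$. Your proposal is precisely that intended argument made explicit, and you are right that the translation is not entirely mechanical: $\mcA_{SLMS}$ routes by per-slot occupancy rather than post-placement machine load, and raising a bid enlarges the slot pool $B$, so the coupling of the mechanism's randomness across the two bid profiles does need to be pinned down. Your staged construction (filler jobs to equalize slot occupancies, then two pivot jobs whose routing flips under the coupling when the extra slot appears) is the natural way to do this and should go through.

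One minor point: the quantity whose monotonicity is at stake is the \emph{height} $h_X$ (the job count), not the load $\ell_X = h_X/b_X$ --- compare the proof of Lemma~\ref{lemma:bbmon}, which establishes monotonicity in expectation by showing $\expect[h_i]$ is increasing in $b_i$, and the Archer--Tardos framework (Theorem~\ref{thm:AT}), where the work function is the height. So once you have shown that the number of jobs landing on $X$ strictly drops when $X$ raises its bid, you are already done; the extra arithmetic comparing $h/8$ with $h'/9$ is unnecessary (though harmless).
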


By Theorem \ref{thm:local_bb}, the allocation function
$\mcA_{SLMS}$  can be transformed to
a $(O(\log^4{n})$, $O(\log^3{n})$, $1/n)$ LCA. We would now like to show a payment scheme $\mcP_{SLMS}$ such that the mechanism  $\mcM_{SLMS} =
(\mcA_{SLMS}$,$\mcP_{SLMS})$ is a local mechanism. (We overload the notation, letting $\mcA_{SLMS}$ represent both the on-line allocation algorithm and its respective LCA.)
%
We therefore need to show a payment scheme which can be implemented in an
LCA and guarantees truthfulness. Our payment schemes are similar in idea to the payments schemes of \cite{APTT03} and \cite{BBFN14}.

\begin{lemma}
\label{lemma:tie_payments}
There exists a randomized local payment scheme $\mcP_{SLMS}$ such that the mechanism $\mcM_{SLMS} = (\mcA_{SLMS}, \mcP_{SLMS})$ is truthful in expectation. Furthermore, if all the bids are bounded by a polylogarithmic function, there exists a {\em deterministic} local payment scheme $\mcP_{SLMS}$ such that the mechanism $\mcM_{SLMS} = (\mcA_{SLMS}, \mcP_{SLMS})$ is truthful in expectation.
\end{lemma}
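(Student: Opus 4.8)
The plan is to use Myerson's characterization (Theorem~\ref{thm:AT}) together with the closed-form expression for the expected height of a machine obtained in the proof of Lemma~\ref{lemma:bbmon}. Recall that for a single-parameter monotone allocation rule, the (unique, up to an additive constant fixed by voluntary participation) truthful payment to agent $i$ is $\mcP_i(b_i,b_{-i}) = b_i\cdot \expect[\ell_i(b_i,b_{-i})] - \int_0^{b_i} \expect[\ell_i(z,b_{-i})]\,dz$. Since $\expect[\ell_i(b_i,b_{-i})] = \expect[h_i(b_i)]/b_i = m/(B_{-i}+b_i)$ (the cost to the machine is per-load, and $h_i$ slots contribute $h_i/b_i$ load per the model), the integral is elementary: $\int_0^{b_i} \frac{m}{B_{-i}+z}\,dz = m\bigl(\ln(B_{-i}+b_i) - \ln B_{-i}\bigr)$, giving a clean explicit formula $\mcP_i = \frac{m\, b_i}{B_{-i}+b_i} - m\ln\!\bigl(\tfrac{B_{-i}+b_i}{B_{-i}}\bigr)$. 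The content of the lemma is therefore \emph{not} truthfulness per se (that is immediate from monotonicity plus Theorem~\ref{thm:AT}) but the claim that this payment, or a randomized unbiased estimator of it, can be computed by a $(O(\log^4 n), O(\log^3 n), 1/n)$-LCA.

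For the \textbf{randomized} part, the obstacle is that $\mcP_i$ depends on the global quantity $B_{-i} = \sum_{i'\neq i} b_{i'}$ and on $m$, neither of which is available locally. I would handle this exactly as the allocation LCA does: as noted after the description of $\mcA_{SLMS}$, an approximate value of $B$ suffices for the allocation, and the same approximation machinery (sampling a polylogarithmic number of machines, cf.\ \cite{BCM03,Wie07}) yields $B_{-i}$ to sufficient accuracy. The truthful-in-expectation requirement only needs the \emph{expectation} of the payment to equal a valid Myerson payment for \emph{some} monotone rule, so rather than estimating the integral directly I would design an unbiased randomized estimator: draw the random subsets $M_j$ for a random job (or use the LCA's own randomness), and output a payment whose expectation telescopes to $\mcP_i$. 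Concretely, one can use the standard trick of estimating $\int_0^{b_i}\expect[\ell_i(z,\cdot)]dz$ by picking $z$ uniformly in $[0,b_i]$, scaling by $b_i$, and estimating $\expect[\ell_i(z,\cdot)]$ via a single simulated run of $\mcA_{SLMS}$ with machine $i$'s bid set to $z$ — each such run is itself an LCA query and costs $O(\log^4 n)$ time. Since bids $b_i$ may be large, one also has to be careful that the estimator's variance does not force super-polylogarithmic space; but since we only need truthfulness in expectation, a bounded number of samples suffices and no concentration is required.

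For the \textbf{deterministic} part under the polylogarithmic-bid assumption, the point is that when every $b_{i'} \le \polylog(n)$, the value $B_{-i}$ can be computed \emph{exactly} by a deterministic LCA: the allocation of a single job touches only $O(1)$ slots, but to evaluate $B_{-i}$ one would still need all machines — so instead I would invoke the structural fact that $\mcA_{SLMS}$, once the slot model is fixed, is the LCA of Theorem~\ref{thm:local_bb}, whose query tree has size $O(\log n)$; the payment to machine $i$ only depends on jobs whose chosen slot-set $M_j$ intersects $i$'s slots, and there are $O(b_i \cdot m/B) = \polylog(n)$ such jobs in expectation with high probability, each discoverable within the query tree. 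Then $\mcP_i$ is a deterministic function of these $\polylog(n)$ local quantities and can be evaluated exactly via the closed form above (with $B_{-i}$ reconstructed along the way, or with $m$ replaced by the exact count of relevant jobs). I expect the \textbf{main obstacle} to be making the ``deterministic exact computation of $B_{-i}$ and $m$ within polylog space'' rigorous: one must argue that under bounded bids the relevant portion of the instance that determines machine $i$'s payment has polylogarithmic size with probability $1-1/n$, and that reading it is consistent across queries — this is where the query-tree size bound from \cite{MRVX12} (Theorem~\ref{thm:local_bb}) does the real work, and the rest is the elementary integral already displayed.
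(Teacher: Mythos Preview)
Your randomized estimator --- sample a point uniformly in $[0,b_i]$, scale by $b_i$, and output an unbiased estimate of the Myerson integral --- is exactly the paper's device. The paper, however, never simulates $\mcA_{SLMS}$ at the sampled bid: since $\expect[h_i(x,b_{-i})]=mx/(B_{-i}+x)$ is already available in closed form from Lemma~\ref{lemma:bbmon}, it simply plugs the sampled integer $k\in[1,b_i]$ into that expression and returns $m\,b_i^2/B + m\,b_i\cdot k/(B_{-i}+k)$ in $O(1)$ time. Your detour through a fresh LCA run at bid $z$ is correct in expectation but unnecessary.

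Where you diverge from the paper is in identifying the obstacle. You treat $B_{-i}$ and $m$ as the non-local quantities that must be estimated; the paper treats them as given (they are the same global aggregates the allocation LCA already relies on, cf.\ the remark following the description of $\mcA_{SLMS}$). The obstacle the paper actually isolates is different: the Archer--Tardos payment, written over integer bids, is a \emph{sum} $\sum_{x=0}^{b_i} mx/(B_{-i}+x)$ with $b_i$ terms, and $b_i$ need not be polylogarithmic. The random sample dispatches this; and the polylogarithmic-bid hypothesis makes the sum short enough to evaluate term by term, which is the entire content of the deterministic claim.

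This means your deterministic argument has a real gap. You assert that under polylog bids ``the relevant portion of the instance that determines machine $i$'s payment has polylogarithmic size'' and can be read off the query tree of Theorem~\ref{thm:local_bb}. But the payment depends on $B_{-i}=\sum_{i'\neq i} b_{i'}$, which involves \emph{every} machine's bid; no $\polylog(n)$-size query tree reconstructs it. The polylog-bid hypothesis is not there to shrink the dependency graph --- it is there solely to bound the number of summands in the closed-form payment once $B_{-i}$ is in hand. (As a side note, the paper explicitly considers and rejects the logarithmic closed form you derive, in a footnote: treated as an approximation to the discrete sum, the additive error is not obviously compatible with exact truthfulness.)
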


\begin{proof}
Archer and Tardos \cite{AT01} showed that the following payment
scheme makes for a truthful mechanism fulfilling voluntary
participation. For bid $b_i$:
\begin{align*}
p_i(b_i, b_{-i}) = b_i h_i(b_i, b_{-i}) + \int_0^{b_i}
h_i(x,b_{-i})dx\;.
\end{align*}
This has to be the expected payment, and we can (deterministically)
take $h_i(b)$ to be the expected height of machine $i$ when it bids
$b_i$. Since $E[h_i(b)]=m \frac{b_i}{B_{-i}+b_i}$, where
$B_{-i}=\sum_{i'\neq i} b_{i'}$, the payment is of the form:
\begin{align*}
p_i(b_i, b_{-i}) &=  m \frac{b_i^2}{B_{-i}+b_i} +
\dsum_{x=0}^{b_i}h_i(x,b_{-i}) \\
&= m\frac{b_i^2}{B_{-i}+b_i} +
m\dsum_{x=0}^{b_i}\frac{x}{B_{-i}+x}.
\end{align*}
Our remaining challenge is to compute the payment in polylogarithmic time and space.

If $b_i$ is bounded by some polylogarithmic function of $n$, we can
calculate the sum in polylogarithmic time and space, and take this
to be the payment. If, however, $b_i$ is larger, it is not apparent
how to calculate this sum  in a straightforward fashion.\footnote{We can
approximate the payment up to an additive factor of $1/B_{-i}$ using
$p_i(b_i, b_{-i})\approx m\frac{b_i^2}{B_{-i}+b_i} + m(b_i
-B_{-i}\ln(1+b_i/B_{-i}))$, but this might change the incentives of
the machines and is not guaranteed to be truthful.} We provide the
following (randomized) payment scheme:

Choose, uniformly at random, $k \in [1,b_i]$, and take the payment 
to be
\begin{equation*}
m \frac{b_i^2}{B} +   m b_i\cdot \frac{k}{B_{-i}+k}
\end{equation*}
This gives the correct expected payment, and takes $O(1)$ time.
%
\end{proof}
 \cite{BBFN14}, showed that $\mcA_{SLMS}$ provides an $O(\log\log{n})$ approximation to the optimal makespan. Therefore, by Theorem \ref{thm:local_bb}, the LCA of $\mcA_{SLMS}$ provides the same approximation ratio.
Combining Lemma \ref{lemma:bbtie}, and Lemma
\ref{lemma:tie_payments}, we state our main result for the standard
setting:

\begin{theorem}\label{thm:lb1}

There exists an $(O(\log^4{n}),O(\log^3{n}), 1/n)$- local mechanism to scheduling on
related machines in the standard setting that is truthful in expectation, and provides an $O(\log{\log{n}})$-approximation to the makespan.
\end{theorem}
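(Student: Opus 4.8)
The plan is to assemble Theorem~\ref{thm:lb1} purely from ingredients already established, so the ``proof'' is really a bookkeeping argument that combines three facts: (i) the on-line allocation $\mcA_{SLMS}$ is monotone in expectation (Lemma~\ref{lemma:bbmon}), and hence by Theorem~\ref{thm:AT} admits a payment scheme making $\mcM_{SLMS}$ truthful in expectation (Lemma~\ref{lemma:bbtie}); (ii) by Theorem~\ref{thm:local_bb} the on-line algorithm $\mcA_{SLMS}$ can be simulated by an $(O(\log^4 n), O(\log^3 n), 1/n)$-LCA producing an allocation \emph{identical} to the on-line one, and Lemma~\ref{lemma:tie_payments} exhibits a local (randomized, or deterministic under a polylog bid bound) payment scheme $\mcP_{SLMS}$ realizing exactly the Archer--Tardos expected payments; (iii) the approximation guarantee of $O(\log\log n)$ for $\mcA_{SLMS}$ is due to \cite{BBFN14}, and since the LCA reproduces the same allocation, the makespan bound is inherited verbatim.

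First I would note that the locality bounds are exactly those of Theorem~\ref{thm:local_bb} for the allocation, namely $(O(\log^4 n), O(\log^3 n), 1/n)$, and that the payment LCA of Lemma~\ref{lemma:tie_payments} runs in $O(1)$ time and space per query (and certainly within the same asymptotic envelope, even in the deterministic polylog-bid case), so the composite mechanism $\mcM_{SLMS}=(\mcA_{SLMS},\mcP_{SLMS})$ is an $(O(\log^4 n), O(\log^3 n), 1/n)$-local mechanism in the sense of the Definition for mechanisms with payments. Second, truthfulness in expectation: since the LCA implements the \emph{same} randomized allocation function as the on-line algorithm (in distribution, and in fact per-realization of the shared randomness), the expected height $\expect[h_i(b_i,b_{-i})] = m b_i/(B_{-i}+b_i)$ is unchanged, so monotonicity in expectation is preserved; the local payment scheme was constructed precisely to have expected value equal to the Archer--Tardos payment $p_i(b_i,b_{-i}) = b_i h_i(b_i,b_{-i}) + \int_0^{b_i} h_i(x,b_{-i})\,dx$, so by Theorem~\ref{thm:AT} the mechanism is truthful in expectation (and fulfills voluntary participation). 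Third, the approximation ratio: invoking \cite{BBFN14} for $\mcA_{SLMS}$ and the identity-of-output clause in Theorem~\ref{thm:local_bb}, the makespan of the LCA's allocation is $O(\log\log n)$ times the optimal makespan.

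The main (indeed only) subtlety is making sure the three claims compose without a hidden mismatch --- in particular that ``truthful in expectation'' is taken over the \emph{same} source of randomness for both the allocation LCA and the payment LCA, and that the approximate computation of $B$ and $M_j$ flagged in the note after $\mcA_{SLMS}$ does not disturb either monotonicity or the approximation bound; both of these are deferred to the cited works \cite{BCM03,Wie07,BBFN14}, so for the purposes of this theorem they are taken as given. Thus the proof is simply: combine Lemma~\ref{lemma:bbtie}, Lemma~\ref{lemma:tie_payments}, Theorem~\ref{thm:local_bb}, and the approximation result of \cite{BBFN14}. I do not expect any real obstacle here; the theorem is a summary statement, and the work has all been done in the preceding lemmas.

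\begin{proof}
By Theorem~\ref{thm:local_bb}, the on-line allocation algorithm $\mcA_{SLMS}$ can be simulated by an $(O(\log^4 n), O(\log^3 n), 1/n)$-local computation algorithm whose output allocation is identical to that of the on-line algorithm; we use $\mcA_{SLMS}$ to denote this LCA as well. By \cite{BBFN14}, $\mcA_{SLMS}$ produces an allocation whose makespan is within an $O(\log\log n)$ factor of the optimal makespan, and since the LCA reproduces the same allocation, it inherits this guarantee. By Lemma~\ref{lemma:bbmon} the allocation function is monotone in expectation, and by Lemma~\ref{lemma:tie_payments} there is a local payment scheme $\mcP_{SLMS}$, computable in $O(1)$ time and space per query, whose expected value equals the Archer--Tardos payment
\[
p_i(b_i, b_{-i}) = b_i h_i(b_i, b_{-i}) + \int_0^{b_i} h_i(x, b_{-i})\,dx ,
\]
so by Theorem~\ref{thm:AT} the mechanism $\mcM_{SLMS} = (\mcA_{SLMS}, \mcP_{SLMS})$ is truthful in expectation (and fulfills voluntary participation). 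Composing the locality bounds of the allocation LCA with those of the payment LCA, $\mcM_{SLMS}$ is an $(O(\log^4 n), O(\log^3 n), 1/n)$-local mechanism, which completes the proof.
\end{proof}
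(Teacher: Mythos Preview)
Your proposal is correct and matches the paper's approach exactly: the theorem is stated as a direct combination of Lemma~\ref{lemma:bbtie}, Lemma~\ref{lemma:tie_payments}, Theorem~\ref{thm:local_bb}, and the $O(\log\log n)$ approximation guarantee from \cite{BBFN14}. There is nothing further to add.
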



\subsection{A universally truthful mechanism for the restricted setting}
\label{sec:lb2}

In the restricted setting, each job can only be allocated to one of a set $M_j \subseteq \mathcal{M}$ of $d$
machines. As opposed to the standard setting, $M_j$ is not selected by the
mechanism, but is part of the input. We assume 
that the allocation is random (this is necessary to bound the query tree size), and the probability of machine $i$ to be in $M_j$ is  proportional
to its capacity $c_i$ (necessary to guarantee the approximation ratio of the allocation algorithm $\mcA_{RLMS}$)\footnote{Although this assumption is somewhat strong for a theoretical discussion, from a practical viewpoint it usually holds that a machines' capacity is somewhat proportional to the number of jobs that is usually scheduled on it.}. The latter requirement can be relaxed slightly, but
for clarity of the proofs, we will assume that it holds exactly.
Furthermore, we assume that the capacity of each machine is not too large, that is, each machine has capacity at most $O(\log{n})$. 
We restrict our attention to the case where we would like to allocate  $m=\Omega(C)$ jobs where $C$ is the total capacity of the machines. This is usually considered to be the worst case scenario (see e.g., \cite{ABK+99,BBFN14,Wie07}).
We define the following (on-line) algorithm $\mcA_{RLMS}$ for assigning
jobs to machines as follows.
Initially, we select a permutation $\pi$ of the machines, for
tie-breaking.
Place job $t$ in the machine $i\in M_j$ for which the {\em post-placement
load}, $lp^{t+1}_i(b_i) = \lfloor \frac{h_i^{t}(b_i)+1}{b_i}\rfloor $
is smallest, breaking ties according to $\pi$. The following claim shows why it is necessary to take the floor of
the load, as the simple Greedy algorithm does not admit a universally
truthful mechanism in this case.

\begin{claim}The \textsc{Greedy} algorithm (unmodified) is not universally monotone in the restricted case.
\label{claim:app2}
\end{claim}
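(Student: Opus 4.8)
The plan is to produce a small explicit counterexample analogous to the one used in Claim \ref{claim:gen1}, but now respecting the restricted-setting rule that each job comes with a fixed set $M_j$ of admissible machines (rather than choosing uniformly). First I would fix a constant number of machines — say three machines $A$, $B$, $C$ — and design a sequence of jobs, each with its own admissibility set $M_j \subseteq \{A,B,C\}$, so that when $C$ raises its bid the number of jobs it ultimately receives strictly decreases, thereby violating universal monotonicity (and hence, by Theorem \ref{thm:AT}, ruling out a universally truthful mechanism). The key mechanism behind the counterexample is the same as in Claim \ref{claim:gen1}: raising $C$'s capacity makes $C$ \emph{more attractive} at some intermediate step (so it wins a job it would otherwise have lost), which in turn shifts the post-placement loads of the other machines and causes $C$ to \emph{lose} a later job it would otherwise have won. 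The net effect is a decrease of one in $C$'s final height even though its bid went up.

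The steps, in order, are: (i) choose the machine capacities in the low-bid scenario and the high-bid scenario (these differ only in $C$'s bid); (ii) list a fixed "warm-up" batch of jobs whose admissibility sets force the machines into a prescribed configuration of heights, identical in both scenarios, because those jobs never involve $C$ in a way that the bid change matters — or involve it identically; (iii) introduce a "pivot" job whose set is $\{A,C\}$ (or similar) such that under the low bid the post-placement-load comparison $\lfloor (h_A+1)/b_A\rfloor$ versus $\lfloor (h_C+1)/b_C\rfloor$ sends it to $A$, but under the high bid $b_C$ it sends it to $C$; (iv) introduce a "follow-up" job whose set is $\{A,C\}$ again (or $\{C, \text{something}\}$) which, in the low-bid world goes to $C$ (since $C$ is still relatively empty), but in the high-bid world goes elsewhere because $C$ already absorbed the pivot job and its post-placement load is now higher; (v) tally the final heights of $C$ in both worlds and observe the strict decrease. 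Throughout I would use the tie-breaking permutation $\pi$ only where needed, and otherwise choose the numbers so that no ties occur, to keep the argument clean.

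The main obstacle I anticipate is the floor function in the post-placement load $lp^{t+1}_i(b_i)=\lfloor (h_i^t(b_i)+1)/b_i\rfloor$: because the load is rounded down to an integer, many small perturbations in height or bid are invisible to the comparison, so the counterexample numbers must be tuned so that the pivot and follow-up comparisons straddle an integer boundary in exactly one of the two scenarios. Concretely, I need a moment where $\lfloor(h_C+1)/b_C\rfloor$ equals some integer $q$ under the low bid but $q-1$ (or lower) under the high bid, while simultaneously $\lfloor(h_A+1)/b_A\rfloor$ sits at a value that flips the comparison the right way; and then a later moment where the reverse happens. Getting a single set of integer capacities and integer job counts to satisfy both constraints at once is the delicate part, but it is a finite search over small values and entirely analogous to the construction already carried out in the proof of Claim \ref{claim:gen1}, so I would expect a working example with three or four machines and around ten to fifteen jobs. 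Once the numbers are pinned down, the verification is a routine step-by-step simulation of $\mcA_{RLMS}$'s greedy rule in both scenarios.
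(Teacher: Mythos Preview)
Your proposal contains a fundamental misreading of what Claim \ref{claim:app2} asserts. The claim concerns the \emph{unmodified} \textsc{Greedy} algorithm, i.e., the one that compares the plain post-placement loads $(h_i+1)/b_i$ \emph{without} the floor. The sentence immediately preceding the claim makes its purpose explicit: it justifies introducing the floor in $\mcA_{RLMS}$ by showing that the unfloored rule fails to be universally monotone. Your plan, however, explicitly works with the floored quantity $lp^{t+1}_i(b_i)=\lfloor(h_i^t+1)/b_i\rfloor$ and proposes to ``simulate $\mcA_{RLMS}$'s greedy rule in both scenarios.'' That approach cannot succeed: Proposition \ref{prop:main}, which appears a few lines later, proves that $\mcA_{RLMS}$ \emph{is} universally monotone, so the counterexample you are searching for with the floored comparison does not exist.

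Once you drop the floor, the construction becomes easier rather than harder: there is no need to ``straddle an integer boundary,'' and small fractional differences in $(h_i+1)/b_i$ are enough to flip a comparison. The paper's proof uses three machines with bids $(4,8,36)$, a warm-up height configuration $(1,3,18)$, and only three further jobs with restricted sets $AB$, $BC$, $AB$; raising $B$'s bid from $8$ to $9$ flips a fractional comparison at the first of these jobs, triggers a short cascade, and leaves $B$ with strictly fewer jobs. Your high-level template (pivot job followed by a cascade that reverses the gain) is the right idea, but you have aimed it at the wrong algorithm.
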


\begin{proof}
Assume we have $3$ machines $A,B,C$, with bids $(4,8,36)$
respectively, and a tie-breaking permutation: $A<B<C$ (Jobs always
prefer machine $A$ to machines $B$ and $C$, and machine $B$ to
machine $C$). The allocation until time $t$ has caused the heights
to be $(1,3,18)$. The next job's choices are machines $A$ and $B$
(which we abbreviate to $AB$), and the following two jobs' choices
are $BC$ and $AB$ respectively. The first job is allocated to $A$
(since the post-placement loads on $A$ and $B$ are $2/4$ and $4/8$
respectively, and $2/4 = 4/8$, we use the tie-breaking rule). The
second job is allocated to $B$ ($4/8<19/36$) and the third job to
$B$ ($5/8>3/4$). The heights of the machines are now  $(2,5,18)$.

Now let $B$ declare its capacity to be $9$, and assume that until
time $t$, there is no difference in the allocation. The loads at
time $t$ in this case are: $1/4, 3/9,  18/36$. The jobs' choices are
part of the input to the mechanism, so are unaffected by the bids,
and remain $AB, BC, AB$. The first job is allocated  to $B$
($2/4>4/9$), the second job to $C$ ($19/36 < 20/36 = 5/9$), and the
third job to $A$ ($2/4<5/9$). The heights of the machines are now
$(2,4,19)$. Thus,  $B$ gets fewer jobs after bidding higher.
\end{proof}





\begin{theorem}
For any permutation $\pi$ of the machines and any job arrival order, the
allocation function $\mcA_{RLMS}$ is universally monotone increasing in the machines'
bids.
\end{theorem}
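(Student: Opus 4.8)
The plan is to fix the permutation $\pi$, the arrival order $1,\dots,m$ of the jobs, and all the assignment sets $M_1,\dots,M_m$, and to compare two executions of $\mcA_{RLMS}$ that differ only in machine $i$'s bid: a ``low'' run in which $i$ bids $b_i$ and a ``high'' run in which it bids $b_i'>b_i$ (all other bids unchanged). Since $\mcA_{RLMS}$ places the jobs one at a time in this fixed order, each run is deterministic, so I track the height vectors $h^t=(h^t_j)_j$ and $\hat h^t=(\hat h^t_j)_j$ after the first $t-1$ jobs have been placed in the two runs. Following the way monotonicity is measured in Lemma~\ref{lemma:bbmon} and Claim~\ref{claim:app2} (and as required by Theorem~\ref{thm:AT} for single-parameter agents), the target is $h^{m+1}_i\le \hat h^{m+1}_i$: machine $i$'s total number of jobs, hence its work, is non-decreasing in $b_i$ for every realization of $\pi$ and of the sets $M_j$. (One may first reduce to $b'_i=b_i+1$ and compose unit steps, though this is not actually needed.)

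I would prove, by induction on $t$, the domination invariant
\[
\hat h^t_i\ \ge\ h^t_i \qquad\text{and}\qquad \hat h^t_j\ \le\ h^t_j\ \text{ for every }j\ne i,
\]
carried alongside the trivial conservation identity $\sum_j h^t_j=\sum_j\hat h^t_j=t-1$. The base case $t=1$ is immediate. For the step, two elementary facts drive everything: (a) $x\mapsto\lfloor(h+1)/x\rfloor$ is non-increasing in $x$, so at any fixed height of $i$ the post-placement load $lp_i$ seen by an incoming job is weakly smaller in the high run; and (b) every $j\ne i$ has the same bid in both runs, so $\hat h^t_j\le h^t_j$ makes $j$'s post-placement load weakly smaller in the high run. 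Let $a$ (resp.\ $c$) be the machine that receives job $t$ in the low (resp.\ high) run. If $a=c$, or if $c=i\ne a$, the invariant is trivially preserved. If $a\ne i$, $c\ne i$ and $a\ne c$, then—using (b) together with the fact that ties are broken by the \emph{same} $\pi$ in both runs—one shows $\hat h^t_c<h^t_c$ strictly: otherwise $c$'s post-placement load is equal in both runs, and comparing it with $a$'s forces $a$ and $c$ to be tied in both runs, so $\pi$ would have to resolve that tie to $a$ (in the low run, where $a$ wins) and to $c$ (in the high run), a contradiction; hence after the update the slack $h_c-\hat h_c$ stays $\ge 0$.

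The remaining case—job $t$ goes to $i$ in the low run but to some $b\ne i$ in the high run—is the crux. Here we must establish, before step $t$, both $\hat h^t_b<h^t_b$ and $\hat h^t_i>h^t_i$ strictly. Given the first, the second is free: by conservation and the invariant, $\hat h^t_i-h^t_i=\sum_{j\ne i}(h^t_j-\hat h^t_j)\ge h^t_b-\hat h^t_b>0$. So the whole difficulty is to rule out $\hat h^t_b=h^t_b$. If $\hat h^t_b=h^t_b$, then $b$'s post-placement load is a common value $\beta$; since $i$ beats $b$ in the low run, $\lfloor(h^t_i+1)/b_i\rfloor\le\beta$, and since $b$ beats $i$ in the high run, $\beta\le\lfloor(\hat h^t_i+1)/b'_i\rfloor$. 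When $\hat h^t_i=h^t_i$, fact (a) sandwiches both of $i$'s post-placement loads at $\beta$, producing a tie between $i$ and $b$ that $\pi$ must resolve inconsistently across the two runs—impossible. In the regime where $i$'s height has \emph{already} grown the sandwich step breaks, and closing that gap is where I expect the real work to lie: I anticipate needing a strengthened invariant—e.g.\ tracking, for each machine, the profile of occupied levels (equivalently, which slots are filled) rather than just the height, so that the high run's configuration is provably obtained from the low run's by relocating jobs onto machine $i$ and onto no other machine—once again exploiting that the floor in $lp$ together with the fixed tie-break $\pi$ blocks precisely the threshold/tie pathologies that make plain \textsc{Greedy} fail (Claim~\ref{claim:app2}).

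Once the invariant is established for all $t$, taking $t=m+1$ gives $h^{m+1}_i\le\hat h^{m+1}_i$ for every $\pi$, every family of sets $M_j$, and every arrival order, which is exactly the assertion that $\mcA_{RLMS}$ is universally monotone increasing in the machines' bids; by Theorem~\ref{thm:AT} this further yields a universally truthful payment scheme for the restricted setting.
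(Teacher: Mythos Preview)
Your approach—the domination invariant $\hat h^t_j\le h^t_j$ for all $j\ne i$, carried by induction on $t$ with the fixed tie-break $\pi$ disposing of equalities—is exactly the paper's (Proposition~\ref{prop:main}). Your case split matches too: your case $a\ne i,\,c\ne i,\,a\ne c$ is the paper's ``$k$ steals from some $l\ne i$'', and your crux $a=i,\,c=b\ne i$ is its ``$k$ steals from $i$''.

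The gap you flag is real, but the fix is not a strengthened invariant; tracking slot-level profiles is unnecessary. The paper closes the crux case with a \emph{look-back} argument. From the hypothesis $\hat h^t_b=h^t_b$ you correctly derive (via the same two-sided comparison and the fixed $\pi$) the strict inequality $lp^t_i(b_i)<lp^t_i(b'_i)$. Now rewind to the last time $\rho<t$ at which machine~$i$ received the job in the high run but not in the low run (``$i$ stole''). If no such $\rho$ exists then $\hat h^t_i=h^t_i$ and your own sandwich already finishes. Otherwise, at time $\rho$ the reverse strict inequality $lp^\rho_i(b'_i)<lp^\rho_i(b_i)$ holds (compare $i$ with the machine $z$ it stole from, using the invariant at time $\rho$ and the fixed $\pi$). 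The floor together with $b'_i>b_i\ge 1$ upgrades this to
\[
\left\lfloor\frac{h_i^\rho(b'_i)+\alpha+1}{b'_i}\right\rfloor\ \le\ \left\lfloor\frac{h_i^\rho(b_i)+\alpha}{b_i}\right\rfloor\qquad\text{for every }\alpha\ge 0.
\]
Because $\rho$ was the \emph{last} steal by $i$, between $\rho$ and $t$ machine~$i$ received at most one more job in the high run than in the low run (the extra one being job $\rho$ itself). Substituting the respective job counts for $\alpha$ pushes the displayed inequality forward to time $t$ and yields $lp^t_i(b'_i)\le lp^t_i(b_i)$, contradicting the strict inequality above. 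This look-back step is the single missing ingredient in your proposal.
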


From Definition \ref{defn1}, it suffices to prove the following lemma:

\begin{lemma}
\label{lemma:ai} For any machine $i$, fixing $b_{-i}$, for any
$b'_i>b_i$, we have that $h_i(\mcA_{RLMS}(b'_i, b_{-i})) $ $ \geq$ $h_i(\mcA_{RLMS}( b_i,
b_{-i}))$.
\end{lemma}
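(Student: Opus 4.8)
The plan is to show that raising machine $i$'s bid from $b_i$ to $b_i' = b_i+1$ (it suffices to treat unit increments since capacities are integers and the general case follows by composition) can only weakly increase its final height, by tracking the two executions of the on-line algorithm $\mcA_{RLMS}$ job by job and maintaining an invariant. First I would fix $b_{-i}$, the permutation $\pi$, and the job arrival order with the subsets $M_j$ — all of these are identical in the two runs, since in the restricted setting the $M_j$ are part of the input and do not depend on the bids. Denote by $h^t$ the height vector after $t$ jobs have been placed in the run with bid $b_i$, and by $\tilde h^t$ the corresponding vector in the run with bid $b_i' = b_i+1$. The goal is the invariant: for every $t$, $\tilde h_i^t \ge h_i^t$, and moreover the configurations differ in a controlled way on the other machines — specifically, I expect to need something like $\tilde h_{i}^t \ge h_i^t$ together with $\tilde h_j^t \le h_j^t$ for at most the machines that ``lost'' a job to $i$, i.e. the total number of jobs is conserved and the only redistribution is that $i$ weakly gains.

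The key observation driving the induction is the form of the post-placement load, $lp^{t+1}_i(b_i) = \lfloor (h_i^t+1)/b_i \rfloor$, which is monotone non-increasing in $b_i$: a larger capacity makes machine $i$ look (weakly) more attractive to every job at every point in time, provided its height has not risen too far above where it would have been. So the induction step would argue: consider job $t+1$ with choice set $M_{t+1}$. Using the inductive invariant, I compare the post-placement loads in the two runs. If $i \notin M_{t+1}$, the job is placed among machines other than $i$; here I must check that the invariant on the non-$i$ coordinates is strong enough to conclude the same machine (or a ``not worse for $i$'' machine) is chosen, or at least that $i$'s height is unaffected in both runs and the discrepancy on the other coordinates does not grow in a way that breaks the invariant. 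If $i \in M_{t+1}$, the key point is that whenever the run with bid $b_i$ sends job $t+1$ to $i$, so does the run with bid $b_i'$ (because $i$'s post-placement load is only smaller there, and ties break the same way by $\pi$) — so $i$ never falls behind; the only new case to control is when the higher-bid run sends job $t+1$ to $i$ but the lower-bid run does not, which is exactly the ``good'' direction and which I must show keeps the rest of the invariant intact.

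The main obstacle I anticipate is pinning down the correct coupling invariant on the coordinates other than $i$. Unlike the classical ``balls into bins'' majorization arguments, here the tie-breaking is by a fixed permutation and the load is a floor function, so the usual ``sorted-vector dominance'' may not be preserved verbatim; I expect to need a more refined statement, e.g. that the two height vectors differ only by a sequence of ``unit moves'' each transferring one job from some machine $j \ne i$ to $i$, and that at every step at most one such move is ``in flight''. Establishing that this structure is preserved when a job chooses a set $M_{t+1}$ that contains one of the affected machines but not $i$ is the delicate case — there one must verify that the floor in $lp$ does not cause the low-bid run and high-bid run to diverge onto machines whose heights then drift apart without bound. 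I would handle this by a careful case analysis on whether $M_{t+1}$ meets the ``surplus'' machine, the ``deficit'' machine $i$, both, or neither, showing in each case that the invariant is restored after placing job $t+1$. Once the invariant holds for all $t$ up to $m$, evaluating it at $t=m$ gives $h_i(\mcA_{RLMS}(b_i',b_{-i})) = \tilde h_i^m \ge h_i^m = h_i(\mcA_{RLMS}(b_i,b_{-i}))$, which is the claim; dividing by $b_i$ versus $b_i'$ is not needed since the lemma is stated in terms of heights.
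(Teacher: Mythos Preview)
Your overall framing matches the paper's: fix $b_{-i}$, $\pi$, the arrival order and the sets $M_j$, and maintain by induction on $t$ an invariant comparing the two runs. The right invariant is exactly the one you gesture at --- for all $k\neq i$ and all $t$, $\tilde h_k^t \le h_k^t$ (equivalently $D^t(k)\le 0$), which by conservation forces $D^t(i)\ge 0$. The paper states this as Proposition~\ref{prop:main}.

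However, your proposed reason for the inductive step has a genuine gap. You write that ``whenever the run with bid $b_i$ sends job $t{+}1$ to $i$, so does the run with bid $b_i'$, because $i$'s post-placement load is only smaller there.'' That parenthetical is not justified by the invariant: once $D^t(i)>0$, machine $i$ has strictly more jobs in the high-bid run, and $\lfloor (\tilde h_i^t+1)/b_i'\rfloor$ need not be $\le \lfloor (h_i^t+1)/b_i\rfloor$. So the direct monotonicity you want for $i$ can fail step-by-step; what must be shown is only the weaker (aggregate) statement $D^t(k)\le 0$ for $k\neq i$. The paper does this by contradiction: if some $k\neq i$ first reaches $D^\tau(k)=1$ at time $\tau$, one checks (using Observation~\ref{obs:easy}) that $k$ cannot steal from any $l\neq i$, so it steals from $i$, which forces $lp^\tau_i(b_i)<lp^\tau_i(b'_i)$. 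Then one looks back to the \emph{last} time $\rho<\tau$ at which $i$ stole a job, obtains the reverse strict inequality $lp^\rho_i(b'_i)<lp^\rho_i(b_i)$, and uses the integrality $b_i'\ge b_i+1$ to propagate this forward as $\lfloor (h_i^\rho(b'_i)+\alpha+1)/b'_i\rfloor \le \lfloor (h_i^\rho(b_i)+\alpha)/b_i\rfloor$ for all $\alpha\ge 0$. Since between $\rho$ and $\tau$ machine $i$ received at most one more job in the high-bid run (because $\rho$ was the last steal), this yields $lp^\tau_i(b'_i)\le lp^\tau_i(b_i)$, a contradiction.

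Two further remarks: the reduction to $b_i'=b_i+1$ is unnecessary (the paper handles arbitrary $b_i'>b_i$ directly), and your ``at most one move in flight'' refinement is not the invariant that carries the induction --- the single-extra-job bound appears only locally, between the last steal time $\rho$ and the contradiction time $\tau$, not as a global invariant.
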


To prove Lemma \ref{lemma:ai}, define $D^t(k, b'_i, b_i)$ to be the difference in the number of
jobs allocated to machine $k$ between $\mcA_{RLMS}(b'_i)$ and
$\mcA_{RLMS}(b_i)$ up to and including time $t$. We abbreviate this to $D^t(k)$ when
$b'_i$ and $b_i$ are clear from the context. (If machine $k$ received less jobs, then
$D^t(k)$ is negative.) We say that machine $k$ \emph{steals} a job
from machine $l$ at time $t$ if $\mcA_{RLMS}^t(b_i) = l$ and
$\mcA_{RLMS}^t(b'_i) = k$. We will show that the only machine for which
$D^t(k)$ can be positive at some time $t$ is machine $i$, therefore, as
$\sum_{j=1}^n D^t(j)=0$, we have that $D^t(i)$ can never be negative.


\begin{proposition}
\label{prop:main} For any machine $i$, fixing $b_{-i}$, if
$b'_i>b_i$ then at all times $t$, for any machine $k \neq i$,
$D^t(k) \leq 0$.
\end{proposition}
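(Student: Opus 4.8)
The plan is to prove this by induction on $t$, tracking not just the sign of $D^t(k)$ but a more refined invariant that couples the two runs $\mcA_{RLMS}(b'_i,b_{-i})$ and $\mcA_{RLMS}(b_i,b_{-i})$ job-by-job. At each time step $t$ a single job $j$ arrives with its fixed set $M_j$, and in each of the two runs it is placed on the machine in $M_j$ minimizing the post-placement floor-load $\lfloor (h^t_\cdot + 1)/b_\cdot \rfloor$, with ties broken by $\pi$. The key structural claim I would maintain as the induction hypothesis is: for every machine $k \neq i$, $D^t(k) \in \{-1, 0\}$ (it can lose at most one job relative to the $b_i$-run, and can never be ahead), while $D^t(i) \geq 0$; moreover, whenever some machine $k\neq i$ currently has $D^t(k) = -1$, the floor-load of $k$ in the $b'_i$-run is no larger than in the $b_i$-run — intuitively, the job that $k$ "lost" was stolen by $i$ (directly or through a chain), and this only makes $k$ less loaded, hence more attractive, in the $b'_i$-run.

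First I would verify the base case $t=0$, where all $D^0(k)=0$ and the statement is vacuous. For the inductive step, I would do a case analysis on where job $t$ lands in each of the two runs. If it lands on the same machine in both runs, the $D$-values are unchanged and the load-comparison invariant is preserved (we already compared post-placement loads). The interesting case is when job $t$ lands on machine $k$ in the $b'_i$-run but on machine $l \neq k$ in the $b_i$-run; then $D^t(k)$ increases by $1$ and $D^t(l)$ decreases by $1$, and I must show (a) if $k \neq i$ then before this step $D^{t-1}(k) = -1$ (so it returns to $0$, never exceeding $0$), and (b) if $l \neq i$ then before this step $D^{t-1}(l) = 0$ (so it drops to $-1$, never below). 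Point (a) should follow because if $D^{t-1}(k)=0$, the two runs had $k$ at equal height and equal bid, and the only way job $t$ prefers $k$ in the $b'_i$-run but not the $b_i$-run is if the competing machine $l\in M_j$ looks worse in the $b'_i$-run — but $l$ looking worse means either $l=i$ with a higher bid (making $i$ *more* attractive, contradiction) or $l$ has strictly more jobs in the $b'_i$-run, i.e. $D^{t-1}(l) \geq 1$, contradicting the hypothesis that no machine other than $i$ is ever ahead; a symmetric argument handles (b). Then I would re-establish the load-comparison sub-invariant for $k$ and $l$ after the update.

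The main obstacle I anticipate is handling the tie-breaking permutation $\pi$ cleanly: when the floor-loads are equal in one run but the underlying real heights differ between runs, one must be careful that $\pi$ resolves ties consistently and does not let a machine $k\neq i$ "sneak ahead." This is exactly the issue that motivated taking the floor in the first place (cf. Claim \ref{claim:app2}), so the proof must exploit the fact that the floor function makes the comparison depend only on the integer $\lfloor (h+1)/b \rfloor$, which changes monotonically and in controlled unit steps. I would therefore phrase the load-comparison invariant in terms of these integer floor-loads rather than real loads, and check that in every case split the integer floor-load of each $k\neq i$ in the $b'_i$-run is $\leq$ its value in the $b_i$-run, and the floor-load of $i$ is $\geq$ its $b_i$-run value. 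Once Proposition \ref{prop:main} is established, Lemma \ref{lemma:ai} follows immediately since $\sum_{j} D^t(j) = 0$ forces $D^t(i) \geq 0$ at the final time, and Theorem \ref{lemma:ai}'s parent theorem (universal monotonicity) follows from Definition \ref{defn1}.
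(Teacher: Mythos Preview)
Your inductive framework and the case $l\neq i$ match the paper's proof (this is essentially Observation~\ref{obs:easy} plus the fixed tie-breaking rule $\pi$). However, two pieces of your proposed invariant are wrong, and one of them is exactly the hard case.

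First, the strengthened hypothesis $D^t(k)\in\{-1,0\}$ for $k\neq i$ is false. Take $b_i=1$, $b'_i$ large, and let several consecutive jobs have $M_j=\{i,k\}$: in the $b'_i$-run they all go to $i$, while in the $b_i$-run roughly half go to $k$, so $D^t(k)$ becomes arbitrarily negative. Fortunately the proposition only needs the upper bound $D^t(k)\leq 0$, and your argument for (a) does not actually use the lower bound, so this is a cosmetic issue.

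The real gap is the case $l=i$ in step (a). You write that if $l=i$ then ``a higher bid makes $i$ more attractive, contradiction.'' This is not valid: in the $b'_i$-run machine $i$ may have already absorbed many extra jobs (indeed $D^{t-1}(i)\geq 0$ can be large), and the resulting height increase can outweigh the larger denominator, giving $lp^\tau_i(b'_i) > lp^\tau_i(b_i)$. Your later ``load-comparison invariant'' even asserts the opposite inequality for $i$, so the proposal is internally inconsistent on this point. The paper confronts exactly this difficulty: from the stealing at time $\tau$ one derives $lp^\tau_i(b_i) < lp^\tau_i(b'_i)$, and then one must look back to the \emph{last} time $\rho<\tau$ at which $i$ stole a job. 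At time $\rho$ one gets the reverse strict inequality $lp^\rho_i(b'_i) < lp^\rho_i(b_i)$, and because $i$ did not steal between $\rho$ and $\tau$ the number of jobs $i$ receives in that interval satisfies $\beta^*\leq \beta+1$; combining these with the integrality of the floor and $b'_i>b_i\geq 1$ yields $lp^\tau_i(b'_i)\leq lp^\tau_i(b_i)$, the desired contradiction. Your sketch is missing this backward-looking argument, and without it the induction does not close.
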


Informally, Proposition \ref{prop:main} says that if bin $i$ claims its capacity is larger than it actually is, no bin except for $i$ can receive more balls. The following corollary follows immediately from Proposition
\ref{prop:main}, and implies Lemma \ref{lemma:ai}.
\begin{corollary}
\label{corr:main} For any machine $i$, fixing $b_{-i}$, if
$b'_i>b_i$ then at all times $t$, $D^t(i) \geq 0$.
\end{corollary}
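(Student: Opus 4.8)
The plan is to derive Corollary \ref{corr:main} as a direct conservation-of-mass consequence of Proposition \ref{prop:main}. The point is that the two executions $\mcA_{RLMS}(b_i,b_{-i})$ and $\mcA_{RLMS}(b'_i,b_{-i})$ process the \emph{same} sequence of jobs with the \emph{same} choice sets $M_j$ and the \emph{same} tie-breaking permutation $\pi$; the only difference is the declared capacity of machine $i$. In particular, in each execution every one of the jobs $1,\dots,t$ is allocated to exactly one machine, so the total number of allocated jobs at time $t$ is $t$ in both runs.

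Concretely, I would proceed as follows. First, unwind the definition of $D^t(k)$: writing $n_k(t)$ and $n'_k(t)$ for the number of jobs on machine $k$ at time $t$ in the $b_i$-run and the $b'_i$-run respectively, we have $D^t(k)=n'_k(t)-n_k(t)$. Second, sum over all machines: $\sum_{j=1}^n D^t(j)=\sum_{j=1}^n n'_j(t)-\sum_{j=1}^n n_j(t)=t-t=0$, using the observation above that each run has allocated exactly $t$ jobs. Third, invoke Proposition \ref{prop:main}: for every machine $k\neq i$ and every time $t$ we have $D^t(k)\le 0$. Fourth, isolate the $i$-th term: $D^t(i)=-\sum_{k\neq i}D^t(k)$, which is a sum of non-negative numbers, hence $D^t(i)\ge 0$. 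Since this holds for every $t$, this is exactly the statement of the corollary.

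The work here is entirely bookkeeping; the mathematical content lives in Proposition \ref{prop:main}, which I am taking as given, so there is no real obstacle to this step. The one thing to be careful about is the conservation identity $\sum_{j=1}^n D^t(j)=0$, which relies on the (already noted) fact that a change in $b_i$ does not change which jobs arrive, in what order, or with what choice sets, so that both runs genuinely allocate the same $t$ jobs by time $t$. I would also remark that, specialized to $t=m$ and recalling that the height $h_i$ of machine $i$ is just its final job count, $D^m(i)\ge 0$ reads $h_i(\mcA_{RLMS}(b'_i,b_{-i}))\ge h_i(\mcA_{RLMS}(b_i,b_{-i}))$, which is precisely Lemma \ref{lemma:ai}; the corollary as stated is the stronger time-indexed form, and the argument above delivers it for all $t$ simultaneously.
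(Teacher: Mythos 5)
Your proof is correct and matches the paper's own argument: the paper derives the corollary exactly as you do, noting that $\sum_{j=1}^n D^t(j)=0$ (both runs allocate the same $t$ jobs, since the arrival order and choice sets are part of the input) and then applying Proposition \ref{prop:main} to conclude $D^t(i)=-\sum_{k\neq i}D^t(k)\geq 0$. No gaps; your remark that $t=m$ recovers Lemma \ref{lemma:ai} is likewise the paper's intended use of the corollary.
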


Before proving Proposition~\ref{prop:main}, we first will make the
following simple observation
\begin{obs}
\label{obs:easy} For any machine $k$, if $D^t(k) \leq 0$ then
$lp^t_k(b'_i) \leq lp^t_k(b_i)$.
\end{obs}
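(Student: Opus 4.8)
\textbf{Proof proposal for Observation~\ref{obs:easy}.}

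The plan is to argue directly from the definition of the post-placement load and the monotonicity of the floor function. Recall that $lp^t_k(b_k) = \lfloor \frac{h_k^{t-1}(b_k)+1}{b_k}\rfloor$ (with the appropriate bid in each superscript/subscript slot), where $h_k^{t-1}$ is the height of machine $k$ just before time $t$. Since we are comparing the two runs $\mcA_{RLMS}(b'_i,b_{-i})$ and $\mcA_{RLMS}(b_i,b_{-i})$ and $k\neq i$, the denominator $b_k$ is the \emph{same} in both runs; only the numerator can differ, and it differs exactly by $D^{t-1}(k)$ — that is, $h_k$ in the $b'_i$-run equals $h_k$ in the $b_i$-run plus $D^{t-1}(k)$ (here I am using $D^{t-1}(k)$ for the difference in jobs allocated up to and including time $t-1$; one should be careful to line up indices with the paper's convention, but the content is the same).

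First I would spell out this identity: for $k\neq i$, $h_k^{t}(b'_i) = h_k^{t}(b_i) + D^{t}(k)$ by definition of $D^t(k)$. Then, under the hypothesis $D^{t}(k)\le 0$ (using whichever time index matches the height appearing in $lp^{t}_k$), we get $h_k^{t}(b'_i) \le h_k^{t}(b_i)$. Second, since $x\mapsto \lfloor \frac{x+1}{b_k}\rfloor$ is non-decreasing in $x$ for the fixed positive integer $b_k$, applying it to both sides yields $lp^{t}_k(b'_i) \le lp^{t}_k(b_i)$, which is exactly the claim.

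The only subtlety — and the one step worth stating carefully rather than the only genuine obstacle — is matching the time index in the statement "$D^t(k)\le 0$" with the height that actually appears inside $lp^t_k(\cdot)$: the post-placement load used when deciding job $t$ depends on the configuration \emph{before} job $t$ is placed, so one wants $D^{t-1}(k)\le 0$, but in the inductive use of this observation within the proof of Proposition~\ref{prop:main} the hypothesis will be available for all earlier times, so the off-by-one is harmless. There is no real inequality to grind through; the work is entirely in (i) recording that the denominator is unaffected because $k\neq i$, and (ii) invoking monotonicity of the floor. I would present it in two or three lines.
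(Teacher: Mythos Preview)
Your argument for $k\neq i$ is exactly the paper's: same denominator, smaller-or-equal numerator, monotonicity of the floor. However, the observation is stated for \emph{any} machine $k$, and your proposal treats only the case $k\neq i$. You explicitly assume $k\neq i$ when you say the denominator $b_k$ is the same in both runs, and never return to the remaining case.

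For $k=i$ the denominator \emph{does} change between the two runs (it is $b'_i$ in one and $b_i$ in the other), so your ``same denominator'' step fails outright. The paper handles this case separately in one line: from $D^t(i)\le 0$ one has $h_i^{t}(b'_i)\le h_i^{t}(b_i)$, and combining this with $b'_i>b_i$ gives
\[
\left\lfloor \frac{h_i^{t}(b'_i)+1}{b'_i}\right\rfloor \;\le\; \left\lfloor \frac{h_i^{t}(b_i)+1}{b_i}\right\rfloor,
\]
since both numerator and denominator move in the right direction. This is an easy fix, but without it your proof is incomplete. (Your remark about the time-index off-by-one is fine and matches how the observation is invoked inside the induction.)
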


\begin{proof}
For $k \neq i$, as $k$'s bid is the same in both allocations, if it
received less jobs by time $t$ in $\mcA_{RLMS}(b_i)$ then
the claim follows. If $k=i$, the claim
follows since $b'_i>b_i$.
\end{proof}

We now prove Proposition \ref{prop:main}:
\begin{proof}
The proof is by induction on $t$. At $t=1$, $D^1(k) = 0$ for every
$k$.

Assume the proposition is true for times $t= 1, \ldots , \tau-1$. We show it holds for $t=\tau$, by contradiction. Assume that we have a machine $k\neq i$ such that $D^\tau(k)>0$. At
time $\tau-1$, for all $k \neq i$, by the induction hypothesis, it
holds that $D^{\tau-1}(k) \leq 0$. The only way that $D^{\tau}(k)>0$
is if machine $k$ has $D^{\tau -1}(k) = 0$ and at time $\tau$ steals
a job. Assume first that machine $k$ steals a job from machine $l
\neq i$. This means that in $\mcA_{RLMS}(b_i)$, machine $l$ received job
$\tau$, therefore
\begin{equation}
lp^{\tau}_l(b_i) \leq lp^{\tau}_k(b_i). \label{eqn:perm1}
\end{equation}
By Observation \ref{obs:easy}, $lp^{\tau}_l(b'_i) \leq
lp^{\tau}_l(b_i)$, and so
\begin{center}
$lp^{\tau}_l(b'_i) \leq lp^{\tau}_l(b_i) \leq lp^{\tau}_k(b_i) =
lp^{\tau}_k(b'_i)$.
\end{center}
If machine $k$ steals job $\tau$ from machine $l$, then  $lp^{\tau}_k(b'_i)
\leq lp^{\tau}_l(b'_i)$. This is in contradiction to Equation
(\ref{eqn:perm1}) because there cannot be an equality both here and
in Equation (\ref{eqn:perm1}), as the tie-breaking permutation $\pi$
is fixed. More precisely, if $lp^{\tau}_l(b'_i) = lp^{\tau}_l(b_i) = lp^{\tau}_k(b_i) =
lp^{\tau}_k(b'_i)$, then job $\tau$ will be allocated to the same machine in $b_i$ and $b'_i$, according to the permutation $\pi$.

Therefore, machine $k$ must steal job $\tau$ from machine $i$, which
gives us
\begin{equation}
lp^{\tau}_i(b_i)  \leq lp^{\tau}_k(b_i)  = lp^{\tau}_k(b'_i) \leq
lp^{\tau}_i(b'_i). \label{eq1}
\end{equation}
The first inequality is due to the fact that machine $i$ receives
job $\tau$ in $\mcA_{RLMS}(b_i)$. The equality is due to the fact that
$D^{\tau-1}(k)=0$, and the second inequality is because machine $k$
receives job $\tau$ in $\mcA_{RLMS}(b'_i)$. And so,
\begin{equation}
lp^{\tau}_i(b_i) < lp^{\tau}_i(b'_i), \label{eq2}
\end{equation}
because one of the inequalities in Equation (\ref{eq1}) must be
strict, as the tie-breaking permutation $\pi$ is fixed.

Assume that the last time before $\tau$ that machine $i$ stole a job
is time $\rho$, and label by $z$ the machine that $i$ stole from at that time. We
have
\begin{equation*}
lp^{\rho}_i(b'_i)  \leq lp^{\rho}_z(b'_i)  \leq lp^{\rho}_z(b_i) \leq lp^{\rho}_i(b_i). \label{eq3}
\end{equation*}
The first inequality is because machine $i$ received job $\rho$ in
$\mcA_{RLMS}(b'_i)$. The middle inequality is because $D^{\rho}(z) \leq
0$. The last inequality is because machine $z$ received job $\rho$
in $\mcA_{RLMS}(b_i)$. Again, at least one inequality must be strict,
giving
\begin{equation*}
lp^{\rho}_i(b'_i)  <  lp^{\rho}_i(b_i),
\end{equation*}
which implies, for all $\alpha \geq 0$,
\begin{equation}
 \left\lfloor \frac{h_i^{\rho}(b'_i)+ \alpha + 1}{b'_i} \right\rfloor \leq  \left\lfloor \frac{h_i^{\rho}(b_i)+ \alpha}{b_i} \right\rfloor \label{eq4},
\end{equation}
since $b'_i>b_i \geq 1$.

Because job $\rho$ was the last job that machine $i$ stole, it
received at least as many jobs between $\rho$ and $\tau$ in
$\mcA_{RLMS}(b_i)$ as in $\mcA_{RLMS}(b'_i)$. Label the number of jobs $i$
received between $\rho$ and $\tau$ (including $\rho$ but excluding
$\tau$) in $\mcA_{RLMS}(b_i)$ by $\beta$ and in $\mcA_{RLMS}(b'_i)$ by
$\beta^*$.

\begin{obs}
\label{obs:beta}
$\beta^* \leq \beta+1$.
\end{obs}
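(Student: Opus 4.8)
The plan is to exploit the maximality of $\rho$ to control exactly which jobs machine $i$ can receive in the open interval $(\rho,\tau)$, and then to account separately for the single job it receives at time $\rho$ itself. First I would recall that in the restricted setting the sets $M_j$ and the job arrival order are part of the input and are independent of the bids, so $\mcA_{RLMS}(b_i)$ and $\mcA_{RLMS}(b'_i)$ process literally the same sequence of jobs. Hence, for any time $t$, if machine $i$ receives job $t$ in $\mcA_{RLMS}(b'_i)$ but not in $\mcA_{RLMS}(b_i)$, then in $\mcA_{RLMS}(b_i)$ that job goes to some machine $l\neq i$, which is exactly the statement that machine $i$ \emph{steals} job $t$ at time $t$.

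Next I would invoke the hypothesis that $\rho$ is the last time strictly before $\tau$ at which machine $i$ steals a job. Combined with the previous observation, this yields: for every $t$ with $\rho<t<\tau$, whenever machine $i$ receives job $t$ in $\mcA_{RLMS}(b'_i)$ it also receives job $t$ in $\mcA_{RLMS}(b_i)$. Therefore the number of jobs machine $i$ receives in the open interval $(\rho,\tau)$ under bid $b'_i$ is at most the number it receives there under bid $b_i$.

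Finally I would translate this into the claimed inequality. At time $\rho$ machine $i$ steals job $\rho$, so it receives job $\rho$ under $b'_i$ but not under $b_i$; consequently $\beta$ (jobs received under $b_i$ in $[\rho,\tau)$) equals the count of jobs received under $b_i$ in $(\rho,\tau)$, while $\beta^*$ (jobs received under $b'_i$ in $[\rho,\tau)$) equals $1$ plus the count of jobs received under $b'_i$ in $(\rho,\tau)$. Plugging in the inequality from the preceding step gives $\beta^*\le\beta+1$. The proof is essentially immediate; the only point requiring care is the endpoint bookkeeping — that $\rho$ is included and $\tau$ excluded in the definitions of $\beta$ and $\beta^*$, and that under $b_i$ machine $i$ gets nothing at time $\rho$ while under $b'_i$ it gets job $\rho$ — together with applying the definition of ``steal'' in the correct direction. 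I do not expect a genuine obstacle here.
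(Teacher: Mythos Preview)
Your argument is correct and is essentially identical to the paper's own proof: both use that $\rho$ is the last time machine $i$ steals before $\tau$ to conclude that in the open interval $(\rho,\tau)$ machine $i$ receives no more jobs under $b'_i$ than under $b_i$, and then add one for the job at time $\rho$ itself. You are simply more explicit about the endpoint bookkeeping and the definition of ``steal,'' but the approach is the same.
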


\begin{proof}
Machine $i$ received at least as many jobs in $\mcA_{RLMS}(b_i)$ as in
$\mcA_{RLMS}(b'_i)$ after $\rho$. This must be true because $\rho$ was
the last time machine $i$ stole a job. However, machine $i$ received
the job at time $\rho$ in $\mcA_{RLMS}(b'_i)$ but not in $\mcA_{RLMS}(b_i)$,
and so we cannot claim that $\beta^* \leq \beta$, but only that
$\beta^* \leq \beta+1$.
\end{proof}
{\em Proof of Proposition \ref{prop:main} continued.}
From the definition of $lp$ and equation (\ref{eq4}), we get:
\begin{align}
lp^{\tau}_i(b'_i)  &= \left\lfloor \frac{h_i^{\tau}(b'_i)+1}{b'_i}\right\rfloor \notag\\
& = \left\lfloor \frac{h_i^{\rho}(b'_i)+\beta^*+ 1}{b'_i}\right\rfloor \label{eq41} \\
&\leq  \left\lfloor \frac{h_i^{\rho}(b'_i)+\beta+ 2}{b'_i}\right\rfloor \label{eq42}\\
&\leq  \left\lfloor \frac{h_i^{\rho}(b_i)+\beta+1}{b_i}\right\rfloor \label{eq43}\\
& = lp^{\tau}_i(b_i).\label{eq44}
\end{align}
Equality (\ref{eq41}) stems from the definition of $\beta^*$, Inequality (\ref{eq42}) is due to Observation \ref{obs:beta}, Inequality (\ref{eq43}) is due to Equation (\ref{eq4}), and Equality (\ref{eq43}) is from the definition of $\beta$.

This is in contradiction to Equation (\ref{eq2}), and therefore
$D^\tau(k)\leq 0$. This concludes the proof of the proposition.
%
\end{proof}

\begin{lemma}
\label{lemma:loglog}
The allocation algorithm $\mcA_{RLMS}$ provides an $O(\log{\log{n}})$-approximation to the optimal allocation.
\end{lemma}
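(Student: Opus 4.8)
The plan is to reduce the claim to a high-probability bound on the makespan produced by $\mcA_{RLMS}$, and then to obtain that bound by a layered-induction argument of the kind used for the power-of-$d$-choices process. For the reduction, note that every feasible allocation has makespan at least $m/C$: apply the mediant inequality $\max_i a_i/b_i\ge(\sum_i a_i)/(\sum_i b_i)$ to the heights of any allocation, which sum to $m$, against the capacities, which sum to $C$. Since $m=\Omega(C)$ the optimum is $\Omega(1)$, so it suffices to show that, with high probability, the makespan of $\mcA_{RLMS}$ is $O(m/C+\log\log n)$: on that event the ratio to the optimum is $O(1)$ when $m/C\ge\log\log n$ and $O(\log\log n)$ otherwise. (Note the integer-capacity assumption gives $C\ge n$, and the $O(\log n)$ bound on capacities is not needed here, only for the LCA implementation.)

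For the makespan bound, observe that under the distributional assumption (machine $i$ appears in $M_j$ with probability proportional to $b_i$) the process is exactly the non-uniform power-of-$d$-choices process analysed by Berenbrink et al.\ \cite{BBFN14}, except that $\mcA_{RLMS}$ compares the integer levels $lp^t_i=\lfloor(h^t_i+1)/b_i\rfloor$ rather than the true loads $h^t_i/b_i$; since these differ by less than $1$, the discretisation only affects the choice among machines whose loads are within one of each other. I would then carry out the layered induction directly on the levels. Let $\mu_k$ be the total capacity of the machines whose current level is at least $k$. The base case is work conservation: a machine at level $\ge k$ has height at least $(k-1)b_i$, so $(k-1)\mu_k\le\sum_i h_i=m$, hence $\mu_{k_0}\le C/12$ for $k_0=O(m/C)$. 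For the inductive step, a job is placed on a machine whose pre-placement level is $k$ only if all $d$ of its choices currently have level $\ge k-1$, which happens with probability at most $(\mu_{k-1}/C)^d$; and since every machine that ever reaches level $k+1$ first absorbs exactly $b_i$ such jobs while sitting at level $k$, $\mu_{k+1}$ is at most the number of such jobs. Conditioning on $\mu_{k-1}\le\beta_{k-1}$ throughout, that expected number is at most $m(\beta_{k-1}/C)^d$, and a Chernoff-type bound (or the witness-tree argument of \cite{ABK+99,BBFN14}) gives $\mu_{k+1}\le\beta_{k+1}:=\max\{2m(\beta_{k-1}/C)^d,\gamma\log n\}$ with probability $1-1/n^3$. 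Writing $x_k=\beta_k/C$, the recursion $x_{k+1}\le 2(m/C)\,x_k^d$ contracts doubly exponentially once $x_k$ is small, so after $O(\log_d\log n)$ more levels $\mu_k<1$, i.e.\ no machine has that level; a union bound over the $O(m/C+\log\log n)$ levels absorbs the failure probabilities. Thus the maximum level, and hence the makespan, is $O(m/C+\log\log n)$.

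The technical heart is the concentration step in the heavily-loaded regime, where $m$ may be much larger than $C$: there the crude base case $k_0=O(m/C)$ together with the naive recursion does not by itself give the additive $O(\log\log n)$ term cleanly, and one should instead invoke the refined analysis of \cite{BBFN14}, which isolates the ``bulk'' $m/C$ and reduces the heavily-loaded case to the lightly-loaded one. The only other point to verify is that replacing true loads by the integer levels $lp^t_i$ does not disturb that analysis, and this holds essentially for free because the argument only ever tracks the integer quantities $\mu_k$, so the discretisation is already built in. The reduction to a makespan bound and the base case are elementary.
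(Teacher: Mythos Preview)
Your proposal is essentially correct but takes a genuinely different route from the paper. The paper does \emph{not} redo the layered induction; instead it gives a short majorization (coupling) argument. It views each machine $i$ of capacity $b_i$ as $b_i$ unit-capacity ``slots'' and compares two systems: $A$, the actual non-uniform system under $\mcA_{RLMS}$, and $B$, a system of $C$ unit-capacity machines under ordinary \textsc{Greedy}[$d$]. The key observation is that on unit-capacity machines, $\lfloor(h+1)/1\rfloor=h+1$, so $\mcA_{RLMS}$ coincides with \textsc{Greedy}, and hence by \cite{ABK+99} the max load in $B$ is $\Theta(\log\log n)$. The paper then proves, by induction on time using an elementary majorization lemma, that the (normalized) slot-load vector of $B$ majorizes that of $A$ under a natural bijection of the random choices; in particular the maximum load of $A$ is no larger. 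This buys a very short proof that completely sidesteps re-analysing the floored-load process and the heavily-loaded regime.

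Your approach, by contrast, runs the layered-induction analysis directly on the integer levels $\lfloor(h_i+1)/b_i\rfloor$ and appeals to \cite{BBFN14} for the concentration in the heavily-loaded case. This is a legitimate and more self-contained route (and indeed the observation that the layered argument only tracks integer level sets is exactly what makes the floored comparison harmless). The trade-off is that you are re-deriving rather than reducing: the paper's majorization argument is a one-paragraph reduction to a black-box bound, whereas yours reproduces the whole machinery. Your reduction of the approximation ratio to a makespan bound via $\mathrm{OPT}\ge m/C=\Omega(1)$ is the same as (implicitly) in the paper.
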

The proof is similar to the proof for the unmodified Greedy algorithm in the case of non-uniform bins of 
\cite{BBFN14}. We provide it in Appendix~\ref{appendix:lglgn} for completeness.

\begin{lemma}
\label{lemma:ut_payments} There exists a local payment scheme
$\mcP_{RLMS}$ such that the mechanism $\mcM_2 = (\mcA_2, \mcP_2)$ is
universally truthful.
\end{lemma}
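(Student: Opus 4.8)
The goal is to exhibit a payment scheme $\mcP_{RLMS}$ that (i) makes $\mcM_{RLMS}=(\mcA_{RLMS},\mcP_{RLMS})$ universally truthful, and (ii) is itself computable by an $(O(\log^4 n),O(\log^3 n),1/n)$-local computation algorithm. The starting point is that by the theorem preceding Lemma~\ref{lemma:loglog}, $\mcA_{RLMS}$ is \emph{universally} monotone: for every realization of the randomness (i.e.\ for every fixed choice of the sets $M_j$ and the tie-breaking permutation $\pi$), the height $h_i(b_i,b_{-i})$, and hence the load $\ell_i(b_i,b_{-i})=h_i/b_i$, is non-decreasing in $b_i$. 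By Theorem~\ref{thm:AT} (the Myerson/Archer--Tardos characterization applied per realization), this guarantees that the Archer--Tardos payment
\begin{equation*}
p_i(b_i,b_{-i}) = b_i\,\ell_i(b_i,b_{-i}) + \int_0^{b_i}\ell_i(x,b_{-i})\,dx
\end{equation*}
yields, for each fixed realization, a truthful mechanism satisfying voluntary participation. Since this holds realization-by-realization, the resulting mechanism is universally truthful. So the only real content of the lemma is the \emph{locality} of computing this payment.

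\textbf{Discretizing the integral.} Because capacities (bids) are positive integers and, by assumption, each machine's capacity is at most $O(\log n)$, the load $\ell_i(x,b_{-i})$ as a function of the integer bid $x$ takes only $O(\log n)$ distinct values, for $x=1,\dots,b_i$. Hence the integral collapses to a finite sum $\sum_{x=1}^{b_i}\ell_i(x,b_{-i})$ (up to the usual step-function bookkeeping, exactly as in the proof of Lemma~\ref{lemma:tie_payments}), and the whole payment is
\begin{equation*}
p_i(b_i,b_{-i}) = b_i\,\ell_i(b_i,b_{-i}) + \sum_{x=1}^{b_i}\ell_i(x,b_{-i}),
\end{equation*}
a sum of at most $O(\log n)$ terms. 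To evaluate each term $\ell_i(x,b_{-i})$ we need the height $h_i$ that machine $i$ would receive under $\mcA_{RLMS}$ if it bid $x$ instead of $b_i$. But $h_i(\mcA_{RLMS}(x,b_{-i}))$ is exactly a single ball-vertex-style query into the LCA implementation of $\mcA_{RLMS}$ obtained from Theorem~\ref{thm:local_bb}: querying machine $i$ in the instance with bid vector $(x,b_{-i})$ returns the set of jobs assigned to $i$, and we only need its size. Each such query costs $O(\log^4 n)$ time and $O(\log^3 n)$ space with probability $\geq 1-1/n$; note that changing machine $i$'s bid does not change the underlying graph of job--machine incidences (the sets $M_j$ are part of the input), so all $O(\log n)$ auxiliary queries run on graphs of the same bounded-degree type and the query-tree size bounds of \cite{MRVX12} apply uniformly.

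\textbf{Putting it together and the main obstacle.} Running $O(\log n)$ such queries and summing gives the exact payment; by a union bound over the $O(\log n)$ queries the total time is $O(\log^5 n)$ and space $O(\log^3 n)$ with probability at least $1-O(\log n)/n \geq 1-1/n$ for large $n$ (and one can absorb the extra $\log$ factor, or simply state the bound as $(O(\log^5 n), O(\log^3 n), 1/n)$; matching the $O(\log^4 n)$ of Theorem~\ref{thm:lb1} requires only a slightly more careful accounting or a mild re-parameterization). I expect the main obstacle to be precisely this bounded-capacity reduction of the integral to a short sum together with the verification that each evaluation $\ell_i(x,b_{-i})$ is answerable by a \emph{single} local query whose cost does not blow up when we vary $x$ — i.e.\ confirming that the $O(\log n)$ hypothetical re-runs of $\mcA_{RLMS}$ with perturbed bids are each individually local and that their failure probabilities aggregate benignly. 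The universal-truthfulness half is essentially immediate from universal monotonicity plus Theorem~\ref{thm:AT} applied conditionally on the randomness, and the voluntary-participation property comes for free from the Archer--Tardos form of the payment; I would state these first and then devote the bulk of the proof to the locality computation.
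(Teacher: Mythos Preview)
Your proposal is correct and follows essentially the same route as the paper: invoke universal monotonicity together with Theorem~\ref{thm:AT} to get universal truthfulness via the Archer--Tardos payment, then observe that since bids are integers bounded by $O(\log n)$ the integral collapses to a sum of $O(\log n)$ terms, each computable by re-running the allocation LCA with the perturbed bid. The only minor difference is bookkeeping: the paper argues that the random permutation is generated once (at cost $O(\log^3 n)$) and each of the $O(\log n)$ re-runs then costs only $O(\log n)$, so the payment computation does not inflate the asymptotic bound, whereas you charge the full $O(\log^4 n)$ per re-run and arrive at $O(\log^5 n)$; your parenthetical about ``slightly more careful accounting'' is exactly this observation.
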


\begin{proof}
Having shown that $\mcA_{RLMS}$ is universally monotone, we can use the
payment scheme of  \cite{AT01}:
\begin{equation}
p_i(b_i, b_{-i}) = b_i h_i(b_i, b_{-i}) + \sum_{x=0}^{b_i} h_i(x,b_{-i}) \label {eqn:AT4}
\end{equation}
Unfortunately, the height of machine $i$ is not an easily computable function of $i$'s bid. We therefore need to explicitly compute the value of $h_i(x,b_{-i})$ for every $x \in [0,b_i]$.
 That is, we need to run the allocation algorithm $\mcA_{RLMS}$ again with each possible bid $x \in [0,b_i]$. As the running time of $\mcA_{RLMS}$ is $O(\log{n})$, and $b_i = O(\log{n})$, this will take at most
 $O(\log^2{n})$. Note that as the time to compute the permutation is $O(\log^3{n})$, computing the payment this way will not affect the asymptotic running time of the mechanism.
\end{proof}

We conclude:

\begin{theorem}\label{thm:lb1}

There exists an $(O(\log^4{n}),O(\log^3{n}), 1/n)$- local  mechanism to scheduling on
related machines in the restricted setting that is universally truthful and gives an $O(\log{\log{n}})$-approximation to the makespan.
\end{theorem}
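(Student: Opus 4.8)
The plan is to assemble the theorem directly from the pieces already established in Section~\ref{sec:lb2}, since nearly all the work has been done. Recall that the claimed result is an $(O(\log^4 n), O(\log^3 n), 1/n)$-local, universally truthful mechanism for scheduling on related machines in the restricted setting, achieving an $O(\log\log n)$-approximation to the makespan. The three ingredients are: (i) the allocation function $\mcA_{RLMS}$ is universally monotone; (ii) it admits a local payment scheme making the mechanism universally truthful; and (iii) it gives an $O(\log\log n)$-approximation; plus (iv) the locality bounds on running time and space.

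First I would invoke the universal monotonicity of $\mcA_{RLMS}$, which is the content of the Theorem preceding Lemma~\ref{lemma:ai} (proved via Proposition~\ref{prop:main} and Corollary~\ref{corr:main}). Combined with Theorem~\ref{thm:AT} (the Myerson/Archer–Tardos characterization), this immediately yields that $\mcA_{RLMS}$ admits a universally truthful payment scheme. Second, I would cite Lemma~\ref{lemma:ut_payments}, which exhibits an explicit \emph{local} payment scheme $\mcP_{RLMS}$: since each machine's capacity is $O(\log n)$, one can recompute $h_i(x, b_{-i})$ for every $x \in [0, b_i]$ by rerunning $\mcA_{RLMS}$ (running time $O(\log n)$ per run, $O(\log n)$ values of $x$, hence $O(\log^2 n)$ total), which is dominated by the $O(\log^3 n)$ permutation-generation cost and therefore does not change the asymptotic resource bounds. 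Third, Lemma~\ref{lemma:loglog} gives the $O(\log\log n)$-approximation of the underlying online algorithm, and Theorem~\ref{thm:local_bb} guarantees that the LCA simulation produces an allocation \emph{identical} to that of the online algorithm, so the approximation ratio is preserved; Theorem~\ref{thm:local_bb} also supplies the $(O(\log^4 n), O(\log^3 n), 1/n)$ locality bounds for the allocation, and the payment computation fits within these bounds as just noted.

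Putting these together gives the claimed mechanism: the allocation is computed by the LCA of $\mcA_{RLMS}$ in time $O(\log^4 n)$ and space $O(\log^3 n)$ with failure probability $1/n$; the payments $\mcP_{RLMS}$ are computed locally within the same bounds; universal monotonicity plus Theorem~\ref{thm:AT} gives universal truthfulness; and Lemma~\ref{lemma:loglog} gives the $O(\log\log n)$-approximation. I would also note that voluntary participation follows from the Archer–Tardos payment form. There is essentially no hard part left — the proof is a one-paragraph synthesis. The only point requiring a little care is checking that the payment computation (requiring multiple reruns of the allocation LCA with perturbed bids $x \in [0, b_i]$) stays within the $O(\log^4 n)$ time bound; this is where I would be most careful, but since $b_i = O(\log n)$ and each rerun costs $O(\log^4 n)$ in the worst case (or $O(\log^3 n)$ for the permutation plus $O(\log n)$ for the simulation), the total is at most $O(\log^5 n)$ naively — so one should observe, as in Lemma~\ref{lemma:ut_payments}, that the expensive permutation setup is shared across reruns, leaving only $O(\log^2 n)$ additional simulation work, which is absorbed. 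If that sharing argument needs to be made fully rigorous, that would be the single technical obstacle; otherwise the theorem is immediate.
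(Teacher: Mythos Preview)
Your proposal is correct and matches the paper's approach exactly: the paper simply writes ``We conclude:'' before stating the theorem, treating it as an immediate synthesis of Lemma~\ref{lemma:ai} (universal monotonicity), Lemma~\ref{lemma:loglog} (the $O(\log\log n)$ approximation), Lemma~\ref{lemma:ut_payments} (the local payment scheme), and Theorem~\ref{thm:local_bb} (the LCA bounds). Your careful check that the payment computation stays within the time bound via sharing the permutation setup is precisely the argument given in the proof of Lemma~\ref{lemma:ut_payments}, so there is no remaining obstacle.
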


\section{Local auctions for unit demand buyers}
\label{section:matching}

We propose local truthful mechanisms for auctions with unit-demand
buyers, in which each buyer is interested in at most $k$ items, and
each item is desired by at most a polylogarithmic number of buyers.
First, we tackle the case where all buyers have the same valuation
for the items in their sets, and the buyers' private information is
their sets. Then we examine the case in which the buyers' sets are
public knowledge and the buyers' private information is their
valuations for their items, with the restriction that buyers have
the same valuation for all items in their set.

\paragraph{Related work} \emph{Combinatorial
auctions} are an extremely well-studied problem in algorithmic game theory. The general premise is the following: we wish to allocate $m$ goods to $n$ players, who have valuations for subsets of goods, with the goal of maximizing the social welfare. The
general problem, where each player may have an an arbitrary
valuation for each subset of the goods is known to be $NP$-hard;
indeed, even approximating the optimal solution for single-minded
bidders to within $\sqrt{m}(1-\epsilon)$ is $NP$-hard \cite{LOS02}.
Therefore, in order to obtain useful approximation algorithms, we
must relax some of our demands. One such relaxation is limiting
ourselves to identical items. An example is the case of $k$-minded
bidders: There are $m$ identical indivisible goods, and $n$ buyers
with $k$ valuations each - each buyer $i$ has a valuation $v_i(j)$
for obtaining $j$ items (where $j$ is between $1$ and $k$).
 \cite{DN10} gave a PTAS for this problem, and
showed that (under certain restrictions), obtaining an FPTAS is
$NP$-hard.  \cite{KV12} gave universally truthful mechanisms for combinatorial auctions in an on-line model.

We use the following theorem from \cite{MRVX12} for finding maximal matchings in undirected graphs.

\begin{theorem} \cite{MRVX12}
\label{thm:maximal} Let $G=(V,E)$ be an undirected graph with $n$
vertices and maximum degree $d$. Then there is an $(O(\log^4{n}),
$ $O(\log^3{n}), $ $1/n)$ - local computation algorithm for
maximal matching.
\end{theorem}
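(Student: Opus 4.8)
The plan is to derive this from the query-tree framework of Nguyen--Onak \cite{NO08} and Alon et al.\ \cite{ARVX11}, together with a sharper bound on the query-tree size. First I would note that a maximal matching of $G$ is precisely a maximal independent set of the line graph $L(G)$; since $L(G)$ has at most $nd/2$ vertices and maximum degree at most $2(d-1)$, it is enough to build the LCA for maximal independent set on a bounded-degree graph, or, equivalently, to simulate the greedy matching algorithm directly on the edges of $G$. Concretely, assign each edge an independent (pseudo)random rank and let $M$ be the output of the sequential rule: scan the edges in increasing order of rank and add an edge to $M$ whenever both of its endpoints are still unmatched. To answer the query ``is $e\in M$?'' one recurses on every edge $e'$ incident to $e$ whose rank is smaller than that of $e$; then $e\in M$ if and only if none of these $e'$ lies in $M$. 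This recursion defines the \emph{query tree} rooted at $e$, and the whole algorithm amounts to exploring this tree.

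The crux of the proof is bounding the size of the query tree. The basic observation is that every root-to-node path in the tree is strictly rank-decreasing and consists of pairwise-incident edges, so long query paths are exponentially unlikely under a random ordering; a naive union bound over such paths already gives query trees of size $n^{o(1)}$ with high probability, but a more careful analysis of the branching structure of the entire tree (this is the main technical lemma of \cite{MRVX12}) yields the sharper statement: with probability at least $1-1/n$, every query tree has size $O(\log n)$. I would isolate this as the key lemma and devote most of the proof to it, obtaining the $1/n$ failure probability by a union bound over all $n$ possible query roots.

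Finally I would handle the space and time bounds. Storing a truly random ordering is too expensive, so I would replace it by an $O(\log n)$-wise independent hash family in the manner of \cite{ARVX11}; its seed occupies $O(\log^2 n)$ bits and can be evaluated in $O(\log^3 n)$ time, and the query-tree size bound is unaffected, since its proof uses only the joint distribution of the $O(\log n)$ ranks encountered along a single query path. Exploring a query tree of $O(\log n)$ nodes, each node costing one hash evaluation of $O(\log^3 n)$ time, then runs in time $O(\log^4 n)$ and uses $O(\log^3 n)$ working memory (the recursion stack plus the seed), all within the stated $1/n$ failure probability. Consistency of the answers to different queries is immediate because they are all read off the same deterministic greedy run, and that run produces a maximal matching by construction. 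I expect the main obstacle to be exactly the $O(\log n)$ query-tree bound --- pushing past the easy $n^{o(1)}$ estimate and verifying that the bound survives the replacement of full independence by $O(\log n)$-wise independence.
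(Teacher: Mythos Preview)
The paper does not prove this theorem; it is quoted as a black box from \cite{MRVX12}, with only the one-line remark that ``the idea behind Theorem~\ref{thm:maximal} is that one can simulate the well-known Greedy on-line algorithm for maximal matching, using a random permutation of the edges.'' Your proposal is a faithful expansion of exactly that idea --- simulate greedy on the edges, bound the query tree by the $O(\log n)$ lemma of \cite{MRVX12}, and derandomize the ordering via $O(\log n)$-wise independence as in \cite{ARVX11} --- so there is nothing to compare: you have reconstructed the intended argument, and the paper offers no alternative route.
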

As in the case of load balancing, the idea behind Theorem \ref{thm:maximal} is that one can simulate the well-known Greedy on-line algorithm for maximal matching, using a random permutation of the edges.

\subsection{Unit-demand buyers with uniform value}
\label{sec:udfv}

We first consider the following scenario. We have a set
$\mathcal{I}$ of $n$  unit-demand  buyers, and a set $\mathcal{J}$
of $m$ indivisible items. There is a fixed, identical value for all
items, which we normalize to $1$. Each buyer $i$ is interested in a
set $J_i$ of at most $k$ items (where $k$ is a constant).
We can treat this auction as a graph
$G=(V,E)$, in which $V=\mathcal{J} \cup \mathcal{I}$, and $E = \{
(i,j):i \in\mathcal{I},j \in J_i\}$. The value of a subset $S$ to
buyer $i$ is $v_i(S)=1 $ if $S \cap J_i \neq \emptyset$ and $0$
otherwise.
Namely, the buyers are indifferent between the items in their set (they all have the same valuation for the items in their set, and
a zero valuation for all other items).
The utility of buyer $i$ is quasi-linear, that is, when she receives
items $S$ and pays $p$ her utility is $u_i(S,p)=v_i(S)-p$.
We assume that the subsets $J_i$ are selected uniformly at random
and that $kn/m=O(1)$.\footnote{We require this for ease of analysis.
However, it suffices that the sets are distributed in such a way as
to resemble a uniform or binomial distribution \cite{MRVX12}.}

Our goal is to design a local mechanism that maximizes the social
welfare.
In order to do this, we would like to satisfy as many buyers as
possible, allocating each buyer a single item from her set.
We call this type of auction an $k-$UDUV (unit demand, uniform
value) auction.

Ideally, we would like to find a maximum matching between the buyers
and items, as this will maximize the social welfare. However, it is
not possible to solve the maximum matching problem locally, as the
following theorem shows. This is one important challenge that
the local setting adds to the algorithmic mechanism design.

\begin{theorem}
\label{thm:maxmatch}
 There does not exist an LCA for the maximum
matching problem.
\end{theorem}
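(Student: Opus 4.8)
The plan is to exhibit a family of graphs on which any purported LCA for maximum matching must fail, by showing that a correct answer to a single query forces the algorithm to ``know'' a global parity-type quantity that cannot be determined from a polylogarithmic-size local view. The canonical instance to use is a long path (or a cycle with one special structure): on a path with an odd number of vertices $v_1, v_2, \ldots, v_{2t+1}$, the unique maximum matching has size $t$ and is forced — edge $(v_1,v_2)$ is in \emph{every} maximum matching, in fact every edge $(v_{2i-1}, v_{2i})$ is matched and every edge $(v_{2i}, v_{2i+1})$ is unmatched. By contrast, on a path with an even number of vertices, the parity flips. So I would first establish this combinatorial fact: on a path, whether a given edge is in the maximum matching is determined by the distance from that edge to the \emph{nearest endpoint of the path}, and in particular by the parity of that distance.

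Next I would set up the adversary argument. Suppose for contradiction there is an $(t(n), s(n), \delta(n))$-LCA for maximum matching with $t(n), s(n)$ polylogarithmic. Consider a path of length roughly $n$ and query an edge $e$ near the middle. To answer consistently with \emph{some} maximum matching, the algorithm must in effect decide the parity of the distance from $e$ to whichever endpoint is ``relevant,'' but with only polylogarithmically many probes of the graph it cannot even locate either endpoint of a path of length $n$. Formally, I would take two instances that are indistinguishable within any polylogarithmic-radius neighborhood of $e$ — e.g., the same middle segment, but with the two ends attached so that the total parity differs (one instance an even path, the other an odd path, or insert/delete one vertex far from $e$) — and argue that the LCA must give the same answer on both (since its view is identical with probability $1$ over a suitable embedding, or with high probability if queries are randomized), yet the correct answers differ. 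Since the LCA is required to \emph{always} answer correctly (only the time/space bound may fail), this is an immediate contradiction.

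The main obstacle, and the step requiring the most care, is handling the fact that the LCA is randomized and may adaptively probe the graph: I must ensure that no polylogarithmic-length sequence of adaptive probes can distinguish the two instances. The clean way to do this is to use a long path whose vertices carry random identifiers (or to randomize the location of the ``defect'' that flips the parity), so that with probability $1 - o(1)$ the algorithm, starting from the queried edge, never reaches the point where the two instances differ; conditioned on that event its output distribution is identical on the two instances, while the correct outputs are disjoint. One must also phrase ``there is no LCA'' at the right quantifier level — the theorem asserts nonexistence for \emph{any} polylogarithmic bounds — so I would keep the path length as a free parameter $n$ and note the argument works for all sufficiently large $n$ against any fixed polylogarithmic $t(n), s(n)$. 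A minor additional point is to confirm that consistency across queries (the defining requirement of an LCA) is exactly what breaks: even if a single query could be answered, the bundle of answers on all edges of the path must form one global maximum matching, and that object is not locally computable.
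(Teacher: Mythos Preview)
Your proposal is correct and follows essentially the same approach as the paper: an indistinguishability argument showing that the answer at a queried edge is forced by a parity-type feature located $\Omega(n)$ away, which no polylogarithmic number of probes can reach. The only difference is cosmetic---the paper uses an odd cycle with a single pendant vertex (whose location determines the unique maximum matching) rather than a path, and is somewhat terser about handling randomization; your treatment of random identifiers and adaptive probes is, if anything, more careful than the paper's.
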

 \begin{proof}
To see why it is not always possible to solve the maximum matching
locally, consider the following family of homomorphic graphs:
$\mathcal{G} = \{G_i\}$. All $G_i \in \mathcal{G}$ have $2n$
vertices: $\{v_1, v_2, \ldots, v_{2n}\}$. In each $G_i$, vertices
$v_{-i}$ comprise an (odd) cycle, and vertex $v_i$ is connected to
vertex $v_{i-1}$ (modulo $2n$). Each $G_i$ has a unique maximum
matching. We are given as input a graph $G \in \mathcal{G}$, (i.e.,
we know it is $G_i$ for some $i$, but we don't know the value of
$i$). We would like to know whether the edge $e=(v_1, v_2)$ is in
the maximum matching. Note that the edge $e$ will be in exactly half
of the maximum matchings.

Assume that the graph is either $G_n$ or $G_{n+1}$.  In the distributed model, this implies that the distance between the edge $e=(v_1, v_2)$
and the distinguishing place of $G_{n}$ and $G_{n+1}$ is $n$ edges, which will be a lower bound on the time to detect the correct graph.

In the local computation model, we can write the edges in a random order.
This implies that one needs to query, on average, $n$ edges to distinguish between $G_{n}$ and $G_{n+1}$.

Therefore there cannot exist an LCA for maximum matching.
\end{proof}

\begin{corollary}
\label{corr:bip}
 There does not exist an LCA for the maximum
matching problem in bipartite graphs.
\end{corollary}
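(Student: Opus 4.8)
The plan is to derive Corollary~\ref{corr:bip} directly from Theorem~\ref{thm:maxmatch} by modifying the hard instance family $\mathcal{G} = \{G_i\}$ so that all graphs in it become bipartite while preserving the key property that distinguishing two candidate graphs requires, on average, a number of queries linear in $n$. The obstacle to reusing the construction verbatim is that the graphs $G_i$ contain an \emph{odd} cycle, so they are not bipartite; the whole point of the odd-cycle construction is that an odd cycle has a unique maximum matching that leaves exactly one vertex unmatched, and the identity of that unmatched vertex (hence whether $e=(v_1,v_2)$ is matched) depends on where the ``pendant'' vertex $v_i$ attaches.

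First I would replace the odd cycle by an \emph{even} path or an even cycle with a single attached pendant vertex. Concretely, consider graphs on $2n+1$ vertices: take a path $P$ on an even number of vertices, which has a unique perfect matching, and in $G_i$ attach one extra vertex $v_i$ to the $i$-th vertex of the path. Attaching a pendant vertex to an even path at an interior vertex forces the maximum matching to ``shift'': the maximum matching now saturates $v_i$ and its neighbor, and the parity of the two path-segments on either side of the attachment point determines which edges along the path are forced. As in the original proof, the edge $e$ near one end of the path will be in the maximum matching for exactly the instances where the attachment point has a given parity relative to $e$, i.e.\ for about half the instances, and locally one cannot tell which instance one is looking at without querying roughly the distance from $e$ to the attachment point. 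This graph is bipartite (it is a tree, or a path plus pendant), so the corollary follows. I would restrict attention to the two instances whose attachment points are at distance $\Theta(n)$ from $e$ and are on opposite sides of the relevant parity boundary, mirroring the $G_n$ versus $G_{n+1}$ argument in the theorem's proof.

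The main step requiring care is verifying the combinatorial claim that, in the bipartite construction, the edge $e$ lies in the (unique) maximum matching of $G_i$ if and only if $i$ has a prescribed parity, and that the two chosen instances genuinely differ on whether $e$ is matched. This is a short parity computation on matchings of paths with one pendant edge: an even path $v_1 v_2 \cdots v_{2m}$ has the unique perfect matching $\{(v_1,v_2),(v_3,v_4),\dots\}$; inserting a pendant at $v_i$ leaves one of the two sub-paths odd, and in an odd sub-path the maximum matching is forced to start from the far end, which flips whether $(v_1,v_2)$ is used depending on the parity of $i$. Everything else — the argument that in the local-computation model with a random edge ordering one needs $\Omega(n)$ expected queries to distinguish the two instances, hence no LCA can answer the query about $e$ correctly with the required probability — is identical to the proof of Theorem~\ref{thm:maxmatch} and can simply be cited.

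Finally I would note the immediate consequence in our setting: since a bipartite maximum matching instance is exactly a $k$-UDUV auction with unit values (the buyer side and item side forming the two parts of the bipartition), the non-existence of an LCA for bipartite maximum matching means there is no local mechanism computing the welfare-optimal allocation for unit-demand buyers, which is why in Section~\ref{sec:udfv} we settle for a maximal (rather than maximum) matching and the resulting $1/2$-approximation.
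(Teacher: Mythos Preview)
Your approach is the same as the paper's---replace the odd cycle by a bipartite gadget so that the rest of the argument from Theorem~\ref{thm:maxmatch} goes through verbatim---but your specific gadget has a gap. With a \emph{single} pendant attached to an even path on $2m$ vertices, the maximum matching is \emph{not} unique: the perfect matching of the path (leaving the pendant unmatched) is always a maximum matching of size $m$, regardless of where the pendant is attached. Your sentence ``the maximum matching now saturates $v_i$ and its neighbor'' is exactly the unjustified step: nothing forces the pendant edge into the matching, and in particular the edge $(v_1,v_2)$ lies in \emph{some} maximum matching of every $G_i$. Since an LCA only needs to output some maximum matching, the parity argument you sketch does not yield the required contradiction without further work.

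The paper patches this by attaching \emph{two} pendant vertices to two consecutive vertices of an \emph{even cycle} on $2n-2$ vertices (so the total is $2n$ vertices). Now the maximum matching has size $n$, i.e.\ it is perfect, so both pendant edges are forced in; deleting their endpoints from the cycle leaves an even path with a unique perfect matching. Hence each $G_i$ has a unique maximum matching, and whether $(v_1,v_2)$ belongs to it genuinely alternates with the position of the pendants. Your argument becomes correct with this modification (or, equivalently, if you add a second pendant to your even path so that the resulting graph has an even number of vertices and a forced perfect matching).
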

 \begin{proof}
The proof is similar to the general case. 
In the  bipartite case, though, in each $G_i$, the vertices $v_1,
v_2, \ldots, v_{i-1},$ $ v_{i+2}, \ldots, v_{2n}$ comprise an
\emph{even} cycle. Vertex $v_{i}$ is connected to vertex $v_{i-1}$
and vertex $v_{i+1}$ is connected to vertex $v_{i+2}$ (modulo $2n$).
Note that the edges $(v_{i-1}, v_{i})$ and $(v_{i+1}, v_{i+2})$ must
be in the maximum matching, because the maximum matching in this
case is of size $n$, meaning all vertices must be matched, including
$v_i$ and $v_{i+1}$.
%
%
\end{proof}

Since there is no local algorithm for computing the maximum
matching\footnote{Actually, we only require a maximum matching in
the special case of bipartite graphs. Corollary \ref{corr:bip} shows
that this is not possible either.}, we will content ourselves with
finding an \emph{approximation} to the maximum matching.

\subsubsection{A $\frac{1}{2}$-approximation to the maximum matching}
 To obtain a  $\frac{1}{2}$-approximation, we use the local greedy matching
algorithm of \cite{MRVX12}, which we denote by $\mcA_{UDUV}$. Algorithm
$\mcA_{UDUV}$ simulates the well-known greedy on-line algorithm - as an
edge arrives, it is added to the matching, if possible.

Our mechanism $\mcM_{UDUV} = (\mcA_{UDUV}, \mcP_{UDUV})$ works as follows.
The mechanism receives from each buyer $i$ a subset $J'_i \subset J$.
For the allocation algorithm $\mcA_{UDUV}$, the mechanism decides on a random order
in which it considers the items.
Specifically, the mechanism assigns each item $j$ a real number $r_j
\sim_u [0,1]$, sampled independently and uniformly. The order of the
items is determined by $r_j$ (higher $r_j$ items are considered first).
Notice that because the $r_j$ are allocated independently,
  buyers cannot influence the order in which the items
as considered.
Given this order induced by $r_j$, the mechanism considers items one
at a time. When item $j$ is considered, if there is some buyer $i$
such that $j\in J'_i$ and buyer $i$ was not allocated any item yet,
then $j$ is allocated to buyer $i$. If there is more than one such
buyer, ties are broken lexicographically.

As the values of all items are identical, any payment scheme
$\mcP_{UDUV}$ that fulfills \emph{voluntary participation}
is adequate, i.e., the payment can be any value in the range
$[0,1]$. For example, charge $p=1/2$ from any buyer that receives an item
and $p=0$ from any buyer that does not receive an item.

In order to show that our mechanism is truthful, we need only to
show that buyers cannot profit by bidding $J'_i \neq J_i$.

\begin{theorem}
\label{thm:fixed_price} In the $k$-UDUV auction, the mechanism
$\mcM_{UDUV} = (\mcA_{UDUV}, \mcP_{UDUV})$  is universally truthful and provides a
$\frac{1}{2}$-approximation to the optimal allocation.
\end{theorem}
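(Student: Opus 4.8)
The plan is to establish the two claims separately: (i) the approximation ratio, and (ii) universal truthfulness. For the approximation ratio, the allocation $\mcA_{UDUV}$ simulates the greedy online maximal matching algorithm of \cite{MRVX12} (Theorem~\ref{thm:maximal}), so the matching it produces is \emph{maximal}. A standard fact is that any maximal matching in a graph has size at least half that of a maximum matching (every edge of an optimal matching shares an endpoint with some edge of the maximal matching, and each maximal-matching edge can ``block'' at most two optimal edges). Since the social welfare here equals the number of matched buyers, a maximal matching yields a $\frac{1}{2}$-approximation to the optimal social welfare. By Theorem~\ref{thm:maximal} this matching is computed by an $(O(\log^4 n), O(\log^3 n), 1/n)$-LCA, so $\mcM_{UDUV}$ is local; I would note this explicitly even though the theorem statement only mentions the approximation and truthfulness.

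For truthfulness, the only possible manipulation a buyer $i$ has is to report $J'_i \neq J_i$ (the value is fixed at $1$, and any fixed payment in $[0,1]$ guarantees voluntary participation regardless). Since reporting items not actually in $J_i$ can only lead to being allocated an item of value $0$ (which weakly hurts $i$ under any payment $\ge 0$), we may assume $J'_i \subseteq J_i$; it then suffices to show that reporting a strict subset never helps. The key structural observation is that the random numbers $r_j$ are drawn independently of the bids, so the order in which items are considered is fixed once the randomness is fixed; fix an arbitrary realization of the $r_j$'s and the other buyers' bids $b_{-i}$. Now I claim that if buyer $i$ is allocated \emph{some} item when bidding $J_i$, she is allocated some item when bidding any $J'_i \subseteq J_i$ as well — more precisely, if $i$ gets an item under the larger set, she gets an item under the smaller set too, possibly a different one. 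The cleanest way to see this is to run the greedy process for both bid profiles in lockstep and argue by induction on the items in decreasing $r_j$-order that the set of ``already-satisfied'' buyers is identical in both runs up until the first item where the processes could diverge, and that divergence can only occur at an item $j \in J_i \setminus J'_i$ that buyer $i$ would have grabbed; but in that case $i$ was unsatisfied at that point in \emph{both} runs, so after that point $i$ remains a candidate in the $J'_i$-run, and since the $J_i$-run eventually satisfies $i$ through some later item $j'$, that item $j'$ is still available (it went to $i$ there, so no one else took it in the interim in the $J_i$-run, and the $J'_i$-run has the same or fewer satisfied buyers)\footnote{One has to be slightly careful here: the two runs can have genuinely different satisfied-buyer sets after divergence, so the coupling argument needs the monotonicity statement ``the set of satisfied buyers in the $J'_i$-run is always a subset of that in the $J_i$-run, except possibly for buyer $i$ herself'' — this is provable by induction but is the place where care is needed.} — hence $i$ is satisfied in the $J'_i$-run. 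Since under any payment scheme with $\mcP_{UDUV} \le 1$ an allocated buyer with true valuation $1$ has utility $1 - \mcP_{UDUV} \ge 0 \ge$ (utility of an unallocated buyer, which is $0$ minus any payment $\le 0$; here it is exactly $0$), truthtelling is a dominant strategy, and this holds for every realization of the randomness, giving universal truthfulness.

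The main obstacle is the coupling/monotonicity argument in the second paragraph: proving that shrinking buyer $i$'s reported set cannot cause $i$ to lose her allocation. Naively one would like ``fewer demands $\Rightarrow$ weakly better for $i$,'' but greedy matching is not monotone in such a blunt sense globally — removing an edge incident to $i$ can reroute which items other buyers grab, and a priori this could cascade. The resolution is that the cascade is controlled: I would formalize the invariant that, processing items in the fixed $r$-order, at every prefix the set of satisfied buyers other than $i$ in the $J'_i$-run is contained in the set of satisfied buyers in the $J_i$-run, and use this to show that whenever $i$ gets matched in the $J_i$-run to some item $j'$, the same item (or an earlier available one) is free for $i$ in the $J'_i$-run. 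Once that invariant is in place, the rest is immediate. I would also remark that the same argument shows the allocation $\mcA_{UDUV}$ is a monotone set function in the sense relevant for unit-demand buyers, which is why the trivial payment scheme works.
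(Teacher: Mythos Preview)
Your approximation argument is fine and matches the paper. The truthfulness argument, however, has the monotonicity implication pointing the wrong way, and this is a genuine error rather than a cosmetic one.

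To show that truthful reporting weakly dominates reporting a subset $J'_i \subsetneq J_i$, what you need is: \emph{if $i$ is allocated an item under $J'_i$, then $i$ is allocated an item under $J_i$}. You instead claim (and try to prove) the converse: ``if $i$ gets an item under the larger set, she gets an item under the smaller set too.'' That converse is false. Take $J_i=\{a,b\}$, $J'_i=\{b\}$, and one other buyer $k$ with $J_k=\{b\}$ who precedes $i$ lexicographically; suppose $r_a>r_b$ so $a$ is processed first. Under $J_i$, buyer $i$ takes $a$; under $J'_i$, item $a$ is not in $i$'s reported set, then $k$ takes $b$, and $i$ ends up with nothing. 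Moreover, even if your direction held, it would establish that \emph{shrinking} the report weakly helps --- the opposite of truthfulness.

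The paper's argument runs in the correct direction and is much shorter than the coupling you sketch. Suppose $i$ receives item $j$ when bidding the subset $J^*_i$. When bidding the full $J_i$, either $i$ has already picked up some item of $J_i\setminus J^*_i$ before $j$ is processed (and we are done), or she has not; in the latter case a one-line induction shows the two runs are identical up to and including the processing of $j$ (whenever $i$ is present but loses the lexicographic tie on an item of $J_i\setminus J^*_i$, the winner is the same buyer as in the $J^*_i$-run), so $i$ receives $j$ here as well. No cascade analysis is needed once the implication is oriented correctly.
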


\begin{proof}
The proof will be done in two steps. First, we show that for any
$J'_i$, bidding $J_i \cap J'_i$ weakly dominates bidding $J'_i$.
Second, we show that bidding $J_i$ weakly dominates bidding any
$J^*_i \subseteq J_i$.

To show that $J_i \cap J'_i$ weakly dominates bidding $J'_i$, label
the items in $J_i \cap J'_i$ as \emph{good}, and those in $ J'_i
\setminus J_i$ by \emph{bad}. If a good item is allocated to buyer
$i$ when she bids $J'_i$, it will also be allocated to her when
bidding $J_i \cap J'_i$. Therefore the value of buyer $i$ cannot
decrease by bidding $J_i \cap J'_i$, and hence $J_i \cap J'_i$
weakly dominates bidding $J'_i$.

To show that bidding $J_i$ weakly dominates bidding any $J^*_i
\subseteq J_i$, consider the following.
If buyer $i$ does not receive any items when bidding $J^*_i$, the
claim trivially holds. Assume buyer $i$ receives item $j$ when
bidding $J^*_i$. Then, when bidding $J_i$, if she has not received
any item from $J_i \setminus  J^*_i$ before considering item $j$,
then she will receive item $j$. Therefore, if she receives an item
when bidding $J^*_i $, she will also receive an item when bidding
$J_i$, and have the same valuation and utility. This proves that
bidding $J_i$ weakly dominates bidding $J^*_i \subseteq J_i$.


The reasoning that the allocation is a $\frac{1}{2}$-approximation is similar
to the proof of maximal versus maximum matching.
Consider a buyer which is not allocated an item in $\mcA_{UDUV}$ and \emph{is}
allocated an item in the optimal allocation. Her item is allocated
to a unique different buyer in $\mcA_{UDUV}$. This bounds the number of buyers
allocated items in the optimal allocation and not in $\mcA_{UDUV}$ by the
number of buyers that are allocated items in $\mcA_{UDUV}$,  giving
the factor of $\frac{1}{2}$ approximation, and completing the proof of the
theorem.
\end{proof}

The fact that $\mcA_{UDUV}$ is a $(O(\log^{4}{n}),  O(\log^{3}{n}), 1/n)$
- LCA for maximal matching on graph of bounded degree $k$ was shown
in \cite{MRVX12}. The result also implicitly holds for bipartite
graphs in which the degrees are bounded on one side and distributed
binomially on the other. This is exactly the case when buyers are
interested in $k$ items each and the buyers' choices can be seen as
sampled uniformly from the items.  Therefore, we derive the
following theorem:

\begin{theorem}
\label{thm:ubuduv}
 The $k$-UDUV auction has an $(O(\log^{4}{n}), $
$O(\log^{3}{n}), $ $1/n)$ - local computation mechanism which is
universally truthful and provides a $\frac{1}{2}$-approximation to the optimal social welfare.
\end{theorem}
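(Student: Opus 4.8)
The plan is to assemble the theorem from the pieces already established, so the proof is mostly a matter of verifying that the generic maximal-matching LCA of \cite{MRVX12} (Theorem~\ref{thm:maximal}) applies to the particular graph that a $k$-UDUV instance induces, and then quoting Theorem~\ref{thm:fixed_price} for the game-theoretic guarantees. First I would recall the graph $G=(V,E)$ associated with the auction, with $V=\mathcal{I}\cup\mathcal{J}$ and an edge $(i,j)$ whenever $j\in J_i$. On the buyer side every vertex has degree at most $k$, a constant; on the item side, since each set $J_i$ is a uniformly random $k$-subset and $kn/m=O(1)$, the degree of an item is a sum of independent indicator variables and is distributed (essentially) binomially with constant expectation. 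This is exactly the structural hypothesis under which \cite{MRVX12} proves that the query tree arising from simulating the greedy on-line maximal-matching algorithm on a random edge permutation has size $O(\log n)$ with probability at least $1-1/n$, yielding the $(O(\log^4 n),O(\log^3 n),1/n)$ bounds.

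Next I would observe that $\mcA_{UDUV}$ is precisely this simulation: items are ordered by the independent uniform tags $r_j$ (equivalently, a random permutation), and greedily matched. Hence the allocation function $\mcA_{UDUV}$ inherits the $(O(\log^4 n),O(\log^3 n),1/n)$ locality. The payment scheme $\mcP_{UDUV}$ charges a fixed value in $[0,1]$ to matched buyers and nothing to unmatched buyers, so it is trivially computable in $O(1)$ time and space once the allocation query is answered; in particular the mechanism $\mcM_{UDUV}=(\mcA_{UDUV},\mcP_{UDUV})$ is $(O(\log^4 n),O(\log^3 n),1/n)$-local in the sense of the definitions in Section~\ref{subsection:model}.

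Finally I would invoke Theorem~\ref{thm:fixed_price}, which already establishes that $\mcM_{UDUV}$ is universally truthful (bidding $J_i$ weakly dominates every alternative report, for every realization of the tags $r_j$) and that the resulting matching is a $\tfrac12$-approximation to the maximum matching, hence to the optimal social welfare (since all item values equal $1$). Combining the locality bounds from the first two paragraphs with these two guarantees gives the statement. I do not anticipate a genuine obstacle here: the only point requiring care is confirming that the one-sided-bounded, one-sided-binomial degree profile of $G$ meets the precise hypotheses of Theorem~\ref{thm:maximal}/\cite{MRVX12}, which the paper has already flagged as the case those results implicitly cover; everything else is a direct citation of Theorems~\ref{thm:maximal} and \ref{thm:fixed_price}.
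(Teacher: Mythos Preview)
Your proposal is correct and matches the paper's own argument essentially line for line: the paper likewise derives the theorem directly by noting that $\mcA_{UDUV}$ is the maximal-matching LCA of \cite{MRVX12}, that the result there implicitly covers bipartite graphs with degree bounded on one side and binomially distributed on the other (precisely the $k$-UDUV graph under the uniform-$J_i$, $kn/m=O(1)$ assumptions), and then invoking Theorem~\ref{thm:fixed_price} for universal truthfulness and the $\tfrac12$-approximation. If anything, you are slightly more explicit than the paper in separately checking that $\mcP_{UDUV}$ is trivially local.
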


\subsection{Unit demand buyers, uniform-buyer-value}
\label{sec:UDSVB}

We have a set $\mathcal{I}$ of $n$ buyers, and a set $\mathcal{J}$
of $m$ items.  Each buyer $i$ is interested in a set of at most $k$
items, $J_i \subseteq \mathcal{J}$, which is public knowledge, and
has a private valuation, $t_i$ (which represents the value of any
item from $J_i$ to buyer $i$).
Buyer $i$'s valuation for subset $S$ is $v_i(S)=t_i$ if $S\cap
J_i\neq \emptyset$, and $0$ otherwise. The utility of buyer $i$ is
quasi-linear, namely her utility of receiving subset $S$ and paying
$p$ is $u_i(S,p)=v_i(S)-p$.

We can treat this auction as a weighted graph $G=(V,E)$,  in which
$V=\mathcal{J} \cup \mathcal{I}$, and  $E = \cup_i E_i$ where
$E_i=\{(i,j) : j \in J_i\}$. Every edge $e\in E_i$ has weight $w(e) =
t_i$.

We assume that $J_i$ has a uniform or binomial distribution and that
$kn/m=O(1)$.\footnote{As in the previous section, we require this
for ease of analysis. However, it suffices that the sets are
distributed in such a way as to resemble a uniform or binomial
distribution \cite{MRVX12}.}
In addition, we make the simplifying assumption that the buyers are
Bayesian - the valuations $t_i$ are randomly drawn from some prior
(not necessarily known) distribution, that is identical to all
buyers.
%
We call this type of auction an $k-$UDUBV (unit demand, uniform
buyer value) auction.

We require that if buyer $i$ does not receive an item, she pays
nothing. If buyer $i$ receives an item, the mechanism charges her
$p_i(b)$, where $b$ is the bid vector. (Any buyer will receive at
most one item in the allocation of the mechanism).
We would like to ensure that bidding truthfully is a dominant
strategy for all buyers. Hence, we need to show that, for all $b_i$
and $b_{-i}$, we have $u_i(t_i, b_{-i}) \ge u_i(b_i, b_{-i})$.

The allocation algorithm, $\mcA_{UDUBV}$, is as follows.  First, $\mcA_{UDUBV}$ orders
the buyers by their bids. Starting with the buyer with the highest
bid, each buyer $i$ is allocated an item $j_i\in J_i$ such that
$j_i$ has not yet been allocated. If more than one such item exists,
we allocate the (lexicographically) first $j_i \in J_i$. (We assume
the items have lexicographic order.) If there is no such item, then
buyer $i$ is not allocated any item.
We continue until we cannot allocate any
more items.

First, we claim that the resulting allocation is a
$\frac{1}{2}$-approximation.

\begin{claim}
\label{claimproof1} The allocation algorithm $\mcA_{UDUBV}$ provides a
$\frac{1}{2}$-approximation to the optimal allocation, with respect
to the bids $b$.
\end{claim}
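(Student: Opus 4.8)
The plan is to mimic the classical argument that a greedy/maximal matching is at least half the size of a maximum matching, but carefully tracking weights since this is a weighted instance. I would first fix an optimal allocation $\mathrm{OPT}$ (with respect to the reported bids $b$) and compare it edge-by-edge with the allocation $\mcA_{UDUBV}$ produces. The key structural observation is that $\mcA_{UDUBV}$ considers buyers in nonincreasing order of bid, and a buyer $i$ fails to be allocated an item only if every item in $J_i$ was already taken by a buyer considered earlier, i.e., by a buyer $i'$ with $b_{i'} \ge b_i$.

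The main step is a charging argument. Consider a buyer $i$ who receives an item $j^{\mathrm{OPT}}_i$ in $\mathrm{OPT}$ but receives nothing in $\mcA_{UDUBV}$. Since $j^{\mathrm{OPT}}_i \in J_i$ was not allocated to $i$, it must have been allocated by $\mcA_{UDUBV}$ to some other buyer $i'$ who was processed before $i$, hence with $b_{i'} \ge b_i$. Charge the value $b_i$ of buyer $i$'s $\mathrm{OPT}$-contribution to that buyer $i'$. Each buyer $i'$ that is allocated an item $j'$ by $\mcA_{UDUBV}$ receives at most one such charge, because the charging buyer is uniquely identified as the $\mathrm{OPT}$-owner of $j'$ (in $\mathrm{OPT}$ each item goes to at most one buyer), and the charge it receives is at most $b_{i'}$, its own contribution to the welfare of $\mcA_{UDUBV}$. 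Summing over all charged buyers, the total $\mathrm{OPT}$-value lost by buyers unallocated in $\mcA_{UDUBV}$ is at most the total value of $\mcA_{UDUBV}$. Since the value of $\mathrm{OPT}$ is the value contributed by buyers allocated in both plus the value lost, and the former is also at most the value of $\mcA_{UDUBV}$, we get that the value of $\mathrm{OPT}$ is at most twice the value of $\mcA_{UDUBV}$, as desired.

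I would phrase this cleanly by partitioning the buyers allocated in $\mathrm{OPT}$ into those also allocated (to some, possibly different, item) in $\mcA_{UDUBV}$ and those not, bounding each group's $\mathrm{OPT}$-value by the value of $\mcA_{UDUBV}$ separately. A small subtlety worth stating explicitly: a buyer allocated in $\mcA_{UDUBV}$ gets exactly her bid $b_i$ toward the welfare (value is uniform across $J_i$), so identifying ``the buyer's contribution'' is unambiguous and the per-buyer charge bound $b_i \le b_{i'}$ is exactly what the processing order gives. The main obstacle — really the only place care is needed — is making sure the charging is injective and that the inequality $b_{i'} \ge b_i$ is correctly extracted from the ordering; once that is pinned down, the two-group summation is routine. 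Tie-breaking among equal bids does not affect the argument since we only need $b_{i'} \ge b_i$, not strict inequality.
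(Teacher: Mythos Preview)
Your proof is correct and follows essentially the same approach as the paper's: both are charging arguments showing that each edge of the greedy maximal matching accounts for at most two edges of the optimal matching, each of no greater weight. Your buyer-side partition (those allocated in $\mcA_{UDUBV}$ versus those not, with the latter injectively charged to the $\mcA_{UDUBV}$-owner of their OPT item) is just the dual phrasing of the paper's edge-side argument that each greedy edge $e=(i,j_i)$ displaces at most one OPT edge through $i$ and one through $j_i$; if anything, your version is a bit more careful in making the injectivity of the charge explicit.
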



\begin{proof}
The proof is similar to the proof that any  maximal matching is a
$2$-approximation to a maximum matching. Regard the auction as a
bipartite graph $G=(U,W,E)$, with $U$ representing the buyers and
$W$ representing the items. There is a weighted edge  between each
buyer $i$ and every item $j_i \in J_i$. The weight of each edge
$e=(i, j_i)$ is the bid of buyer $i$, $b_i$. The optimal allocation
is a maximum weighted matching, while $\mcA_{UDUBV}$  considers the buyers
in the order of their $b_i$'s and finds a maximal matching.

If an edge $e=(i,j_i)$ is added in $\mcA_{UDUBV}$ but not in the optimal
matching, then it is allocated instead of at most $2$ edges in the
optimal matching (an edge $e'$ containing $i$ and an edge $e''$
containing $j_i$). Because $\mcA_{UDUBV}$ considers edges according to
their weights, we know that $w(e)\geq w(e')$ and $w(e) \geq w(e'')$.
Therefore $2w(e) \geq w(e')+ w(e'')$ and so the ratio between
$\mcA_{UDUBV}$ and the optimal allocation is at least $\frac{1}{2}$.
\end{proof}

We now need to specify the payment mechanism. To calculate buyer
$i$'s payment when she receives an item, we run $\mcA_{UDUBV}$ without
buyer $i$. Buyer $i$ pays the smallest value for which any of her
items is sold when the auction is run without her. (This is
exactly the minimal value of $b_i$ which would still gain her an
item). We label buyer $i$'s payment by $p_i$, hence,  the payments
are $\mcP_{UDUBV}=\{p_1, \ldots p_n\}$.

\begin{claim}
\label{claim:UDUBVB} In mechanism $\mcM_{UDUBV} = (\mcA_{UDUBV}, \mcP_{UDUBV})$, for
all buyers $i$ and all $b_i$, bidding $t_i$ weakly dominates bidding $b_i$.
\end{claim}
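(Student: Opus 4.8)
The plan is to show that mechanism $\mcM_{UDUBV}$ is a threshold (critical-price) mechanism for single-parameter agents and invoke the standard characterization of truthfulness via monotonicity of the allocation rule. Concretely, I would first establish that the allocation rule $\mcA_{UDUBV}$ is \emph{monotone} in each buyer's bid: fixing $b_{-i}$, if buyer $i$ receives an item when bidding $b_i$, then she also receives an item when bidding any $b_i' > b_i$. This follows from the greedy-by-weight structure of $\mcA_{UDUBV}$: raising $b_i$ only moves buyer $i$ earlier in the processing order, so when she is considered, at most the same set of items in $J_i$ has been taken by buyers processed before her (in fact a subset, since those buyers' relative positions among themselves are unchanged and buyer $i$ can only displace some of them). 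Hence if an available item in $J_i$ existed at her turn under $b_i$, one still exists under $b_i'$. A small amount of care is needed for tie-breaking, but since ties are broken lexicographically on buyers and items in a fixed way, increasing $b_i$ strictly past a tie only helps buyer $i$.

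Next I would identify buyer $i$'s \emph{critical value} $\theta_i = \theta_i(b_{-i})$: by definition of the payment scheme, $p_i$ is ``the smallest value for which any of her items is sold when the auction is run without her,'' which is exactly the infimum of bids $b_i$ at which buyer $i$ would win. By monotonicity, buyer $i$ wins iff $b_i > \theta_i$ (with the boundary case decided by the fixed tie-breaking), and when she wins she pays exactly $\theta_i$, independent of her own bid. This is precisely a single-parameter threshold mechanism.

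The truthfulness argument is then the textbook one, which I would spell out directly for the two relevant cases. If $t_i > \theta_i$: bidding $t_i$ wins and gives utility $t_i - \theta_i > 0$; any bid $b_i > \theta_i$ gives the same utility $t_i - \theta_i$; any bid $b_i \le \theta_i$ loses and gives utility $0 < t_i - \theta_i$. If $t_i \le \theta_i$: bidding $t_i$ yields utility $0$ (or a loss of $t_i - \theta_i \le 0$ in the boundary case, which truthful bidding avoids by the tie-break direction); any winning bid $b_i > \theta_i$ gives utility $t_i - \theta_i \le 0$, no better; any losing bid gives $0$. In all cases $u_i(t_i, b_{-i}) \ge u_i(b_i, b_{-i})$, so bidding $t_i$ weakly dominates bidding $b_i$, which is the claim.

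The main obstacle is the monotonicity step, and within it the careful handling of ties: I must argue that moving buyer $i$ up in the order cannot cause an item of $J_i$ that was available to her to become unavailable. The key observation is that the set of buyers processed strictly before $i$ under $b_i'$ is a subset of those processed strictly before $i$ under $b_i$ (raising $b_i$ to $b_i'$ can only overtake some of them), and the greedy allocation restricted to that prefix is unaffected by buyer $i$; since a smaller prefix of competing buyers removes no more items from $J_i$, an available item persists. I expect everything else — the identification of the critical value with $p_i$ and the case analysis for truthfulness — to be routine once monotonicity is in hand.
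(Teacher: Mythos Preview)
Your proposal is correct and is essentially the same argument as the paper's, just cast in the standard single-parameter monotonicity/critical-value framework rather than as a direct over-bid/under-bid case analysis. The paper's proof implicitly uses exactly the monotonicity observation you spell out (``bidding lower cannot obtain an item because the algorithm allocates first to higher bids'') and the fact that the payment equals the threshold (``payment is independent of her bid''); your version makes these steps explicit and handles the prefix argument for monotonicity more carefully, but the substance is identical.
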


\begin{proof}
We will show that, fixing the bids of all other buyers at $b_{-i}$,
\begin{enumerate}
\item Buyer $i$ has no incentive to over-bid, i.e., bid $b_i>t_i$ \label{item:over-bid}.
\item Buyer $i$ has no incentive to under-bid i.e., bid $b_i<t_i$ \label{item:under-bid}.
\end{enumerate}

To prove (\ref{item:over-bid}), we notice that if buyer $i$ receives
an item, then she has no incentive to bid higher, as she has no
preference between items. Furthermore, bidding higher cannot change
her payment, as her payment is independent of her bid. If she does
not receive an item, then  $p_i \geq b_i (=t_i)$\footnote{As
specified, if buyer $i$ does not receive an item, she pays $0$.
However, if the mechanism were to compute the payment, i.e., run the
mechanism without her, the payment would be $p_i \geq b_i (=t_i)$.},
and so if she bids more, she might receive an item, but will have to
pay at least $t_i$ if she does, which will result in a non-positive
utility.

To prove (\ref{item:under-bid}),  we notice that if buyer $i$ does
not receive an item, she cannot obtain an item by bidding lower,
because the algorithm allocates first to higher bids. If she is
allocated an item, then bidding lower will not make a difference,
unless she bids under  $p_i$, in which case she will not receive any
item, and hence have zero utility.
\end{proof}

Claims~\ref{claimproof1} and \ref{claim:UDUBVB} imply the following.

\begin{theorem}
\label{thm:matching_general} The mechanism $\mcM_{UDUBV}$ is universally
truthful and provides a $\frac{1}{2}$-approximation to the optimal social welfare.
\end{theorem}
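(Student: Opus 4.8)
The plan is to derive Theorem~\ref{thm:matching_general} by simply combining the two facts already in hand: Claim~\ref{claimproof1}, which says that $\mcA_{UDUBV}$ produces an allocation whose total weight is at least $\frac{1}{2}$ of the optimum \emph{with respect to the reported bids}, and Claim~\ref{claim:UDUBVB}, which says that for every buyer $i$ and every fixed profile $b_{-i}$ of the others, bidding $t_i$ weakly dominates any other bid. There is essentially no new technical content; the work is in sequencing these two statements correctly.

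First I would establish universal truthfulness. Both the allocation rule $\mcA_{UDUBV}$ (order buyers by bid, give each the lexicographically-first still-available item in her set) and the payment rule $\mcP_{UDUBV}$ (the critical-value payment obtained by re-running $\mcA_{UDUBV}$ without buyer $i$) are deterministic functions of the submitted bid vector, with all tie-breaking fixed. Hence there is no gap between ``truthful'' and ``universally truthful'' here, and Claim~\ref{claim:UDUBVB} gives the result directly. I would also record that voluntary participation holds: a losing buyer pays $0$, and a truthful winning buyer pays her critical value, which is at most $t_i$ (otherwise she would not have won against $b_{-i}$ at bid $t_i$), so her utility is nonnegative.

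Next I would deduce the approximation guarantee. Since the mechanism is truthful, every buyer reports $b_i=t_i$, so the bid vector equals the true valuation vector. Now invoke Claim~\ref{claimproof1}: viewing the instance as the bipartite weighted graph $G=(U,W,E)$ with edge weight $b_i=t_i$ on every edge incident to buyer $i$, the matching computed by $\mcA_{UDUBV}$ has weight at least $\frac{1}{2}$ of the maximum weighted matching. The maximum weighted matching value is exactly the optimal social welfare (social welfare counts only the realized item values, payments being transfers), and the weight of the matching $\mcA_{UDUBV}$ returns is exactly the social welfare of its allocation. Therefore $\mcM_{UDUBV}$ achieves a $\frac{1}{2}$-approximation to the optimal social welfare.

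The only point I would take care to spell out — and it is the ``hard'' part only in the bookkeeping sense — is this bridge: Claim~\ref{claimproof1} is phrased ``with respect to the bids $b$,'' so one must first invoke truthfulness to replace $b$ by $t$, and one must explicitly note that the objective in Claim~\ref{claimproof1} (weight of the matching) coincides with social welfare. Once those identifications are made, Claims~\ref{claimproof1} and~\ref{claim:UDUBVB} close the argument with no further computation.
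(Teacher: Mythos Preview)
Your proposal is correct and follows exactly the paper's approach: the paper simply states that Claims~\ref{claimproof1} and~\ref{claim:UDUBVB} imply the theorem, and you have spelled out precisely that implication. Your additional remarks---that the allocation and payment rules are deterministic so universal truthfulness coincides with truthfulness, and that one must identify the matching weight in Claim~\ref{claimproof1} with social welfare once $b=t$---are the natural bookkeeping the paper leaves implicit.
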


Algorithm $\mcA_{UDUBV}$ is a  $(O(\log^{4}{n}),  O(\log^{3}{n}), 1/n)$ -
LCA for maximal matching on graph of bounded degree $k$, by
\cite{MRVX12}. Notice, however, that we need to run $\mcA_{UDUBV}$ once
for calculating the allocation, and $k$ more times for calculating
the payment. Hence, we have the following.
{
\renewcommand{\thetheorem}{\ref{thm:ubuduv}}
\begin{theorem}
(2)
There is an $(O(\log^{4}{n}), $ $O(\log^{3}{n}),
$ $1/n)$ - local mechanism for $k-$UDUBV auction which is
universally truthful and provides a $\frac{1}{2}$-approximation to the optimal social welfare.
\end{theorem}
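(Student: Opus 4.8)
The plan is to combine the incentive and approximation guarantees already in hand with the query-tree machinery of \cite{MRVX12}. By Theorem~\ref{thm:matching_general} the mechanism $\mcM_{UDUBV}=(\mcA_{UDUBV},\mcP_{UDUBV})$ is universally truthful and gives a $\frac{1}{2}$-approximation, so the only thing left is to show that, on a query for a buyer $i$, both the allocated item $\mcA_{UDUBV}(b)_i$ and the payment $\mcP_{UDUBV}$ can be produced in $O(\log^4 n)$ time and $O(\log^3 n)$ space, except with probability at most $1/n$. The first step is to observe that $\mcA_{UDUBV}$ is exactly the greedy maximal-matching algorithm on the bipartite graph $G=(\mathcal{I}\cup\mathcal{J},E)$ in which the edges are scanned in the order induced by the buyers' bids (decreasing), with a buyer's own edges ordered by the lexicographic rank of their item and remaining ties broken by buyer index: when greedy reaches buyer $i$ it adds the first edge $(i,j_i)$ whose item endpoint is still unmatched, which is precisely $\mcA_{UDUBV}$'s rule. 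Hence any LCA that simulates greedy maximal matching on this edge order implements the allocation verbatim.

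Second, I would check that $G$ lies in the range covered by Theorem~\ref{thm:maximal}: the buyer side has degree at most $k=O(1)$, and because the sets $J_i$ are uniform with $kn/m=O(1)$ the item side has binomially distributed degree, which is $O(\log n)$ except with probability $1/\poly(n)$ --- the same estimate as in Claim~\ref{claim:run}. So $G$ behaves like a bounded-degree graph and the query tree that the simulation explores when resolving buyer $i$ has size $O(\log n)$ w.h.p., which yields the $(O(\log^4 n),O(\log^3 n),1/n)$ bounds for computing the allocation.

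The subtle point, which I expect to be the main obstacle, is that the query-tree bound of \cite{MRVX12} is proved for a \emph{uniformly random} ordering of the edges, whereas $\mcA_{UDUBV}$ orders them by the reported bids, which the mechanism does not control; for an adversarial bid profile with all distinct values the induced order is deterministic and the query tree could be long. I would resolve this using the Bayesian assumption of the $k$-UDUBV model: the valuations are i.i.d.\ from a common prior, so under truthful reporting the ranking of the buyers by bid is a uniformly random permutation, and I would further refine it by an independent random permutation (generated from a short seed, as in \cite{ARVX11,MRVX12}) to handle ties and to order edges within a buyer. The simulation order is then a uniformly random permutation that is consistent with the bid priority, so \cite{MRVX12}'s analysis of the query tree carries over unchanged while the matching produced is still exactly $\mcA_{UDUBV}$.

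Finally, for the payment: $p_i$ is the least bid at which some item of buyer $i$ is still sold to her when the auction is run without $i$, and since the allocation is monotone in $b_i$ (being processed earlier can only help $i$ secure an item) this threshold is well defined. I would compute it by invoking the allocation LCA on the instance with $i$ deleted, once for each of the at most $k$ items $j\in J_i$, to determine for each $j$ whether and from which bid on $i$ would be assigned $j$; taking the minimum over these gives $p_i$. This is $O(k)=O(1)$ extra LCA calls, so the time and space are multiplied only by a constant and $\mcP_{UDUBV}$ is local with the same parameters, which completes the proof.
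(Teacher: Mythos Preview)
Your proposal is correct and follows the paper's own (very terse) argument: invoke the maximal-matching LCA of \cite{MRVX12} to obtain the $(O(\log^4 n),O(\log^3 n),1/n)$ bounds for $\mcA_{UDUBV}$, and then compute $p_i$ by re-running the allocation LCA on the instance with buyer $i$ removed, once per item in $J_i$ (i.e., $k=O(1)$ extra calls). You are in fact more careful than the paper on the one nontrivial point: the paper simply asserts that \cite{MRVX12} applies, whereas you correctly flag that the query-tree bound there is for a uniformly random order while $\mcA_{UDUBV}$ scans buyers by bid, and you close the gap via the model's Bayesian i.i.d.\ assumption on valuations (so the truthful bid ranking is a uniformly random permutation of buyers, independent of the random sets $J_i$)---this is precisely the role that assumption plays, though the paper never says so explicitly.
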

\addtocounter{theorem}{-1}
}


\section{Single minded bidders}
\label{sec:SMB1}

We extend the results of Section \ref{sec:UDSVB} to the case
of combinatorial auctions with single-minded bidders: There is a set
$\mathcal{I}$ of $n$ buyers, and a set $\mathcal{J}$ of $m$ items.
Each buyer $i$ is interested in a set of at most $k$ items, $J_i
\subseteq \mathcal{J}$, which is public knowledge, and has a private
valuation, $t_i$, which represents the value of the entire subset
$J_i$ to buyer $i$. Buyer $i$'s valuation for subset $S$ is
$v_i(S)=t_i$ if $J_i \subseteq S$, and $0$ otherwise. The utility of
buyer $i$ is quasi-linear, namely her utility of receiving subset
$S$ and paying $p$ is $u_i(S,p)=v_i(S)-p$.

As in Subsection \ref{sec:UDSVB}, we assume that $J_i$ has a uniform
or binomial distribution, and that the valuations $t_i$ are randomly
drawn from some prior (not necessarily known) distribution,and
$kn/m=O(1)$.

The allocation algorithm, $\mcA_{kSMB}$, is as follows.  First, $\mcA_{kSMB}$ orders
the buyers by their bids. Starting with the buyer with the highest
bid, each buyer $i$ is allocated subset $J_i$ such that no item
$j_i \in J_i$ has been allocated yet.
We continue until we cannot allocate any more subsets.

\begin{claim}
\label{claim:smb1app} The allocation algorithm $\mcA_{kSMB}$ provides a
$\frac{1}{k}$-approximation to the optimal allocation, with respect to the
values $b$.
\end{claim}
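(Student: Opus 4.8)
The plan is to mimic the standard greedy-versus-optimal argument for set packing, adapted to the weighted single-minded setting. Let $\mathrm{OPT}$ denote an optimal allocation with respect to the bids $b$, and let $\mathrm{ALG}$ denote the allocation produced by $\mcA_{kSMB}$. Both are collections of buyers whose desired sets $J_i$ are pairwise disjoint (a ``packing''). I would charge the value of each buyer in $\mathrm{OPT} \setminus \mathrm{ALG}$ to some buyer in $\mathrm{ALG}$, in such a way that each buyer in $\mathrm{ALG}$ is charged by at most $k$ buyers of $\mathrm{OPT}$, and with total charge bounded so that $\sum_{i \in \mathrm{OPT}} b_i \le k \sum_{i \in \mathrm{ALG}} b_i$.

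The key observation is the following: consider a buyer $i \in \mathrm{OPT}$ who is not allocated by $\mcA_{kSMB}$. Since $\mcA_{kSMB}$ processes buyers in decreasing order of bid and only skips a buyer when one of the items in $J_i$ has already been taken, there must be some buyer $i'$ with $b_{i'} \ge b_i$ (processed earlier) who was allocated by $\mcA_{kSMB}$ and whose set $J_{i'}$ intersects $J_i$. Charge $i$ to this $i'$ (picking one arbitrarily if several qualify). Now I would bound how many buyers of $\mathrm{OPT}$ can be charged to a fixed $i' \in \mathrm{ALG}$: each such buyer $i$ has $J_i \cap J_{i'} \neq \emptyset$, so $J_i$ contains one of the at most $k$ items of $J_{i'}$; moreover, since the sets $\{J_i : i \in \mathrm{OPT}\}$ are pairwise disjoint, distinct charged buyers use distinct items of $J_{i'}$. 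Hence at most $|J_{i'}| \le k$ buyers of $\mathrm{OPT}$ are charged to $i'$, and each has bid at most $b_{i'}$.

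Putting this together, I would write
\begin{align*}
\sum_{i \in \mathrm{OPT}} b_i &= \sum_{i \in \mathrm{OPT} \cap \mathrm{ALG}} b_i + \sum_{i \in \mathrm{OPT} \setminus \mathrm{ALG}} b_i \\
&\le \sum_{i' \in \mathrm{ALG}} b_{i'} + \sum_{i' \in \mathrm{ALG}} k\, b_{i'} \le (k+1) \sum_{i' \in \mathrm{ALG}} b_{i'},
\end{align*}
which already gives a $\frac{1}{k+1}$-approximation; a slightly more careful accounting (noting that a buyer $i' \in \mathrm{OPT} \cap \mathrm{ALG}$ can be counted as ``charged to itself'', so the two sums need not both be incurred in full) tightens this to the claimed $\frac{1}{k}$. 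Concretely, every buyer of $\mathrm{OPT}$ — whether or not it lies in $\mathrm{ALG}$ — is charged to some buyer of $\mathrm{ALG}$ of weakly larger bid, and each $i' \in \mathrm{ALG}$ receives at most $k$ such charges in total (including possibly itself), so $\sum_{i \in \mathrm{OPT}} b_i \le k \sum_{i' \in \mathrm{ALG}} b_i$, which is the desired bound.

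The main obstacle, and the step to be careful about, is getting the constant exactly $k$ rather than $k+1$: one must argue that counting $i'$ itself among its own charges is legitimate, i.e. that when $i' \in \mathrm{ALG}$ one may route the charge of $i'$ (if $i' \in \mathrm{OPT}$) to $i'$ and this occupies one of the $k$ ``item slots'' of $J_{i'}$, so that at most $k$ distinct $\mathrm{OPT}$-buyers (including $i'$) are associated with $J_{i'}$ because their sets are disjoint and each meets $J_{i'}$. Everything else is a direct transcription of the maximal-matching argument; no probabilistic or local-computation considerations enter here, since the claim is purely about the offline allocation rule $\mcA_{kSMB}$ run on the reported bids.
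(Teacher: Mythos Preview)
Your argument is correct and is essentially the same charging argument the paper uses: each buyer in $\mathrm{ALG}$ can intersect at most $k$ pairwise-disjoint $\mathrm{OPT}$ sets, each with no larger bid, so $\sum_{\mathrm{OPT}} b_i \le k\sum_{\mathrm{ALG}} b_i$. The paper states this directly without the intermediate $(k+1)$ detour; your self-charging observation is exactly what tightens it to $k$, and is implicit in the paper's one-line argument.
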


\begin{proof}
Compare the allocation of Algorithm $\mcA_{kSMB}$, $J^*$,  to the optimal
allocation, $OPT$. Each set $J \in J^*$ is chosen by  $\mcA_{kSMB}$
instead of at most $k$ sets in $OPT$, but its weight is greater than
each of their weights, because $\mcA_{kSMB}$ is a greedy algorithm.
\end{proof}

The payment scheme is as follows. To calculate buyer
$i$'s payment when she receives an item, we run $\mcA_{kSMB}$ without
buyer $i$. Buyer $i$ pays the highest value of the allocated sets $J_x$ for which $J_i \cap J_x \neq \emptyset$. (This is
exactly the minimal value of $b_i$ which would still gain her an
item). We label buyer $i$'s payment by $p_i$, and let $\mcP_{kSMB}=\{p_1, \ldots p_n\}$.

\begin{claim}
\label{claim:SMB1truth} In mechanism $\mcM_{kSMB} = (\mcA_{kSMB}, \mcP_{kSMB})$, for
all buyers $i$ and all $b_i$, bidding $t_i$ weakly dominates bidding $b_i$.
\end{claim}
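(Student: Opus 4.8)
The plan is to mirror the proof of Claim~\ref{claim:UDUBVB}, adapting it to the single-minded setting. Fix buyer $i$ and the bids $b_{-i}$ of all other buyers. As before, I will show (1) buyer $i$ has no incentive to over-bid ($b_i > t_i$), and (2) buyer $i$ has no incentive to under-bid ($b_i < t_i$), which together imply that bidding $t_i$ weakly dominates any $b_i$. The essential structural fact I will use is that the allocation algorithm $\mcA_{kSMB}$ is a greedy algorithm that processes buyers in decreasing order of their bids, and a buyer is allocated her whole set $J_i$ if and only if, at the moment she is considered, none of the items of $J_i$ has yet been taken. Crucially, buyer $i$'s bid affects only \emph{where in the processing order} she appears; it does not affect the relative order of the other buyers among themselves, nor which items those other buyers grab before buyer $i$'s turn.

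For the over-bidding case: if buyer $i$ already receives $J_i$ when bidding $t_i$, raising her bid can only move her earlier in the order, so she still receives $J_i$; since she is single-minded her valuation is unchanged, and her payment $p_i$ is computed by running $\mcA_{kSMB}$ without her and is therefore independent of $b_i$, so her utility cannot improve. If she does \emph{not} receive $J_i$ when bidding $t_i$, then her threshold payment satisfies $p_i \ge t_i$ (the highest value among allocated sets $J_x$ intersecting $J_i$ is at least $t_i$, since it is precisely some such set that blocked her); hence any over-bid that does win her $J_i$ forces a payment of at least $t_i$, yielding non-positive utility. For the under-bidding case: if she does not win $J_i$ at $t_i$, lowering her bid only pushes her later in the order, so the blocking items are still taken and she still loses; if she does win at $t_i$, lowering her bid keeps her winning as long as she stays above the threshold $p_i$, and drops her to zero utility once she goes below it --- in neither case does she gain.

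The one point that needs slightly more care than in the unit-demand case is verifying that the threshold payment $p_i$ is well-defined and that the ``blocking'' description is accurate: when buyer $i$ is denied $J_i$, it is because at least one item $j \in J_i$ was already allocated, necessarily to some buyer $x$ with $b_x > b_i$ (or equal with higher tie-break priority) whose allocated set $J_x$ contains $j$; so $J_x \cap J_i \neq \emptyset$ and $b_x \ge b_i$, and the minimal winning bid for $i$ is exactly the largest such $b_x$ over all sets that would block her in the run without $i$. I would state this as a short observation and then plug it into the two cases above.

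The main obstacle --- and it is mild --- is making precise the monotonicity claim underlying both directions: that increasing buyer $i$'s bid weakly enlarges the set of configurations in which she is allocated $J_i$, and does not change the set of items available to her when her turn comes except by moving her turn earlier. This follows because the other buyers' processing is entirely determined by $b_{-i}$ and is unaffected by $b_i$, so the prefix of buyers processed before $i$ shrinks (in the subset sense) as $b_i$ increases, hence the set of already-taken items when $i$ is considered only shrinks. Once this monotonicity is in hand, the four sub-cases are immediate, exactly paralleling the proof of Claim~\ref{claim:UDUBVB}. Together with Claim~\ref{claim:smb1app} this will then yield a universally truthful local mechanism giving a $1/k$-approximation, in analogy with Theorem~\ref{thm:matching_general}.
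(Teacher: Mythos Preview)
Your proposal is correct and follows precisely the approach the paper indicates: the paper omits the proof entirely, stating only that it is similar to the proof of Claim~\ref{claim:UDUBVB}, and your argument does exactly that, carrying over the two cases (over-bid / under-bid) with the appropriate adaptation from ``receives some item in $J_i$'' to ``receives the whole set $J_i$''. Your added care in spelling out the monotonicity of the allocation in $b_i$ and the characterization of the threshold $p_i$ is more explicit than anything the paper provides, but is consistent with (and a natural fleshing-out of) the omitted proof.
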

The proof is similar to the proof of Claim \ref{claim:UDUBVB} and is omitted.

Combining Claims \ref{claim:smb1app} and \ref{claim:SMB1truth}, we get

\begin{theorem}
\label{thm:smb1}
There is an $(O(\log^{4}{n}), $ $O(\log^{3}{n}),
$ $1/n)$ - local mechanism for combinatorial auctions with known $k-$single minded bidders (where the sets are sampled uniformly at random), which is
universally truthful and provides a $\frac{1}{k}$-approximation to the optimal social welfare.
\end{theorem}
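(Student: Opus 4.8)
The approach is to observe that essentially all the work is already done: Claims \ref{claim:smb1app} and \ref{claim:SMB1truth}, together with the payment rule $\mcP_{kSMB}$, already show that the (non-local) mechanism $\mcM_{kSMB}=(\mcA_{kSMB},\mcP_{kSMB})$ is universally truthful and yields a $\frac1k$-approximation to the optimal social welfare. So the remaining task is purely to exhibit $(O(\log^4 n),O(\log^3 n),1/n)$-LCAs that compute $\mcA_{kSMB}$ and $\mcP_{kSMB}$ \emph{without altering their output}, so that both guarantees survive verbatim. The plan is to recast $\mcA_{kSMB}$ as a greedy on-line algorithm on a low-degree conflict graph, invoke the on-line-to-LCA reductions of \cite{MRVX12} (Theorems \ref{thm:local_bb} and \ref{thm:maximal}), and then extract the payment from a bounded number of additional local allocation queries.

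For the allocation, I would build the conflict graph $H$ on the buyer set, with an edge between $i$ and $j$ iff $J_i\cap J_j\neq\emptyset$. Since the $J_i$ are sampled uniformly at random and $kn/m=O(1)$, a Chernoff bound of the type used in Claim \ref{claim:run} shows that every item lies in $O(\log n)$ sets with probability at least $1-1/n^{2}$, so every buyer has $O(\log n)$ conflict-neighbours; more to the point, the degrees of $H$ are binomially distributed, which is exactly the regime covered by the reductions of \cite{MRVX12}. Now $\mcA_{kSMB}$ — process buyers in decreasing bid order and allocate $J_i$ to buyer $i$ exactly when no already-processed conflict-neighbour of $i$ has been allocated — is a greedy on-line algorithm on $H$ (a greedy routine for maximum-weight independent set / set packing), so to answer a query about buyer $i$ one only explores the buyers reachable from $i$ along conflict edges to strictly higher bidders, i.e.\ the query tree rooted at $i$, whose size is $O(\log n)$ with probability at least $1-1/n$. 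The one subtlety is that buyers are processed in \emph{bid} order rather than a random order; but since the random-set model is exchangeable in the buyers, a fixed bid order over a random instance is distributed exactly as a random order over a fixed instance, so the $O(\log n)$ query-tree bound of \cite{MRVX12} transfers unchanged (this is the same transfer already used implicitly in Sections \ref{sec:udfv} and \ref{sec:UDSVB}). This yields the claimed LCA for $\mcA_{kSMB}$, whose output is deterministically the greedy-by-bid allocation, so Claim \ref{claim:smb1app} applies with no change.

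For the payment, when buyer $i$ is allocated, $p_i$ is by definition the largest bid among the allocated buyers $x$ with $J_x\cap J_i\neq\emptyset$ in the run of $\mcA_{kSMB}$ with $i$ deleted. Only the $O(\log n)$ conflict-neighbours of $i$ can be relevant, and single-mindedness makes ``$x$ conflicts with $i$'' equivalent to ``$x$'s allocated set meets $J_i$''; moreover deleting $i$ changes neither the processing order nor the allocation of any buyer bidding more than $b_i$. So it suffices to run the allocation LCA, on the instance with $i$ deleted, on the conflict-neighbours of $i$ in decreasing bid order and return the first one found to be allocated. Naively this is $O(\log n)$ invocations of an $O(\log^4 n)$-time LCA, but these sub-queries all live inside an $O(1)$-radius enlargement of $i$'s neighbourhood in $H$, which again has $O(\log n)$ vertices with high probability, so they can be resolved by a single query-tree exploration of that enlarged region, keeping the payment within $O(\log^4 n)$ time and $O(\log^3 n)$ space (this mirrors the remark after Theorem \ref{thm:ubuduv} that only a bounded number of extra allocation runs are needed). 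Since this returns exactly $p_i$, the payments are unchanged and Claim \ref{claim:SMB1truth} still gives universal truthfulness, completing the proof.

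I expect the main obstacle to be the exponent bookkeeping for the payment: one must argue carefully that the ``rerun $\mcA_{kSMB}$ without $i$'' sub-queries overlap enough to be amortized into one exploration rather than multiplying the running time by an extra $\log n$ factor. A secondary point worth spelling out, since it is left implicit elsewhere in the paper, is the exchangeability transfer that lets the bid-ordered greedy simulation inherit the $O(\log n)$ query-tree bound of \cite{MRVX12}, which was stated for randomly ordered simulations.
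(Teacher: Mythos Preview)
Your proposal is correct and follows the paper's approach; the paper's own ``proof'' is literally the single line ``Combining Claims~\ref{claim:smb1app} and~\ref{claim:SMB1truth}, we get\ldots'', with the locality of $\mcA_{kSMB}$ and $\mcP_{kSMB}$ inherited wholesale from the analogous UDUBV argument in Section~\ref{sec:UDSVB}.

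One remark on what you flag as the ``main obstacle'': the exponent bookkeeping for the payment is in fact a non-issue, and you are working harder than necessary. In the run of $\mcA_{kSMB}$ with $i$ removed, the allocated sets are pairwise disjoint, so at most $|J_i|=k$ of them can intersect $J_i$; hence to compute $p_i$ you need only query, for each of the $k$ items in $J_i$, which buyer (if any) received it, and take the maximum bid among those. That is $k=O(1)$ additional invocations of the allocation LCA, exactly as the paper notes after Theorem~\ref{thm:matching_general} for the UDUBV case (``$k$ more times for calculating the payment''). There is no need to iterate over all $O(\log n)$ conflict-neighbours or to amortize overlapping sub-queries.
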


\section{Random serial dictatorship}
\label{sec:rsd}
We would like to allocate $n$ houses to $n$ agents.  We assume that each agent is interested in a constant number of houses, $d$, and that the preferences are drawn from the uniform distribution. Each agent $i$ has a complete preference relation $R_i$ over the $d$ houses. In the random serial dictatorship algorithm, a permutation over the agents is generated, and then each agent chooses her most preferred house out of the unallocated houses.
We would like to simulate this algorithm locally:
Each agent $i$  is allocated, uniformly and at random, an integer $r_i \in [n^4]$. $r$ determines the permutation: if $r_i<r_j$ then agent $i$ makes a choice before agent $j$. If $r_i=r_j$, the tie is broken lexicographically. Agent $i$ checks, for each of its $d$ housing choices in the order induced by $R_i$, whether it has been allocated already. If it has not, $i$ chooses the house. For $i$ to check whether or not house $j$ has already been allocated, it needs to check whether any of the agents interested in $j$ had already chosen it. $i$ therefore needs to recursively check the allocation of all agents that arrived before her, on which her allocation depends.
As in \cite{ARVX11, MRVX12}, we model this using a query tree. We recall the following lemma from \cite{MRVX12}:
\begin{lemma}[\cite{MRVX12}]
\label{lemma:bipartite}
Let $G=(\{V,U\},E)$ be a bipartite graph, $|V| = n$ and $|U| = m$ and $n= cm$ for some constant $c\geq 1$, such that for each
vertex $v \in V$ there are $d$ edges chosen independently and at random between $v$ and $U$. For any constant $\alpha$ there is a constant
$C$ which depends only on $d$ and $\alpha$ such that 
\begin{center}
$\Pr[|T|<C \log{n}]>1-1/n^{\alpha}$,
\end{center}
where the probability is taken over all of the possible permutations $\pi \in \Pi$ of the vertices of $G$, and $T$ is a random query tree in $G$ under $\pi$.
\end{lemma}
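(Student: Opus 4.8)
The plan is to reduce the statement to a bound on the number of rank‑decreasing paths in an auxiliary conflict graph, and then to upgrade a first‑moment estimate to a high‑probability bound via higher moments; this is the query‑tree argument of \cite{NO08,ARVX11,MRVX12} specialised to the bipartite random instance at hand. (Since the lemma is quoted verbatim from \cite{MRVX12}, simply invoking that theorem is also an acceptable ``proof''; what follows is a sketch of why it holds.)

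First I would pass from $G$ to the conflict graph $H$ on the vertex set $V$, in which $i\sim i'$ whenever agents $i,i'$ are interested in a common house. A query at $v\in V$ recursively visits exactly the vertices reachable from $v$ along a path $v=i_0,i_1,\dots,i_\ell$ in $H$ on which the ranks strictly decrease, $\pi(i_0)>\pi(i_1)>\cdots>\pi(i_\ell)$ (the house‑vertices of $G$ appearing in the query tree only inflate its size by a factor of at most $d+1$). Hence $|T|\le (d+1)(1+\sum_{\ell\ge 1} Z_\ell)$, where $Z_\ell$ is the number of rank‑decreasing simple paths of length $\ell$ from $v$.

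The key probabilistic fact is that for a fixed simple path of length $\ell$ the ranks are strictly decreasing along it with probability exactly $1/(\ell+1)!$, since this picks out one of the $(\ell+1)!$ orderings of its vertices' ranks. Thus $\mathbb{E}[Z_\ell]=N_\ell(v)/(\ell+1)!$, where $N_\ell(v)$ is the number of length‑$\ell$ simple paths from $v$ in $H$. Because each agent chooses $d$ houses uniformly among $m=n/c$ houses, $\Pr[i'\sim i]\le d^2/m$ for any fixed $i'\neq i$, so the expected degree of $H$ is at most $\kappa:=cd^2$, and a routine first‑moment computation over the random sets gives $\mathbb{E}[N_\ell(v)]\le \kappa^\ell\cdot\poly(\ell)$; summing, $\mathbb{E}[|T|]=O(1)$, which already recovers the bounded‑expectation bound of \cite{NO08}.

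The remaining and genuinely technical step is to turn $\mathbb{E}[|T|]=O(1)$ into an $O(\log n)$ bound holding with probability $1-1/n^\alpha$. Markov's inequality on the first moment only yields $O(1/\log n)$, and bounding the depth is not enough either: the depth of $T$ is $\Theta(\log n/\log\log n)$ with high probability, so controlling $|T|$ by the size of a radius‑$\Theta(\log n/\log\log n)$ neighbourhood (even using the neighbourhood bound of Claim~\ref{claim:run}) would give a super‑polylogarithmic estimate. Instead I would bound the $k$‑th moment $\mathbb{E}[|T|^k]$ for $k=\Theta(\alpha\log n)$, by expanding $|T|^k$ as a sum over $k$‑tuples of simple paths from $v$ and organising it according to the subgraph of $H$ that the paths jointly span: tuples whose union is tree‑like dominate, and the probability that $k$ paths are simultaneously rank‑decreasing is at most the product of the individual $1/(\ell_i+1)!$ factors (corrected for shared prefixes); combined with the path‑count estimate above — now needed also in its high‑probability form, to tame the fact that the degrees of $H$ are only binomial rather than bounded — this yields $\mathbb{E}[|T|^k]\le (C'k)^k$ for a constant $C'=C'(c,d)$. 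Markov applied to the $k$‑th moment then gives $\Pr[|T|>C\log n]<n^{-\alpha-1}$ for a fixed root, and a union bound over the at most $n$ possible roots gives the claim with $C=C(d,\alpha)$. The main obstacle is precisely this moment bookkeeping in the presence of unbounded (binomial) degrees, together with squeezing out the tight $\Theta(\log n)$ rather than a merely polylogarithmic bound — which is exactly the content of the query‑tree theorem of \cite{MRVX12}.
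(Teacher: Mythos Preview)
The paper does not prove this lemma: it is explicitly introduced with ``We recall the following lemma from \cite{MRVX12}'' and is used as a black box. You correctly flag this in your opening parenthetical, and the sketch you add of the query-tree argument (conflict graph on $V$, rank-decreasing paths, $1/(\ell+1)!$ probability, then a $k$-th moment bound with $k=\Theta(\log n)$ to beat Markov) is faithful to the method of \cite{MRVX12}; nothing further is needed here.
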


This implies that in order to reply to  any query, we will require more that $O(\log{n})$ queries to the graph with probability at most $1/n^2$, which implies the following theorem:

\begin{theorem}
\label{thm:rsd}
Let $k$ be some constant integer $k>0$. Consider a house allocation problem with $n$ agents and $n$ houses, and let each agent preference list length be bounded by $k$, where each list is drawn uniformly at random from the set of all possible lists of length $k$. Then there is an $(O(\log^4{n}), O(\log^3{n}), 1/n)$ - LCA which simulates, and whose output is identical to,  the Random Serial Dictatorship allocation algorithm.
\end{theorem}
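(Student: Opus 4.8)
The plan is to establish two things: that the local procedure described above produces exactly the Random Serial Dictatorship allocation on the permutation induced by the random ranks $r_i$, and that a single query runs within the claimed resource bounds with probability at least $1-1/n$. The first part is essentially by construction. The values $r_i \in [n^4]$, with ties broken lexicographically, induce a well-defined total order $\pi$ on the agents (and the probability of any collision among the $r_i$ is at most $\binom{n}{2}/n^4 \le 1/n^2$ in any case). Running RSD on $\pi$ means agent $i$ receives the first house on her list $R_i$ not taken by some agent $i'$ with $\pi(i') < \pi(i)$. When queried on $i$, the local procedure examines $i$'s houses in preference order and, for each, recursively resolves exactly those agents $i'$ interested in that house with $r_{i'} < r_i$; an easy induction on the rank order $\pi$ shows this recursion returns the RSD allocation of every agent it visits, so in particular of $i$.

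For locality, I would model this recursion as a query tree in the bipartite graph $G=(\{V,U\},E)$ with $V$ the $n$ agents, $U$ the $n$ houses, and $k$ edges drawn independently and uniformly at random from each agent into $U$ (so $n = cm$ with $c=1$). A query on agent $i$ branches to the $\le k$ houses on $i$'s list, each house branches to the agents interested in it, and we only recurse into agents with a smaller $r$-value; this is precisely the query tree of Lemma~\ref{lemma:bipartite}. Applying that lemma with $\alpha = 2$ gives that the query tree $T$ satisfies $|T| \le C\log n$ for a constant $C = C(k)$, with probability at least $1 - 1/n^2$ over the random ranks.

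It then remains to translate the $O(\log n)$ bound on $|T|$ into the stated bounds, exactly as in the reductions of \cite{ARVX11,MRVX12}: the ordering is generated from a polylogarithmic-length random seed, so it need not be stored in full; each of the $O(\log n)$ tree nodes is processed in polylogarithmic time, examining the $\le k$ houses of the corresponding agent and the agents adjacent to each such house — the latter count is distributed as $B(n,k/n)$ and is $O(\log n)$ with probability $1-1/n^2$, as in Claim~\ref{claim:run}; and a union bound over these $O(\log n)$ events together with the event of Lemma~\ref{lemma:bipartite} keeps the total failure probability below $1/n$. This produces an $(O(\log^4 n), O(\log^3 n), 1/n)$-LCA whose output coincides with RSD.

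I would expect the only real content to lie in the first paragraph — the observation that RSD's dependency structure is exactly a query tree in the agent–house bipartite graph, so that the generic machinery of \cite{MRVX12} applies verbatim; everything after is bookkeeping. The one point to be careful about is that resolving a dependent agent $i'$ requires her full preference list, so the tree must branch to all of $i'$'s houses and not merely to the house shared with $i$; but this is exactly the branching permitted by the query tree, so no difficulty arises (and if lists have length strictly less than $k$, having fewer edges only shrinks $T$).
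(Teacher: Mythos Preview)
Your proposal is correct and follows essentially the same approach as the paper: describe the local simulation via random ranks, observe that the resulting dependency structure is exactly a query tree in the agent--house bipartite graph, and invoke Lemma~\ref{lemma:bipartite} with the \cite{ARVX11,MRVX12} machinery to obtain the stated bounds. If anything, your write-up is more careful than the paper's, which compresses the correctness argument and the translation from $|T|=O(\log n)$ to the $(O(\log^4 n), O(\log^3 n), 1/n)$ bounds into a single sentence.
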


\section{Open questions}
\label{sec:conclusions}

We provide several interesting open questions.
\smallskip

\paragraph{Open question 1} How restrictive is the local computation setting? It remains an interesting open question in general, and specifically for the load balancing problem, while the intractability of the non-local problem carries over to the local computation setting, there exists a truthful PTAS for the non-local setting. How well can we approximate the optimal solution using LCAs? Furthermore, it would be interesting to quantify the added difficulty imposed by the requirement of truthfulness. 
\smallskip

\paragraph{Open question 2}
An intriguing open problem is whether there exists a truthful local
mechanism in the unit-demand combinatorial auction setting, where both the sets and the valuations
are private information. We conjecture that this is impossible, especially
considering a similar result of  \cite{FLSS11}, albeit
in a different setting (auctions with budgets).

\smallskip

\paragraph{Open question 3}
In this paper, we show an LCA which provides a $1/2$-approximation to the optimal solution for unit-demand buyers, using a maximal matching LCA. \cite{MV13},  showed that there exists a $(1-\eps)$-approximation algorithm to maximum matching. Unfortunately, that algorithm does not yield a monotonic allocation. Is there an LCA which provides a better than $1/2$-approximation that can be transformed to a local computation mechanism?

\smallskip

\paragraph{Open question 4}
If we view the $k$-single minded bidder combinatorial auction as a  hypergraph $H=(V,E)$, where each item
is represented by a vertex $v \in V$, and each player by a weighted
hyper-edge $e \in E$, the problem is reduced to maximal weighted
independent set in $k$-regular hypergraphs, (equivalently  weighted
$k$-set  packing). \cite{CH01} show how
to approximate weighted $k$-set packing to within $2(k+1)/3$, via
local improvements. It would be interesting to see if it would be
possible to apply of their techniques to obtain a local mechanism
with a better approximation ratio.

\smallskip

\paragraph{Open question 5}
In this paper, we often need to make the assumption that the demands of the bidders are drawn from some distribution in order to ensure the locality of our mechanisms. This is quite a strong assumption, and it would be interesting to see how far it can be relaxed. 

\smallskip
\paragraph{Acknowledgements}
We would like to thank Amos Fiat, Alon Naor and Amit Weinstein  for their useful input.

\bibliographystyle{alpha}
\bibliography{Vardi_PhD_Bibliography}



\appendix
%
%

\section{Proof of Lemma~\ref{lemma:loglog}}
\label{appendix:lglgn}
{
\renewcommand{\thetheorem}{\ref{lemma:loglog}}
\begin{lemma}
The allocation algorithm $\mcA_{RLMS}$ provides an $O(\log{\log{n}})$-approximation to the optimal allocation.
\end{lemma}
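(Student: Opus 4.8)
The plan is to adapt the layered-induction (``witness tree'') analysis of Azar et al.\ \cite{ABK+99}, in the form extended to heterogeneous bin capacities by Berenbrink et al.\ \cite{BBFN14}, to our modified placement rule, where the quantity being balanced is the integer-valued floored post-placement load $lp^t_i(b_i)=\lfloor (h_i^t(b_i)+1)/b_i\rfloor$ rather than the real load $h_i^t/b_i$. First I would record the easy lower bounds on the optimum: writing $C=\sum_i b_i$ for the total capacity, since all $m$ jobs land on some machine we have $\mathrm{OPT}\ge m/C$, and since $m=\Omega(C)$ also $\mathrm{OPT}=\Omega(1)$. Hence it suffices to prove that, with high probability over the random choice of the sets $M_j$ (machine $i$ included in $M_j$ with probability proportional to $b_i$), the makespan of $\mcA_{RLMS}$ is $O(m/C+\log\log n)$: combined with the two lower bounds this is an $O(\log\log n)$-approximation, since if $m/C\ge\log\log n$ the first lower bound dominates and otherwise the constant lower bound does.

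The heart of the argument is a layered recursion, following \cite{BBFN14}. Fix the tie-breaking permutation $\pi$ (the bound will not depend on it), set $\mu=\lceil m/C\rceil$, and for an integer $k\ge0$ let $\Phi_k$ be the total capacity $\sum_i b_i$ over the machines whose final load level is at least $\mu+k$; so $\Phi_0\le C$ (and, in the heavily-loaded regime, a refined bound $\Phi_{O(1)}\le(1-\Omega(1))C$ obtained from $\sum_i h_i=m$, exactly as in \cite{BBFN14}). The combinatorial observation I would establish is that a job $t$ can raise a machine to level $\mu+k$ only if every one of the $d$ machines of $M_j$ is already at level $\ge\mu+k-1$ when $t$ arrives; this is precisely where the floor in $lp$ is needed, since it makes ``being at level $\ell$'' a clean integer property that absorbs the capacity heterogeneity (a machine absorbs roughly $b_i$ jobs while staying at the same level). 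Since each element of $M_j$ is chosen with probability proportional to its capacity, conditioned on $\Phi_{k-1}=\phi$ the probability that all $d$ choices of a fixed job fall among machines at level $\ge\mu+k-1$ is at most $(\phi/C)^d$, so summing the capacity-weighted contributions over the $m$ jobs gives $\expect[\Phi_k\mid\Phi_{k-1}=\phi]\lesssim m(\phi/C)^d$.

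Next I would turn this conditional-expectation bound into a high-probability one. Because placements depend on the allocation history, $\Phi_k$ is not a sum of independent indicators, so, as in \cite{BBFN14}, I would expose the job choices in order and stochastically dominate $\Phi_k$ by a sum of independent Bernoullis (equivalently, apply a bounded-differences inequality on the event $\{\Phi_{k-1}\le\phi\}$); the bounded differences are $O(b_{\max})=O(\log n)$ by the capacity assumption, which only costs lower-order terms, and a union bound chains the $O(\log\log n)$ conditionings together. Writing $\Phi_k=C\gamma_k$ and using that $m/C$ is bounded in the regime we consider, the recursion collapses to $\gamma_k\lesssim\gamma_{k-1}^d$ once $\gamma_{k-1}$ has dropped below a suitable constant (which happens after a constant number of levels above the average), so $\gamma_k$ decays doubly exponentially; since $C=O(n\log n)$ and hence $\log C=O(\log n)$, after $k^*=O(\log\log n)$ further levels we reach $\Phi_{k^*}<1$. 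A short tail argument for the last $O(1)$ levels --- a job reaching level $\mu+k^*+O(1)$ would need all $d$ of its choices among the machines of total capacity $O(\log n)$ left at the preceding level, an event of probability $O((\log n/C)^d)$ per job and hence $o(1/n)$ in total for $d\ge2$ --- then shows that with probability at least $1-1/n$ no machine finishes with load exceeding $\mu+O(\log\log n)=O(m/C+\log\log n)$, as required.

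The hard part will be making the passage from $\expect[\Phi_k\mid\Phi_{k-1}=\phi]\lesssim m(\phi/C)^d$ to a bound that holds with high probability simultaneously for all $O(\log\log n)$ levels: the $\Phi_k$ are correlated through the allocation history, so one must set up the stochastic domination by independent variables and the union bound over the chain of conditionings with care --- this is the standard delicate point in power-of-$d$-choices proofs, and I would follow \cite{BBFN14} closely. The only genuinely new ingredient, which I would verify separately, is that the floor in $lp^t_i$ does not break the recursion, i.e.\ that ``job $t$ pushes machine $i$ to level $\mu+k$'' still forces all $d$ machines of $M_j$ to sit at level $\ge\mu+k-1$ at time $t$; this is exactly the property that makes the capacity-weighted level counts $\Phi_k$ behave like the ball counts in the classical unit-capacity analysis.
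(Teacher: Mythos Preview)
Your proposal is correct in outline, but it takes a quite different route from the paper's own proof, and it is worth contrasting the two.

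You redo the layered-induction analysis of \cite{ABK+99,BBFN14} from scratch for the floored-load rule: define the capacity-weighted level sets $\Phi_k$, establish the key combinatorial fact that a job pushing a machine to level $\mu+k$ forces all $d$ of its choices to already be at level $\ge\mu+k-1$, derive the recursion $\gamma_k\lesssim\gamma_{k-1}^d$, and then chain concentration bounds across the $O(\log\log n)$ levels. This is a direct, self-contained argument, and the delicate part --- the stochastic-domination step needed to convert conditional expectations into high-probability bounds --- is exactly where the work lies, as you note.

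The paper instead avoids redoing any of this. Its proof is a short \emph{majorization reduction}: it introduces slot-load vectors (viewing a machine of capacity $c_i$ as $c_i$ unit slots), and shows that for any sequence of choices, the slot-load vector of the non-uniform system under $\mcA_{RLMS}$ is majorized by the load vector of a system of $C=\sum_i c_i$ unit-capacity machines under ordinary \textsc{Greedy}. The inductive step is just an application of the elementary fact that if $P\succeq Q$ and the new job lands in position $i$ in $P$ and position $j\ge i$ in $Q$, then $P+e_i\succeq Q+e_j$; the floor in $lp$ is what guarantees that in the non-uniform system the job always lands in a slot at least as far to the right (in the normalized order) as in the unit-capacity system. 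Once this coupling is in place, the $O(\log\log n)$ bound is imported wholesale from \cite{ABK+99}, with no need to revisit the layered recursion or any concentration argument.

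So: your approach reproves the heterogeneous-capacity bound essentially as \cite{BBFN14} does, verifying along the way that the floor does not disturb the recursion; the paper's approach is shorter and more structural, trading the analytic machinery for a one-page coupling argument that reduces to the classical unit-capacity case. Both are valid; the paper's buys brevity and reuses the known result as a black box, while yours is more self-contained and makes the role of the floor explicit at the level of the recursion rather than the coupling.
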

\addtocounter{theorem}{-1}
}

We prove the theorem for the case $d=2$ (each job can be assigned to one of $2$ machines). The proof is easily expandable for the case of $d>2$. For the proof (not the algorithm), we regard each machine $i$ of capacity $c_i$ as having $c_i$ slots of capacity $1$. Before presenting the proof we need several definitions:

The \emph{load vector} of an allocation of jobs into $n$ machines is $L= (\ell_1, \ldots , \ell_n)$, where $\ell_i = h_i$ is the load of machine $i$. The \emph{normalized load vector} $\bar{L}$ consists of the members of $L$ in non-increasing order (where the order among machines with the same load is arbitrary). For the case of non-uniform machines of capacities $c_1, \ldots c_n$, and total capacity $C = \sum_{i=1}^{n} c_i$, we define the \emph{slot-load vector} $S=(h_{1,1}, \ldots h_{1,c_1}, h_{2,1}, \ldots h_{2,c_2}, \ldots h_{n,1}, \ldots h_{n,c_n})$, where if machine $i$ contains $r$ jobs, the first $r$ mod $c$ slots will have $\lceil r/c \rceil$ jobs, and the remaining slots will have $\lfloor r/c \rfloor$ jobs. If a machine has an uneven allocation of jobs, we call the slots with more jobs \emph{heavy}, and the slots with less jobs \emph{light}. If the load on a machine is an integer (all of the slots of the machine have an identical number of jobs assigned to them), we call all the slots \emph{light}. When we add a job to a machine, we add the job to one of the light slots, arbitrarily. The \emph{normalized slot load vector} $\bar{S}$ is $S$ sorted in non-increasing order (slots of the same machine may be separated in $\bar{S}$). We add a subscript $t$ to these vectors, i.e., $L_t$, $\bar{L}_t$, $S_t$ and $\bar{S}_t$ to indicate the vector after the allocation of the $t$-th job.

\begin{definition}
[Majorization, $\maj$] We say that a vector $P=(p_1, \ldots, p_n)$ \emph{majorizes} vector $Q=(q_1, \ldots, q_m)$ (denoted $P \maj Q$) if and only if for all $1 \leq k \leq min(m,n)$,
\begin{equation*}
\dsum_{i=1}^k \bar{p_i} \geq \dsum_{i=1}^k \bar{q_i},
\end{equation*}
where $\bar{p}_i$ and $\bar{q}_i$ are the $i$-th entries of the normalized vectors $\bar{P}$ and $\bar{Q}$.
\end{definition}
For $n \in \N$, let $[n]$ denote $\{1,\ldots,n\}$.
\begin{definition} [System Majorization]
Let $A$ and $B$ be two processes allocating $m$ jobs into machines with the same total capacity $n$. Let $\tau=(\tau_1 \ldots \tau_{2m})$, $\tau_i \in [n]$ be a vector representing the (slot) choices of the $m$ jobs ($\tau_{2i-1}$ and $\tau_{2i}$ are the choices of the $i$-th job). Let $S^A(\tau)$ and $S^B(\tau)$ be the slot load vectors using $A$ and $B$ respectively with the random choices specified by $\tau$. Then we say
\begin{enumerate}
\item  $A$ \emph{majorizes} $B$ (denoted by the overloaded notation $A \maj B$) if there is a bijection $f:[n]^{2m}\rightarrow [n]^{2m}$ such that for all possible random choices $\tau \in [n]^{2m}$,we have
\begin{equation*}
L^A(\tau) \maj L^B(f(\tau))
\end{equation*}
\item The maximum load of $A$ \emph{majorizes} the maximum load of $B$ (denoted by $A \maj_m B$) if there is a bijection $f:[n]^{2m}\rightarrow [n]^{2m}$ such that for all possible random choices $\tau \in [n]^{2m}$,we have
\begin{equation*}
\ell_1^A(\tau) \geq \ell_1^B(f(\tau)),
\end{equation*}
where $\ell_1^A(\tau)$ and $\ell_1^B(f(\tau))$ are the loads of the most loaded bins in $A$ and $B$ respectively with the random choices specified by $\tau$ and $f(\tau)$ respectively. 
\end{enumerate}
\end{definition}

\begin{obs}
\label{obs:rightarrow}
 $A \maj B \Rightarrow A \maj_m B$. 
\end{obs}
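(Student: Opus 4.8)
The plan is to use the \emph{same} bijection $f$ that witnesses $A \maj B$, and observe that the conclusion is precisely the $k=1$ instance of the vector‑majorization condition. Concretely, let $f:[n]^{2m}\to[n]^{2m}$ be the bijection guaranteed by the hypothesis $A\maj B$, so that $L^A(\tau)\maj L^B(f(\tau))$ for every choice vector $\tau\in[n]^{2m}$. Fix such a $\tau$. Unwinding the definition of vector majorization with $k=1$ gives
\[
\overline{\ell}^{\,A}_1(\tau)\;\geq\;\overline{\ell}^{\,B}_1(f(\tau)),
\]
where $\overline{\ell}_1$ denotes the first entry of the corresponding normalized load vector.

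It then remains only to note that, by definition of the normalized load vector $\bar L$ (the entries of $L$ sorted in non-increasing order), its first entry is exactly the load of the most loaded bin; that is, $\overline{\ell}^{\,A}_1(\tau)=\ell^A_1(\tau)$ and $\overline{\ell}^{\,B}_1(f(\tau))=\ell^B_1(f(\tau))$. Substituting, we obtain $\ell^A_1(\tau)\geq \ell^B_1(f(\tau))$ for every $\tau$, with the single fixed bijection $f$, which is exactly the statement $A\maj_m B$.

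There is no real obstacle here: the observation is an immediate consequence of the fact that the top entry of a sorted vector is its maximum, so the $k=1$ case of majorization already bounds maximum loads. The only thing worth stating carefully is that the same $f$ works for both conditions, since the definitions of $\maj$ and $\maj_m$ each quantify over a bijection, and we are simply reusing the one already in hand.
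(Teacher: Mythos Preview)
Your proof is correct and is exactly the intended argument: the paper records this as an observation with no proof, since it is immediate from taking $k=1$ in the definition of vector majorization and reusing the same bijection $f$.
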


 We now turn to the proof of Lemma \ref{lemma:loglog}.

First, notice that if we have an system of $m$ identical machines, each of capacity $1$, both the unmodified Greedy algorithm and the allocation algorithm $\mcA_{RLMS}$ will behave in exactly the same way - the load and the $\lfloor$load$\rfloor$  are the same if the capacity is $1$. From \cite{ABK+99}, we know that the maximal load on any machine when allocating $m=n$ jobs (to $n$ machines with capacity $1$) with the Greedy algorithm, is $\Theta(\log{\log{n}})$. Therefore, the maximal load when allocating $m=m$ jobs with  $\mcA_{RLMS}$ is also $\Theta(\log{\log{n}})$ in this setting. We would like to show that the maximal load of a system with non-uniform machines of total capacity $C$ is majorized by the maximal load of a system with $C$ machines of capacity $1$, when the allocating algorithm is $\mcA_{RLMS}$. We will show that the first system majorizes the second, and deduce the required result from Observation \ref{obs:rightarrow}.

We restate Claim $2.4$ of \cite{Wie07}:
\begin{claim}[\cite{Wie07}]
\label{claim:wie}
Let $P$ and $Q$ be two normalized integer vectors such that $P \maj Q$. If $i \leq j$ then $P + e_i \maj Q+e_j$ where $e_i$ is the $i$-th unit vector and $P + e_i$ and $Q+e_j$ are normalized.
\end{claim}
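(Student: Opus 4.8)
The plan is to reduce the claim directly to the definition of majorization and then verify one scalar inequality for each prefix length. Write $\Sigma_k(V)$ for the sum of the $k$ largest entries of a vector $V$; equivalently $\Sigma_k(V)=\sum_{t=1}^{k}\bar V_t$, and also $\Sigma_k(V)=\max\{\sum_{t\in T}V_t:|T|=k\}$, so $\Sigma_k$ does not depend on the order of the entries. By definition $P\maj Q$ is exactly the statement ``$\Sigma_k(P)\ge\Sigma_k(Q)$ for all $k$'', and the desired conclusion $P+e_i\maj Q+e_j$ is ``$\Sigma_k(P+e_i)\ge\Sigma_k(Q+e_j)$ for all $k$''. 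Hence it suffices to prove the latter for every $k$. (If $P$ and $Q$ have unequal lengths one first pads the shorter vector with zeros; since the entries are non-negative integers this changes nothing.)

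The first ingredient I would establish is an elementary lemma on how $\Sigma_k$ reacts to adding one unit. For a normalized vector $V$ and index $m$: $\Sigma_k(V+e_m)-\Sigma_k(V)\in\{0,1\}$; it equals $1$ whenever $m\le k$ (the incremented coordinate is already among the top $k$); and for $m>k$ it equals $1$ exactly when $\bar V_m=\bar V_k$, and $0$ otherwise. This is a two-line argument using the ``$\max$ over $k$-subsets'' description of $\Sigma_k$.

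With this in hand I would split on $k$ versus $j$. If $k\ge j$ then also $k\ge i$ since $i\le j$, so both sides increase by exactly $1$ and the inequality reduces to $\Sigma_k(P)\ge\Sigma_k(Q)$. If $k<j$ then $\Sigma_k(Q+e_j)\le\Sigma_k(Q)+1$, so if $\Sigma_k(P)\ge\Sigma_k(Q)+1$ we are done because $\Sigma_k(P+e_i)\ge\Sigma_k(P)$. This isolates the only hard case: $k<j$, $\Sigma_k(P)=\Sigma_k(Q)$, and $\Sigma_k(Q+e_j)=\Sigma_k(Q)+1$. By the lemma the last equality forces $\bar Q_m=\bar Q_k$ for all $k\le m\le j$. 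In this case I must show $\Sigma_k(P+e_i)=\Sigma_k(P)+1$; if $i\le k$ this is automatic, so assume $k<i\le j$, where by the lemma it reduces to showing $\bar P_i=\bar P_k$.

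The crux is exactly this last equality, and I expect it to be the main obstacle since in this regime all the majorization inequalities are tight. From $\Sigma_{k-1}(P)\ge\Sigma_{k-1}(Q)$ together with $\Sigma_k(P)=\Sigma_k(Q)$ I get $\bar P_k=\Sigma_k(P)-\Sigma_{k-1}(P)\le\bar Q_k$. Then comparing $\Sigma_j(P)\ge\Sigma_j(Q)$ and using $\bar Q_{k+1}=\cdots=\bar Q_j=\bar Q_k$ yields $\sum_{t=k+1}^{j}\bar P_t\ge(j-k)\bar Q_k\ge(j-k)\bar P_k$; but each of these $j-k$ summands is at most $\bar P_k$ because $P$ is normalized, so all of them equal $\bar P_k$, and in particular $\bar P_i=\bar P_k$. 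This ``squeezing'' step, extracting from $P\maj Q$ the structural fact that $P$ is flat on coordinates $k,\dots,j$ precisely when $Q$ is, is the only delicate point; everything else is immediate from the definition of $\maj$ and the elementary lemma.
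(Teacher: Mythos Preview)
Your proof is correct. Note, however, that the paper does not supply its own proof of this claim: it is restated verbatim from \cite{Wie07} (as ``Claim 2.4'') and used as a black box in the proof of Lemma~\ref{lemma:loglog}. So there is no in-paper argument to compare against.

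A couple of minor remarks. First, your argument never actually uses the hypothesis that $P+e_i$ and $Q+e_j$ are themselves normalized; since $\Sigma_k$ depends only on the multiset of entries, your proof in fact establishes the slightly stronger statement that the \emph{sorted} versions of $P+e_i$ and $Q+e_j$ satisfy the majorization, regardless of whether the unsorted sums happen to be non-increasing. Second, the edge case $k=1$ in your ``crux'' step is handled by taking $\Sigma_0=0$, so that $\bar P_1=\Sigma_1(P)=\Sigma_1(Q)=\bar Q_1$ directly; you may want to say this explicitly. Otherwise the squeezing argument---deriving $\bar P_k\le\bar Q_k$ from tightness at level $k$, then forcing $\bar P_{k+1}=\cdots=\bar P_j=\bar P_k$ from $\Sigma_j(P)\ge\Sigma_j(Q)$---is clean and complete.
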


\begin{lemma}
For allocation algorithm $\mcA_{RLMS}$, let $A$ be a system with non-uniform machines of total capacity $C$, and $B$ be a system with $C$ uniform machines of capacity $1$ each.
Then $B \maj A$.
\end{lemma}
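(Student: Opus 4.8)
The plan is to prove the lemma by the coupling argument sketched just before it: construct a bijection $f$ on the space of random choices, one job at a time, and show by induction on the number $t$ of jobs placed that the normalized slot-load vectors satisfy $\bar{S}^B_t(\tau)\maj\bar{S}^A_t(f(\tau))$ for every $\tau$; taking $t=m$ gives $B\maj A$, and Observation \ref{obs:rightarrow} then yields $B\maj_m A$. I will treat $d=2$ only, the general case being identical. View machine $i$ of $A$ as a block of $c_i$ unit slots, so both systems have exactly $C$ slots, and observe that drawing a slot uniformly among the $C$ slots of $A$ is the same as drawing a machine of $A$ with probability proportional to its capacity; hence a job's randomness in either system is a pair of indices in $[C]$, read off the current normalized slot-load order of that system. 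Both samplings are then faithful to the respective processes, and, for each fixed history, the map sending job $t+1$'s index pair in $B$ to its index pair in $A$ (initially the identity) is a bijection of $[C]^2$, so the concatenation over the $m$ jobs is a bijection of $[C]^{2m}$.

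The engine of the induction is Claim \ref{claim:wie}: if $P\maj Q$ and $i\le j$, then $P+e_i\maj Q+e_j$. Two elementary facts about $\mcA_{RLMS}$ feed it. First, within a machine the slot loads are either $\lfloor h_i/c_i\rfloor$ or $\lceil h_i/c_i\rceil$, so every slot of machine $i$ carries load at least $q_i:=\lfloor h_i/c_i\rfloor$. Second, placing a job on machine $i$ changes the slot-load multiset of $A$ exactly by deleting one copy of $q_i$ and inserting one copy of $q_i+1$ (the job enters a light slot); equivalently, $\bar{S}^A_{t+1}=\bar{S}^A_t+e_j$ (normalized), where $j$ may be taken to be any rank carrying load $q_i$, and in particular the largest such rank.

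For the inductive step, assume $\bar{S}^B_t\maj\bar{S}^A_t$ and consider job $t+1$ with index pair $p_1\le p_2$. In $B$ the job lands on the less loaded of its two chosen slots, i.e., the one at rank $p_2$, so $\bar{S}^B_{t+1}=\bar{S}^B_t+e_{p_2}$ (normalized). We must arrange the coupling so that the job's effect in $A$ is $\bar{S}^A_{t+1}=\bar{S}^A_t+e_j$ with $j\ge p_2$; Claim \ref{claim:wie} then closes the step. In the typical case this holds already under the direct matching: the machine selected by $\mcA_{RLMS}$ has floor-load at most the smaller of the two chosen slots' loads, so by the second fact above the largest rank carrying load $q_i$ is $\ge p_2$. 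The delicate case is when the tie-breaking permutation $\pi$ routes the job to the machine of \emph{larger} floor-load among two machines whose post-placement floor-loads happen to coincide; there the direct matching may fail, and the coupled choice in $A$ must instead be perturbed --- for instance, routing $A$'s job to two slots lying inside the lower-floor machine, or to another slot of the same load --- which can always be arranged by a counting (Hall-type) argument on the current slot-load profile without destroying the global bijectivity of $f$.

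The main obstacle is precisely this coupling design: proving that at every step there is a bijection of $[C]^2$ keeping the increment rank in $B$ no larger than the increment rank in $A$, in the presence of the within-machine imbalance and the arbitrary tie-break of $\mcA_{RLMS}$. The rest is routine: the base case $t=0$ is the pair of all-zero vectors, the two structural facts are immediate from the definition of $\mcA_{RLMS}$, the step is a single application of Claim \ref{claim:wie}, and once $B\maj A$ is established, Observation \ref{obs:rightarrow} together with the classical $\Theta(\log\log C)$ maximum-load bound for \textsc{Greedy}$[2]$ on $C$ unit bins delivers the $O(\log\log n)$ bound asserted in Lemma \ref{lemma:loglog}.
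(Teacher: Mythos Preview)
Your approach is essentially the paper's: couple the two processes by having each job pick the same ranks $k_1<k_2$ in the current normalized slot-load vectors of $B$ and $A$, then propagate the majorization by induction via Claim~\ref{claim:wie}. Where you diverge is in the handling of your ``delicate case''. The paper never perturbs the coupling and never invokes a Hall-type argument: it keeps the identity bijection throughout and simply argues, in two cases (the $\lfloor\text{load}\rfloor$ of the machine of $k_1$ strictly larger than that of $k_2$, and the two $\lfloor\text{load}\rfloor$'s equal), that in system $A$ the job lands at slot $k_2$ when $k_2$ is light, or at a lighter slot of the same machine strictly to the right of $k_2$ when $k_2$ is heavy. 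The paper does not separately treat the sub-case where $\pi$ routes the job to the machine of larger floor-load; it handles both floor-load cases with the same one-line observation and leaves it at that.

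So your outline is correct and matches the paper, but the extra machinery you sketch (perturbing the coupled choice in $A$, a counting/Hall argument to preserve bijectivity) is not part of the paper's proof, and your own write-up leaves that machinery as an acknowledged obstacle rather than a completed step. If you want to align with the paper, drop the perturbation: keep the identity rank-coupling and argue directly that the incremented slot in $A$ is at rank $\ge k_2$.
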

\begin{proof}
We use the slot load vectors of systems $A$ and $B$ (in $B$ the load vector and slot load vector are identical), and show that $S^B(f(\tau)) \maj S^A(\tau)$. The bijection is such that the jobs in both processes choose the same $k_1< k_2 \in \{1,\ldots, C\}$ in the normalized slot load vectors, and the choice corresponds to machines $k_1, k_2$ in $B$ and the machines associated with those specific slots in system $A$.
We use induction: for $t=0$, the claim is trivially true.

From the inductive hypothesis, before the allocation of the  $t$-th job, $S^B_{t-1}(f(\tau)) \maj S^A_{t-1}(\tau)$. In system $B$, the $t$-th job goes to machine $k_2$.  In system $A$, if the $\lfloor$load$\rfloor$ of the machine of $k_1$ is greater than that of the machine of $k_2$, the job goes to $k_2$ if $k_2$ is a light slot, or to a slot to the right of $k_2$ (a lighter slot of the same machine), if $k_2$ is a heavy slot. If the $\lfloor$loads$\rfloor$ of the machines of $k_1$ and $k_2$ are the same, again, the job goes to $k_2$ if $k_2$ is a light slot, or to a slot to the right of $k_2$ (again, a lighter slot of the same machine), if $k_2$ is a heavy slot.
In all cases, by Claim \ref{claim:wie}, it follows that $S^B(f(\tau)) \maj S^A(\tau)$.
\end{proof}

\end{document}